\def\UrlSpecials{\do\~{\kern -.15em\lower .7ex\hbox{~}\kern .04em}} \catcode`~=13 
\newcommand{\nn}{\nonumber}
\newcommand{\calA}{\mathcal{A}}
\newcommand{\calB}{\mathcal{B}}
\newcommand{\calE}{\mathcal{E}}
\newcommand{\calM}{\mathcal{M}}
\newcommand{\calN}{\mathcal{N}}
\newcommand{\calP}{\mathcal{P}}
\newcommand{\calT}{\mathcal{T}}
\newcommand{\calX}{\mathcal{X}}
\newcommand{\calZ}{\mathcal{Z}}
\newcommand{\ba}{\mathbf{a}}
\newcommand{\be}{\mathbf{e}}
\newcommand{\rmd}{\mathrm{d}}
\newcommand{\rme}{\mathrm{e}}
\newcommand{\bbE}{\mathbb{E}}
\newcommand{\bbN}{\mathbb{N}}
\newcommand{\bbR}{\mathbb{R}}
\DeclareMathAlphabet{\mathbsf}{OT1}{cmss}{bx}{n}
\DeclareMathAlphabet{\mathssf}{OT1}{cmss}{m}{sl}% slanted sans serif
\DeclareSymbolFont{bsfletters}{OT1}{cmss}{bx}{n}  
\DeclareSymbolFont{ssfletters}{OT1}{cmss}{m}{n}
\DeclareMathSymbol{\bsfGamma}{0}{bsfletters}{'000}
\DeclareMathSymbol{\ssfGamma}{0}{ssfletters}{'000}
\DeclareMathSymbol{\bsfDelta}{0}{bsfletters}{'001}
\DeclareMathSymbol{\ssfDelta}{0}{ssfletters}{'001}
\DeclareMathSymbol{\bsfTheta}{0}{bsfletters}{'002}
\DeclareMathSymbol{\ssfTheta}{0}{ssfletters}{'002}
\DeclareMathSymbol{\bsfLambda}{0}{bsfletters}{'003}
\DeclareMathSymbol{\ssfLambda}{0}{ssfletters}{'003}
\DeclareMathSymbol{\bsfXi}{0}{bsfletters}{'004}
\DeclareMathSymbol{\ssfXi}{0}{ssfletters}{'004}
\DeclareMathSymbol{\bsfPi}{0}{bsfletters}{'005}
\DeclareMathSymbol{\ssfPi}{0}{ssfletters}{'005}
\DeclareMathSymbol{\bsfSigma}{0}{bsfletters}{'006}
\DeclareMathSymbol{\ssfSigma}{0}{ssfletters}{'006}
\DeclareMathSymbol{\bsfUpsilon}{0}{bsfletters}{'007}
\DeclareMathSymbol{\ssfUpsilon}{0}{ssfletters}{'007}
\DeclareMathSymbol{\bsfPhi}{0}{bsfletters}{'010}
\DeclareMathSymbol{\ssfPhi}{0}{ssfletters}{'010}
\DeclareMathSymbol{\bsfPsi}{0}{bsfletters}{'011}
\DeclareMathSymbol{\ssfPsi}{0}{ssfletters}{'011}
\DeclareMathSymbol{\bsfOmega}{0}{bsfletters}{'012}
\DeclareMathSymbol{\ssfOmega}{0}{ssfletters}{'012}
\newcommand{\tilD}{\tilde{D}}
\newcommand{\tilf}{\tilde{f}}
\newcommand{\hatR}{\hat{R}}
\newcommand{\eps}{\varepsilon}
\def\fndot{\, \cdot \,}
\newcommand{\dotgeq}{\stackrel{.}{\geq}}
\DeclareMathOperator*{\argmin}{arg\,min}
\DeclareMathOperator{\var}{\mathsf{Var}}
\newtheorem{theorem}{Theorem} 
\newtheorem{lemma}{Lemma}
\newtheorem{corollary}{Corollary}
\newtheorem{definition}{Definition}
\newcommand{\qednew}{\nobreak \ifvmode \relax \else
      \ifdim\lastskip<1.5em \hskip-\lastskip
      \hskip1.5em plus0em minus0.5em \fi \nobreak
      \vrule height0.75em width0.5em depth0.25em\fi}
\newcommand{\mix}{\mathrm{mix}}
\begin{document}
\flushbottom
\allowdisplaybreaks[1]

\title{Equivocations, Exponents and Second-Order Coding Rates under Various  R\'enyi Information Measures} 
\author{Masahito Hayashi$^\dagger$,~\IEEEmembership{Senior Member,~IEEE}  $\,$ and  $\,$  Vincent Y.~F.\ Tan$^\ddagger$,~\IEEEmembership{Senior Member,~IEEE} 
\thanks{$^\dagger$ M.~Hayashi is with the  Graduate School of Mathematics, Nagoya University, and the Centre for Quantum Technologies (CQT),  National University of Singapore   (Email: masahito@math.nagoya-u.ac.jp).   }  \thanks{$\ddagger$ V.~.Y.~F. Tan is with the Department of Electrical and Computer Engineering and the Department of Mathematics, National University of Singapore (Email:  vtan@nus.edu.sg).   } \thanks{This paper was presented in part at the 2015 International Symposium on Information Theory in Hong Kong. }}

%%%%%%%%%%%%%%%%%%%%%%%%%%%%%%%%%%%%%%%%%%%%%%%%%%%%%%%%%%%%%%
%%%%%%%%%%%%%%%%%%%%%%%%%%%%%%%%%%%%%%%%%%%%%%%%%%%%%%%%%%%%%%

\maketitle

\begin{abstract}
We evaluate the asymptotics of equivocations,  their exponents as well as their second-order coding rates under various R\'enyi information measures. Specifically, we consider the effect of applying a hash function on a source and we quantify the level of non-uniformity and dependence of the compressed source from another correlated source when the number of copies of the sources is large.   Unlike previous works that use   Shannon information measures   to quantify  randomness, information or uniformity, we define our security measures in terms of a more general class of information measures---the R\'enyi  information measures and their Gallager-type counterparts. A special case of these R\'enyi information measure is  the class of  Shannon information measures. We prove tight asymptotic results for the security measures and their exponential rates of decay. We also prove bounds on the second-order asymptotics and show that these bounds match when the magnitudes of the second-order coding rates are large. We do so by    establishing  new classes non-asymptotic bounds on the equivocation and evaluating these bounds using various probabilistic limit theorems asymptotically. 
\end{abstract}

\begin{IEEEkeywords}
 Information-theoretic security,  Equivocation,  Conditional R\'enyi entropies, R\'enyi divergence, Sibson's mutual information, Arimoto's mutual information, Error exponents, Secrecy Exponents, Second-order coding rates
\end{IEEEkeywords}

\section{Introduction}  \label{sec:intro}

Consider the situation where we are given $n$ independent and identically distributed (i.i.d.) copies of a joint source $(A^n,E^n)$. One of the central tasks in information-theoretic security is to understand the effect of applying a   hash function~\cite{carter79} (binning operator) $f$ on $A^n$. This hash function is used to ensure that the compressed source $f(A^n)$ is almost uniform on its  alphabet and also almost independent of another discrete memoryless source $E^n$. Mathematically, we want to understand the deviation of $f(A^n)\in\{1,\ldots, \lceil \rme^{nR}\rceil\}$ from the uniform distribution on the same support $P_{\mix, f(\calA^n)}$ and the level of remaining dependence between $f(A^n)$ and a correlated source $E^n$. These two criteria can be described by equivocation measures.   Traditionally in information-theoretic security~\cite{liang_book,  Bloch_book}, equivocation is measured in terms of the Shannon-type quantities such as the Shannon entropy, relative entropy (Kullback-Leibler divergence), and mutual information. In particular, it is common to design $f$ such that  the following is small for any rate $R$:
\begin{equation}
 D( P_{f(A^n) , E^n  }\| P_{\mix, f(\calA^n)}  \times   P_{E^n})= nR -   H(f(A^n) | E^n  ). \label{eqn:equiv_1}
\end{equation}
Clearly if the above quantity is small in some   sense, the message $f(A^n)$ is close to uniform and almost independent of $E^n$, two desirable traits of a hash function for  security applications.

\subsection{Motivations}\label{sec:motivation}
A novel feature of this paper is that we depart from using Shannon information measures to quantify randomness and independence. It is known that the Shannon entropy $H$ or the relative entropy $D$ are special cases of a   larger family of information measures known as R\'enyi information measures, denoted as  $H_{1+s}$ and $D_{1+s}$ for $s\in \bbR$. Thus, as expounded by Iwamoto and Shikata~\cite{iwamoto}, we can quantify equivocation using these measures, gaining deeper insights into the fundamental limits of information leakage under the effect of hash functions. There may also be a possibility of  the optimal key generation rate changing when we use alternative information measures. In addition, in the study of cryptography and quantum key distribution (QKD), the R\'enyi entropy of order $2$~\cite{Beigi14} (or collision entropy) $H_2(A|P_A):=-\log\sum_{a\in\calA}P_A(a)^2$ and the min-entropy $H_{\min}(A|P_A):=-\log \max_{a\in\calA}P_A(a)$ play  important roles in quantifying randomness. A case in point is the {\em leftover hash lemma}~\cite{Impagliazzo, BBCM, hastad}. Another motivation stems from the recent study of {\em overcoming weak expectations} by Dodis and Yu~\cite{dodis13} where   cryptographic primitives are based on weak secrets, in which the only information about the secret is some fraction of min-entropy. The authors in~\cite{dodis13} provided bounds on the weak expectation $\bbE f(Y)$ of some function $f$ of a random variable $Y$ in terms of the min-entropy and the R\'enyi entropy of order $2$. In a follow-on paper by Yao and Li~\cite{yao14}, this study was generalized to   R\'enyi entropies of   general orders. Finally, in the study of {\em secure authentication codes} (or {\em A-codes} in short), which is one of the most fundamental cryptographic protocols in information-theoretic cryptography, Shikata~\cite{shikata15} quantified lengths of secret keys in terms of R\'enyi entropies of   general orders.  Motivated by these studies, the authors opine that it is of interest to study the performance of hashing under these generalized families of entropies (generalized uncertainty measures) and divergences (generalized distance measures).
%This  \textcolor{red}{mention more motivation with regard to the use of $H_2$ in cryptography such as the leftover hash lemma} 

%In this paper,  we are interested in fundamental limits concerning the (first-order) rates $R$ and also the exponential decay of the equivocation, i.e., the  limit  of 
%\begin{equation}
%\frac{1}{n}\log D_{1+s}( P_{f(A^n), E^n} \| P_{\mix, f(\calA^n)}\times P_{E^n})
%\end{equation}
% and its variants for various rates $R$. 
\subsection{Main Contributions}
We consider three asymptotic settings---the asymptotics of  R\'enyi-type security measures, its exponential decay and a certain second-order behavior. 

\begin{enumerate}

\item  First, we characterize the asymptotic behavior of  the security measure
\begin{equation}
D_{1+s}( P_{f (A^n) , E^n   } \| P_{\mix, f (\calA^n)}\times P_{E^n} )\label{eqn:equiv_2} 
\end{equation}
for a fixed rate $R=\frac{1}{n}\log \|f\|$ where $\|f \|:=|f (\calA^n)|$ is the cardinality of the range of  a hash function $f$. The function $f$ will be taken to be a {\em random} hash function as we will explain and motivate later. Further, as we shall see in Section~\ref{sec:security_meas}, the quantity in \eqref{eqn:equiv_2} is closely related to the equivocation~\cite{Wyn75}. In Section~\ref{sec:equiv} (particularly in Corollary~\ref{cor:key} therein), we show  that if we measure security using $D_{1+s}$ with $s>0$, the fundamental limits of key generation rates change relative to  those for  traditional  Shannon-type measures   $D_1$.  The security measure in \eqref{eqn:equiv_2} quantifies the deviation of the hashed or compressed random variable $f (A^n)$ from the uniform distribution and also its remaining dependence from a correlated random variable $E^n$.

%where a common source of randomness $\{X_n\in\calX_n\}_{n\in\bbN}$ is available to all parties as shown in the setup in  Fig.~\ref{fig:AE}. For the system at blocklength $n$, a random hash function $f_{X_n}$ (several classes of which will be defined precisely in Section~\ref{sec:equiv}) is then decided according to the random variable $X_n$.  We aim to characterize  the asymptotic behavior of  
%\begin{equation}
%D_{1+s}( P_{f_{X_n} (A^n) , E^n   } \| P_{\mix, f_{X_n}(\calA^n)}\times P_{E^n} )\label{eqn:equiv_2} 
%\end{equation}
%for a fixed rate $R=\frac{1}{n}\log \|f_{X_n}\|$ where $\|f_{X_n}\|:=|f_{X_n}(\calA^n)|$ is the cardinality of the range of $f_{X_n}$.  As we shall see in Section~\ref{sec:security_meas}, the quantity above is closely related to the equivocation~\cite{Wyn75}. In Section~\ref{sec:equiv} (particularly in Corollary~\ref{cor:key} therein), we show  that if we measure security using $D_{1+s}$ with $s>0$, the fundamental limits of key generation rates change relative to  those for  traditional  Shannon-type measures   $D_1$.  The security measure in \eqref{eqn:equiv_2} quantifies the deviation of the hashed random variable $f_{X_n}(A^n)$ from the uniform distribution and also its remaining dependence from the random variable $E^n$.

\item We are also interested in the speed of the exponential decay of \eqref{eqn:equiv_2} given a fixed rate $R$. That is, we are interested in   the asymptotic behavior of 
\begin{equation}
\frac{1}{n}\log D_{1+s}( P_{f (A^n) , E^n   } \| P_{\mix, f (\calA^n)}\times P_{E^n}  )\label{eqn:equiv_3} .
\end{equation}
This is likened to {\em error exponent} or {\em reliability function} analysis in classical information theory~\cite{gallagerIT,Csi97}.    We study this in Section~\ref{sec:exponent}.

\item Finally,  in Section~\ref{sec:2nd}, we also study  the second-order asymptotics~\cite{Strassen, Hayashi08} of the decay of $D_{1+s}$ with the blocklength, i.e., the   asymptotic behavior of 
\begin{align}
\frac{1}{\sqrt{n}} &D_{1+s}( P_{f (A^n) , E^n   } \| P_{\mix, f (\calA^n)}\times P_{E^n} ), \qquad\mbox{and} \label{eqn:sec_order1} \\
\frac{1}{\sqrt{n}} \log &D_{1+s}( P_{f (A^n) , E^n   } \| P_{\mix, f (\calA^n)}\times P_{E^n} )\label{eqn:sec_order2} .
\end{align}
where the number of compressed symbols (size of the hash function) $ \|f \|$ equals $\rme^{nR+\sqrt{n}L}$ for some first-order rate $R$ (usually the conditional R\'enyi entropy) and second-order rate $L\in\bbR$.  For some cases (R\'enyi parameter less than one) where we cannot exactly determine the  tight  second-order asymptotics (i.e., the upper and lower bounds do not match), we study the asymptotic behavior of~\eqref{eqn:equiv_2} when the second-order rate  $L$ tends to $+\infty$ or $-\infty$. In this case, the upper and lower bounds match up to and including a term quadratic in $L$.  % and we can determine the  moderate deviations constant exactly.
\end{enumerate}

As we mentioned  earlier, we will regard  $f$ as a  random hash function in the sequel. That is, it is randomly selected depending  on a random variable $X_n \in\calX_n$  that is available to all parties and is also independent of all other random variables. This random variable has distribution $P_{X_n}$. See Fig.~\ref{fig:AE}. To further elaborate, instead of the the R\'enyi divergences in \eqref{eqn:equiv_2}--\eqref{eqn:sec_order2},  for the purposes of asserting the existence of a particular function $f$ with some desired properties (cf.\ the random selection argument), we consider the quantity 
\begin{equation}
D_{1+s}^{(n)}:=D_{1+s}( P_{f_{X_n} (A^n) , E^n ,X_n  } \| P_{\mix, f_{X_n}(\calA^n)}\times P_{E^n}\times P_{X_n} ).\label{eqn:equivo_rand}
\end{equation}
Here, we note that $f_{X_n}$ is a  random hash function (to be defined precisely in Definition~\ref{def:has}) and $\| f_{X_n}(\calA^n)\|$ is a constant random variable, i.e., it does not depend on the realization of $X_n$. Even though $D_{1+s}^{(n)}$ in \eqref{eqn:equivo_rand} is not an expectation of any quantity of interest, $\exp\big( (1+s) D_{1+s}^{(n)} \big)$ is the expectation of 
\begin{equation}
\tilD_{1+s}^{(n)}(x_n):=\exp\big((1+s)D_{1+s}( P_{f_{x_n} (A^n) , E^n   } \| P_{\mix, f_{x_n}(\calA^n)}\times P_{E^n} ) \big),\label{eqn:equivo_rand2}
\end{equation}
 where the probability of observing $x_n$ is  $P_{X_n}(x_n)$. Thus by a random selection argument, if the former is less than $\eps>0$, there exist an $x_n^* \in\calX_n$, indexing a  deterministic protocol  $f_{ x_n^*}$,  such that $\tilD_{1+s}^{(n)}(x_n^*)$ is also less than $\eps$.  %Thus by a random selection argument if $D_{1+s}^{(n)}$ is bounded above by $\eps>0$,  there exists an $x_n^* \in\calX_n$ indexing a  deterministic protocol  $f_{ x_n^*}$ that   ensures that $D_{1+s}( P_{f_{x_n^*} (A^n) , E^n   } \| P_{\mix, f_{x_n^*}(\calA^n)}\times P_{E^n} )$ is also no larger than $\eps$.  
When $s=0$,   the expectation of quantities in~\eqref{eqn:equiv_2}--\eqref{eqn:sec_order2}  under the common randomness $X_n$ generating a universal$_2$ hash function $f_{X_n}(\cdot)$ is equivalent to the quantity in~\eqref{eqn:equivo_rand} but for $s\ne 0$, they are, in general, different. In the sequel, we adopt the latter criterion in \eqref{eqn:equivo_rand} to simplify the presentation of the results.  

%There are some subtle differences between the criteria in without the common randomness $X_n$ in \eqref{eqn:equiv_1} and with $X_n$ in \eqref{eqn:equiv_2}. Both are employed in information theory but the former is used more in random coding arguments (such as the achievability proof for the wiretap channel) while the   latter is  used more for proofs the performance of constructive codes. Of course, if we can prove that $D_{1+s}^{(n)}:=D_{1+s}( P_{f_{X_n} (A^n) , E^n   } \| P_{\mix, f_{X_n}(\calA^n)}\times P_{E^n} )$   is no larger than $\eps>0$, then by the usual random selection argument, since $D_{1+s}^{(n)}$ is an average (over $X_n$), there exists an $x_n^* \in\calX_n$ indexing a  deterministic protocol  $f_{ x_n^*}$ that   ensures that $D_{1+s}( P_{f_{x_n^*} (A^n) , E^n   } \| P_{\mix, f_{x_n^*}(\calA^n)}\times P_{E^n} )$ is also no larger than $\eps$. When $s=0$,   the expectation of the former quantity in~\eqref{eqn:equiv_1}  under the common randomness $X_n$ generating a universal$_2$ hash function $f_{X_n}(\cdot)$ is equivalent to the latter quantity in~\eqref{eqn:equiv_2} but for $s\ne 0$, they are, in general, different. In the sequel, we adopt the latter criterion in \eqref{eqn:equiv_2} to simplify the presentation of the results.  
 
We believe the results contained herein may serve as logical starting points to derive {\em tight} exponential error bounds and second-order coding rates for the wiretap channel~\cite{Wyn75} (as was done in~\cite{Hayashi11,Hayashi13}) and other information-theoretic security problems such as the secret key agreement~\cite{AC93} (as was done in~\cite{chou12,chou15}) problem. The leakage rates for these problems may be measured using traditional Shannon information  measures or   R\'enyi  information  measures (or their Gallager-type counterparts). Here,  we are only concerned with the secrecy requirement rather than both the secrecy and reliability requirements of the wiretap problem. The reliability requirement can be handled using, by now, standard error exponent analyses~\cite{Csi97,gallagerIT}.

\begin{figure}
\centering
\begin{picture}(115,60)
\put(0,55){\mbox{$A^n$}}
\put(50,55){\mbox{$E^n$}}
\put(91,55){\mbox{$X_n\sim P_{X_n}$}}
\put(15,57){\line(1,0){32}}
\put(4,52){\vector(0,-1){35}}
\put(54,52){\vector(0,-1){35}}
\put(104,52){\vector(0,-1){35}}
\put(7,35){\mbox{$f_{X_n}(\cdot)$}}
\put(-21,6){\mbox{$f_{X_n}(A^n)$}}
\put(50,6){\mbox{$E^n$}}
\put(100,6){\mbox{$X_n$}}
\multiput(22,10)(8,0){3}{\line(1,0){4}}
\multiput(4,-4)(8,0){13}{\line(1,0){4}}
\put(4,3){\line(0,-1){7}}
\put(104,3){\line(0,-1){7}}
\end{picture}
\caption{Illustration of applying a hash function $f_{X_n}$   on the source $A^n$. Common randomness $X_n$, independent of a correlated source $E^n$, is available to all parties and it determines the hash function $f_{X_n}$.  We would like $f_{X_n}(A^n)$ to be uniform on its support $\{1,\ldots,\|f_{X_n}\|\}$ and almost independent of $E^n$ in the sense of ensuring that quantity in~\eqref{eqn:equiv_2} is small. We examine  \eqref{eqn:equiv_2} under different asymptotic regimes such as the equivocation, the exponential behavior~\eqref{eqn:equiv_3}, and the second-order asymptotics~\eqref{eqn:sec_order1}--\eqref{eqn:sec_order2}. }
\label{fig:AE}
\end{figure}
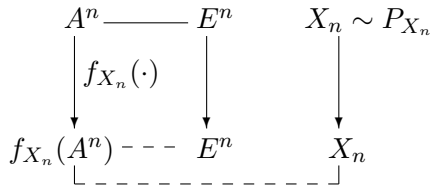

\subsection{Related Works}
In~\cite{Hayashi06,Hayashi11}, Hayashi generalized and strengthened the seminal privacy amplification analyses of Bennett {\em et al.} \cite{BBCM}, Renner~\cite{Rennerthesis} and Renner and Wolf \cite{Renner05} to obtain exponential error bounds for the leakage rate of the discrete memoryless wiretap channel  and the secrey key agreement problems~\cite{liang_book,  Bloch_book, Wyn75,AC93}. The leakage rate   was measured by the mutual information $I(A\wedge E|P_{AE})$ and the variational (or trace) distance $\| P_{AE}- P_A\times P_E\|_1$. The exact exponents for the   variational distance are, by now, well known \cite{Hayashi13, hay14}. 

 However, the results concerning the exponential decay of the leakage  rate  quantified via the mutual information contained in~\cite{Hayashi11}, and further generalized to other setings in \cite{chou12,chou15}, are only {\em achievability} results (i.e., lower bounds on the exponents). The {\em converse} has been open for some time. The present contribution, though not focusing on the wiretap channel or any specific information-theoretic security problem, derives {\em tight} exponential  bounds for  a generalization of the Shannon information measures, namely the family of R\'enyi information measures. In the process, we obtain a tight result for the exponential leakage rate for the mutual information,  thus resolving the converse part that was open in \cite{Hayashi11,chou12,chou15}. As a by-product, for some range of the R\'enyi parameter, we also obtain tight exponents for security measures defined using  the R\'enyi divergence under various hash functions.  
 
    Hayashi and Tsurumaru~\cite{HayashiT2013} proposed an efficient construction of hash functions for the purpose  of privacy amplification with less random seeds, thus potentially realizing the system in Fig.~\ref{fig:AE} with less random resources.  Other works along the  lines of deriving exponential error bounds for information-theoretic security problems include those by Hou and Kramer~\cite{hou13,hou13a}, Pierrot and Bloch~\cite{Pierrot13}, Bloch and Laneman~\cite{Bloch},   Han {\em et al.}~\cite{Han2013} and Parizi and Telatar~\cite{parizi15}. After the present work was submitted, Parizi,  Telatar and Merhav~\cite{Parizi16} proved ensemble tight exponential error bounds for the wiretap channel by appealing to type counting methods and channel resolvability arguments.

\subsection{Paper Organization}
The rest of the paper is organized as follows: In Section~\ref{sec:prelims}, we state the relevant preliminaries and define relevant information measures and security criteria for understanding the rest of the paper. In Section \ref{sec:equiv}, we state our results for the asymptotics of the equivocation. In Section \ref{sec:exponent}, we state our results for the exponential behavior of the R\'enyi-type security criteria. In Section \ref{sec:2nd}, we state our results for the second-order asymptotics  of the equivocation. We also consider the case where the magnitudes of the second-order rates are large. These are proved using novel one-shot bounds which are stated in Section~\ref{sec:one-shot}. The proofs of the asymptotic results are provided in Section~\ref{sec:prfs_asymp}. We conclude the paper in Section~\ref{sec:concl} by summarizing our key contributions and stating avenues for further investigations. The proofs of the one-shot bounds are rather technical and are thus relegated to the Appendices.

\section{Preliminaries and Information Measures}\label{sec:prelims}
\subsection{Basic Shannon and R\'enyi Information Quantities} \label{sec:info_measures}
We now introduce some information measures that generalize Shannon's information measures. Fix a normalized distribution $P_A \in\calP(\calA)$ and a non-negative measure (a non-negative vector but not necessarily summing to one) $Q_A\in\bar{\calP}(\calA)$ supported on a finite set $\calA$. Then the {\em relative entropy} and the {\em R\'enyi divergence of order $1+s$} are respectively defined as 
\begin{align}
D(P_A  \| Q_A)  &:= \sum_{a\in\calA}P_A(a) \log\frac{P_A(a)}{Q_A(a)} \\*
D_{1+s}(P_A  \| Q_A)  &:=  \frac{1}{s}\log\sum_{a\in\calA} P_A(a)^{1+s}Q_A(a)^{-s},
\end{align}
where throughout, $\log $ is to the natural base $\rme$.  
It is known that $\lim_{s\to 0} D_{1+s}(P_A\| Q_A) = D(P_A\| Q_A)$ so a special case of the  R\'enyi divergence is the usual relative entropy. It is known that  the map $s\mapsto s D_{1+s}(P_A\| Q_A)$ is concave in $s \in\bbR$ and hence $D_{1+s}(P_A\| Q_A)$ is monotonically increasing for $s\in\bbR$. Furthermore, the following {\em data processing or information processing inequalities} for R\'enyi divergences hold for $s\in [-1,1]$,
\begin{align}
D(P_A W  \| Q_A W) &\le D(P_A  \| Q_A)\\
D_{1+s}(P_A W  \| Q_A W) &\le D_{1+s}(P_A  \| Q_A) . \label{eqn:dpi_rd}
\end{align}
Here $W:\calA\to\calB$ is any stochastic matrix (channel) and $P_AW(b):=\sum_a W(b|a)P_A(a)$ is the output distribution induced by $W$ and $P_A$.

We   use $P_{\mix,\calA}$ to denote the uniform distribution on  $\calA$.  
 We also introduce conditional entropies on the joint alphabet  $\calA\times\calE$. 
If $P_{AE}$ is a  distribution  on $\calA\times\calE$, the {\em conditional entropy} and the {\em conditional R\'enyi entropy of order $1 + s$    relative to another normalized distribution $Q_E$ on $\calE$} as
\begin{align}
H(A|E|P_{AE}\| Q_E)  &:= - D(P_{AE} \| I_{A}\times Q_E) , \label{eqn:cond_entr_given} \\
H_{1+s}(A|E|P_{AE}\| Q_E)  &:= - D_{1+s}(P_{AE} \| I_{A}\times Q_E) . %\\
%H_{\min}(A|E|P_{AE}\| Q_E)  &= -\log\max_{(a,e):Q_E(e)>0} \frac{P_{AE}(a,e)}{Q_E(e)}
\end{align}
Here $I_A(a)=1$ for each $a\in\calA$ and it is known that $\lim_{s\to 0 } H_{1+s}(A|E|P_{AE}\| Q_E)=H(A|E|P_{AE}\| Q_E)$. If $Q_E=P_E$, we simplify the notation and denote the {\em conditional entropy}  and the {\em conditional R\'enyi entropy of order $1 + s$} as
\begin{align}
H(A|E|P_{AE}) &:= H(A|E|P_{AE}\|P_E) =-\sum_e P_E(e)\sum_a P_{A|E}(a|e) \log P_{A|E}(a|e)\\
H_{1+s}(A|E|P_{AE}) &:= H_{1+s}(A|E|P_{AE}\|P_E) = -\frac{1}{s}\log\sum_{e} P_E(e)\sum_a P_{A|E}(a|e)^{1+s}. \label{eqn:renyi_ent_Q} 
%H_{\min}(A|E|P_{AE}) &:= H_{\min}(A|E|P_{AE}\|P_E) .
\end{align}
The function $s\mapsto s H_{1+s}(A|E|P_{AE})$ is concave, and  $H_{1+s}(A|E|P_{AE}\|Q_E)$ is 
monotonically decreasing  on $(0,\infty)$  and $(-\infty,   0)$.

We are also interested in the  so-called {\em Gallager form} of the conditional R\'enyi entropy  for a  joint distribution $P_{AE} \in\calP(\calA\times\calE)$:
\begin{align}
H_{1+s}^{\uparrow} (A|E|P_{AE})  :=  -\frac{1 +  s}{s}\log\sum_e \bigg( \sum_a P_{AE}(a,e)^{1+s}\bigg)^{\frac{1}{1+s}} .\label{eqn:gallager_form} %
%H_{\min}^{\uparrow} (A|E|P_{AE}) &  \nn\\*
% &\hspace{-1in} :=  -\log\sum_eP_E(e) \max_a P_{A|E}(a|e) %\\
%H_{1+s|1+s'}^{\uparrow}  (A|E|P_{AE})  &  \nn\\*
% &\hspace{-1in} := -\frac{1 +  s'}{s}\log\sum_e \Big( \sum_a P_{AE}(a,e)^{1+s}\Big)^{\frac{1}{1+s'}}
\end{align}
%In the same way as the conditional R\'enyi entropy and conditional min-entropy, we have $\lim_{s\to\infty}H_{1+s}^{\uparrow} (A|E|P_{AE})=H_{\min}^{\uparrow} (A|E|P_{AE})$. 
By defining the familiar  {\em Gallager function} \cite{gallagerIT,Gal76} (parametrized slightly differently)
\begin{equation}
\phi(s|A|E|P_{AE}):=\log\sum_e \bigg( \sum_a P_{AE}(a,e)^{ \frac{ 1}{1-s}}\bigg)^{1-s}
\end{equation}
we can express \eqref{eqn:gallager_form} as 
\begin{equation}
H_{1+s}^{\uparrow} (A|E|P_{AE})=-\frac{1+s}{s}\phi\bigg( \frac{s}{1+s}\Big|A |E |P_{AE}\bigg),
\end{equation}
thus (loosely) justifying the nomenclature ``Gallager form'' of the conditional R\'enyi entropy in \eqref{eqn:gallager_form}. The quantities $H_{1+s}$ and $H_{1+s}^{\uparrow}$ can be   shown to be related as follows:
\begin{align}
\max_{Q_E \in \calP(\calE)} H_{1+s}(A|E|P_{AE}\| Q_E) = H_{1+s}^{\uparrow}(A|E|P_{AE}) \label{eqn:ent_min}
\end{align}
for $s\in [-1,\infty)\setminus\{0\}$. The maximum on the left-hand-side is attained for the tilted distribution
\begin{equation}
Q_E(e) = \frac{ (\sum_a P_{AE}(a,e)^{1+s})^{\frac{1}{1+s}} }{\sum_e(\sum_a P_{AE}(a,e)^{1+s})^{\frac{1}{1+s}}}. \label{eqn:Q_tilt}
\end{equation}
The map $s\to s H_{1+s}^{\uparrow}(A|E|P_{AE})$  is concave and the map $s\mapsto H_{1+s}^{\uparrow}(A|E|P_{AE})$ is monotonically decreasing for $s\in (-1,\infty)$. It can be shown by L'H\^{o}pital's rule that 
\begin{equation}
\lim_{s\to 0}  H_{1+s}^\uparrow(A|E|P_{AE}) = H(A|E|P_{AE}). \label{eqn:limit_s0}
\end{equation}
Thus, we regard $ H_{1}^\uparrow(A|E|P_{AE} )$ as $H(A|E|P_{AE})$, i.e., for R\'enyi parameter $\alpha=1+s=1$, the conditional R\'enyi entropy and its Gallager form coincide.    We also find it useful to consider a {\em two-parameter family} of the conditional  R\'enyi entropy:
 \begin{equation}
 H_{1+s| 1+t}(A|E|P_{AE}) := -\frac{1+t}{s}\log\sum_{e} P_E(e)\bigg(\sum_a P_{A|E}(a|e)^{1+s} \bigg)^{\frac{1}{1+t}}. \label{eqn:two_param}
 \end{equation}
% This was first introduced in the work by Hayashi and Watanabe~\cite[Eq.~(55)-(57)]{HayashiW2013}.
 Clearly, 
\begin{equation}
 H_{1+s| 1+s }(A|E|P_{AE})=H_{1+s}^{\uparrow} (A|E|P_{AE})\label{eqn:two_to_one}
 \end{equation} 
 so two-parameter conditional  R\'enyi entropy is  a generalization of the Gallager form  of the conditional R\'enyi entropy in \eqref{eqn:gallager_form}. 
 
 For a fixed joint source $P_{AE}$ define
\begin{align}
\hatR_s &:=\frac{\rmd }{\rmd t} \, tH_{1+t}(A|E|P_{AE})\Big|_{t=s } , \quad\mbox{and}\label{eqn:crit_rate1} \\
\hatR_s^\uparrow  &:=\frac{\rmd }{\rmd t}\, tH_{1 + t }^{\uparrow}(A|E|P_{AE})\Big|_{t=s}. \label{eqn:crit_rate2}
\end{align}
We note that $\hatR_s$ and $\hatR_s^\uparrow$ are  monotonically non-increasing in $s$ because the functions $t\mapsto tH_{1+t}(A|E|P_{AE})$ and $t\mapsto tH_{1+t}^\uparrow (A|E|P_{AE})$  are concave.  The fact that $t\mapsto -tH_{1+t}^\uparrow(A|E|P_{AE})$ is convex is because the maximum of convex functions is convex; cf.~\eqref{eqn:ent_min}.  Furthermore, both $\hatR_s$ and $\hatR_s^\uparrow$ are non-negative by direct evaluation of the derivatives and noting that $\log P_{A|E}(a|e)\le 0$.  We assume, henceforth, that the source $P_{AE}$ satisfies the conditions that $t\mapsto tH_{1+t}(A|E|P_{AE})$ and $t\mapsto tH_{1+t}^\uparrow (A|E|P_{AE})$  are both {\em strictly concave} so $\hatR_s$ and $\hatR_s^\uparrow$ are both {\em monotonically decreasing} in $s$.
 
 The R\'enyi entropies can be shown to satisfy a form of {\em data processing inequality}. In particular if $f:\calA\to\calM$ is any function on the set $\calA$, we have 
\begin{align}
H(f(A)|E|P_{AE}) &\le  H(  A |E|P_{AE}) ,\label{eqn:dpi1}\\
H_{1+s}(f(A)|E|P_{AE}) &\le  H_{1+s}(  A |E|P_{AE}), \label{eqn:dpi2}\\
H_{1+s}^\uparrow(f(A)|E|P_{AE}) &\le  H_{1+s}^\uparrow(  A |E|P_{AE})\label{eqn:dpi3} .
\end{align}
Inequalities \eqref{eqn:dpi2} and \eqref{eqn:dpi3} hold true for all $s> -1$.  
These inequalities  say  that processing the random variable $A$ cannot increase its randomness measured under any of the above conditional R\'enyi entropies.
\subsection{R\'enyi Security Criteria}\label{sec:security_meas}
Now, we introduce various criteria that measure  independence and uniformity {\em jointly}. The {\em mutual information} is 
\begin{equation}
I(A\wedge E|P_{AE}) := D(P_{AE}\| P_A\times P_E).
\end{equation}
This, together with its normalized version,  has been  traditionally used as measure of dependence in classical information-theoretic security \cite{liang_book,  Bloch_book}, going back to the seminal work of Wyner~\cite{Wyn75} for the wiretap channel. It was also used by Ahlswede and Csisz\'ar for the secret key agreement problem~\cite{AC93}. 
However,   it does not guarantee approximate uniformity of the source $P_A$ on $\calA$. Thus, we introduce the {\em modified mutual information}
\begin{align}
C(A|E|P_{AE}) &:=D(P_{AE}\| P_{\mix,\calA}\times P_E) \\ 
&=\log|\calA| - H(A|E |P_{AE}).
\end{align}
This quantity was also considered by Csisz\'ar and Narayan~\cite[Eq.~(6)]{CN04} in their work on secrecy capacities. An axiomatic justification of $C(A|E|P_{AE})$ was provided recently by Hayashi~\cite[Thm.~8]{Hayashi2013}.  The modified mutual information $C(A|E|P_{AE})$  clearly satisfies
%ied mutual information satisfies
\begin{equation}
C(A|E|P_{AE})= I(A\wedge E|P_{AE})+ D(P_A\| P_{\mix,\calA}). \label{eqn:standard_secur}
\end{equation}
Hence, if $C(A|E|P_{AE})$ is small, $A$ is approximately independent of $E$  {\em and} $A$ is approximately uniform on its alphabet, desirable properties in information-theoretic security. 
We may further generalize the modified mutual information by considering {\em R\'enyi information measures}, introduced in Section~\ref{sec:info_measures}, as follows:
\begin{align}
C_{1+s}(A|E|P_{AE})  &:= D_{1+s} (P_{AE}\| P_{\mix,\calA}\times P_E) \\* 
&=\log|\calA| - H_{1+s}(A|E |P_{AE}). \label{eqn:relate_C_H}
\end{align}
This   can be relaxed to give yet another security measure---the {\em Gallager-form of the modified mutual information}: 
\begin{align}
C_{1+s}^{\uparrow}(A|E|P_{AE})&:=     \min_{Q_E\in\calP(\calE)}     D_{1+s} (P_{AE}\| P_{\mix,\calA}\times Q_E) \\
&=\log|\calA| - H_{1+s}^{\uparrow}(A|E |P_{AE}).\label{eqn:relate_C_H_g}
\end{align}
We characterize these quantities asymptotically  when $(A,E)\equiv (f(A^n),E^n)$  for some (classes of) hash functions $f(\cdot)$. The quantities $H_{1+s}$ and $H_{1+s}^\uparrow$ can be regarded as   equivocations~\cite{Wyn75} so  $C_{1+s}$ and $C_{1+s}^\uparrow$  are  the negative of the equivocations up to a   shift.  We work with $C_{1+s}$ and $C_{1+s}^\uparrow$ in the rest of the paper as they are more convenient and they admit the interpretation as {\em security criteria}.  

\subsection{Decomposition of the R\'enyi Security Criteria into   Mutual Information and  Divergence Terms}\label{sec:prop_security_meas}
We   note that for any $s\ge -1$, $C_{1+s}(A|E|P_{AE})=0$ if and only if $P_{AE}=  P_{\mix,\calA}\times P_E$ or equivalently, $A$ is uniform on $\calA$ and statistically independent of $E$.  This is because $D_{1+s}(P\|Q)$ is a divergence so $D_{1+s}(P\| Q) = 0$ if and only if $P=Q$ \cite{vanErven14}. The same is true for the case $C_{1+s}^\uparrow(A|E|P_{AE})=0$. From this observation, we see that $C_{1+s}(A|E|P_{AE})$ and $C_{1+s}^\uparrow(A|E|P_{AE})$ also measure how close the source or ``key'' $A$ is to uniform and how secure $A$ is from an adversary $E$. Thus the quantities we consider are generalizations of the standard security measure $C(A|E|P_{AE})$ in \eqref{eqn:standard_secur} and measure uniformity and security in a different way. 

More quantitatively, one may wonder whether the security criteria $C_{1+s}(A|E|P_{AE})$ and $C_{1+s}^\uparrow(A|E|P_{AE})$ admit a decomposition into  ``mutual information'' and ``divergence'' terms and similar to  \eqref{eqn:standard_secur}. We first  consider $C_{1+s}(A|E|P_{AE})$. Define $g_s(a) := \sum_e P_{AE}(a,e)^{1+s} P_E(e)^{-s}$. We then see from the definition of the R\'enyi divergence of order $(1+s)$ that
\begin{align}
&\rme^{ s D_{1+s}  (P_{AE}\| P_{\mix,\calA}\times P_E)}  \nn\\*
&= \Big(\frac{1}{|\calA|}\Big)^{-s}\sum_{a}  \bigg( \sum_e P_{AE}(a,e)^{1+s}P_E(e)^{-s} \bigg) \label{eqn:sibson0}\\
&=  \Big(\frac{1}{|\calA|}\Big)^{-s}\sum_{a} \Big( g_s(a)^{\frac{1}{1+s}} \Big)^{1+s}\\
&=  \Big(\frac{1}{|\calA|}\Big)^{-s}\Big(\sum_{a'} g_s(a')^{\frac{1}{1+s}} \Big)^{1+s}  \sum_a  \bigg( \frac{g_s(a)^{\frac{1}{1+s} } }{\sum_{a'} g_s(a')^{\frac{1}{1+s}}} \bigg)^{1+s}  .
\end{align}
%where the last step follows by multiplying and dividing by the same factor $\big(\sum_{a'} g_s(a')^{\frac{1}{1+s}} \big)^{1+s}  $. 
As a result, one has 
\begin{align}
 D_{1+s}  (P_{AE}\| P_{\mix,\calA}\times P_E)=\frac{1+s}{s} \log \sum_{a'} g_s(a')^{\frac{1}{1+s}} + \frac{1}{s}\log  \sum_a  \bigg( \frac{g_s(a)^{\frac{1}{1+s} } }{\sum_{a'} g_s(a')^{\frac{1}{1+s}}} \bigg)^{1+s} +\log|\calA|. \label{eqn:decomposeD}
\end{align}
%Letting $\alpha = 1+s$ and i
Invoking the definition of $g_s(a)$, we see that the first term can be rewritten as 
\begin{align}
\frac{1+s}{s} \log \sum_{a'} g_s(a')^{\frac{1}{1+s}}  &= \frac{1+s}{s}\log\sum_{a'} P_A(a') \bigg( \sum_e P_{E|A}(e|a')^{1+s} P_E(e)^{-s} \bigg)^\frac{1}{1+s} \\
&=: I^{(\mathrm{Sibson})}_{1+s}(E\wedge A|P_{AE}). \label{eqn:sibson}
\end{align}
This is exactly {\em Sibson's} definition of the order-$(1+s)$  R\'enyi mutual information~\cite{Sibson}.   See  Verd\'u's work in \cite[Sec.~III]{Verdu_RenyiMI} for the properties of $I^{(\mathrm{Sibson})}_{1+s}(E\wedge A|P_{AE})$  and a   generalization to arbitrary  alphabets. See Hayashi's work \cite[Sec.~II.C]{Hayashi15} for a generalization of  $I^{(\mathrm{Sibson})}_{1+s}(E\wedge A|P_{AE})$  to   quantum systems. The work of Tomamichel and Hayashi in~\cite[Sec.\ IV.B]{TomHay15} provides an operational interpretation of this quantity in the context of composite hypothesis testing. The sum of the second and third terms in \eqref{eqn:decomposeD} form a R\'enyi divergence of order $(1+s)$. In particular, the second term is the negative R\'enyi entropy of order $(1+s)$ of the probability mass function $Q_A^{(s)}(a) := g_s(a)^{\frac{1}{1+s}}/\sum_{a'} g_s(a')^{\frac{1}{1+s}}$. Hence,
\begin{equation}
D_{1+s}  (P_{AE}\| P_{\mix,\calA}\times P_E)=I^{(\mathrm{Sibson})}_{1+s}(E\wedge A|P_{AE})+D_{1+s} \big( Q_A^{(s)} \| P_{\mix,\calA} \big). \label{eqn:decomposeD2}
\end{equation}
Because $Q_{A}^{(0)} =P_A$, and $\lim_{s\to 0}I^{(\mathrm{Sibson})}_{1+s}(E\wedge A|P_{AE})= I(A\wedge E|P_{AE})$, the decomposition in \eqref{eqn:decomposeD2} is a generalization of~\eqref{eqn:standard_secur}. Equation  \eqref{eqn:decomposeD2}   is also reminiscent of an information geometric Pythagorean theorem~\cite{Ama00} (but for R\'enyi divergence here). The distribution $Q_A^{(s)} \times P_E$ can be regarded as the $D_{1+s}$-information projection of $P_{AE}$ onto the set $\{ Q_A \times P_E : Q_A\in\calP(\calA)\}$.  This was  observed in the quantum information context by Sharma and~Warsi~\cite[Lemma~3 in Suppl.~Mat.]{SharmaWarsi13}. They called the relation the {\em quantum Sibson identity}.

Next we consider  the Gallager-form of the modified mutual information $C_{1+s}^\uparrow(A|E|P_{AE})$.   From \eqref{eqn:relate_C_H_g},  it can be seen by adding and subtracting $H_{1+s}(A|P_A)$ that 
\begin{align}
C_{1+s}^{\uparrow}(A|E|P_{AE}) = H_{1+s}(A|P_A)- H_{1+s}^\uparrow(A|E|P_{AE}) +D_{1+s}(P_A\|P_{\mix,\calA}) .\label{eqn:decompose_galla}
\end{align}
We recognize that the sum of the first two terms constitutes {\em Arimoto's}~\cite{arimoto75} definition of the order-$(1+s)$  R\'enyi mutual information 
\begin{equation}
I_{1+s}^{(\mathrm{Arimoto})}  (A\wedge E|P_{AE}) =  H_{1+s}(A|P_A)- H_{1+s}^\uparrow(A|E|P_{AE}).
\end{equation}
Since $\lim_{s\to 0} I_{1+s}^{(\mathrm{Arimoto})}(A\wedge E|P_{AE})  =I(A\wedge E|P_{AE})$, the security criterion $C_{1+s}^{\uparrow}(A|E|P_{AE}) $ also admits a decomposition similar to \eqref{eqn:standard_secur}. See \cite[Sec.~II.A]{Verdu_RenyiMI} for detailed discussions of the properties of $I_{1+s}^{(\mathrm{Arimoto})}(A\wedge E|P_{AE}) $.

\section{Asymptotics of the Equivocation} \label{sec:equiv}
In this section we present our   results concerning the asymptotic behavior of the equivocation. First we define precisely the notion of hash function.  This is a generalization of the definition by Carter and Wegman~\cite{carter79}.

\begin{definition} \label{def:has}
 A {\em  random\footnote{For brevity, we will sometimes omit the qualifier ``random''. It is understood, henceforth, that all so-mentioned hash functions are random hash functions.}   hash function} $f_X$ is a stochastic map from $\calA$ to $\calM:= \{1, \ldots ,M\}$, where $X$ denotes a random variable describing its stochastic behavior. An ensemble of random hash 
functions $f_X$ is called an {\em $\epsilon$-almost  universal$_2$ hash function} if it satisfies the following
condition: For any {\em distinct} $a_1, a_2\in\calA$, %the probability that $f_X(a_1)=f_X(a_2)$, namely
\begin{equation}
\Pr\big(f_X(a_1)=f_X(a_2) \big) \le \frac{\epsilon}{M}.\label{eqn:hash}
\end{equation}
   When $\epsilon=1$, we simply say that the ensemble of functions is a  {\em  universal$_2$ hash function}.
\end{definition}
As an example, if we randomly and uniformly assign each element of $a\in\calA$ into one of $M$ bins indexed by $m\in\calM$ (i.e., the familiar random binning process introduced by Cover~\cite{cover75}), then $\Pr(f_X(a_1)=f_X(a_1))=\frac{1}{M}$ so this is a universal$_2$ hash function, and furthermore, \eqref{eqn:hash} is achieved with equality. 

Let $|t|^+=\max\{0,t\}$.  The following is our first main result.
\begin{figure}[t]
\centering
\begin{picture}(250,150)
\put(0,10){\vector(1,0){250}}
\put(10,0){\vector(0,1){150}}
\put(140,10){\line(1,1){10}}
\put(160,30){\line(1,1){10}}
\put(160,30){\line(1,1){10}}
\put(180,5){\line(0,1){10}}
\put(140,5){\line(0,1){10}}
\put(100,5){\line(0,1){10}}
\put(60,5){\line(0,1){10}}
\multiput(180,10)(0,8){5}{\line(0,1){4}}
\thicklines
%\linethickness{.5mm}
\put(10,10){\line(1,0){90}}
%\linethickness{.5mm}
\put(60,10){\line(1,1){120}}
%\linethickness{.5mm}
%\put(120,10){\line(1,1){120}}
\put(180,50){\line(1,1){60}}
%\multiput(140,10)(1,1){26}{\line(1,1){5}}
%\%linethickness{.5mm}
\qbezier(100,10)(140,10)(180,50)
\put(50,-5){$H_{1+s}$}
\put(100,-5){$H_{1}$}
\put(130,-5){$H_{1-s}$}
\put(0,0){$0$}
\put(175,-6){$\hatR_{-s}$}
\put(245,-3){$\mbox{Rates}$}
\put(14,145){$\mbox{Security}$}
\put(78,90){$\displaystyle\lim_{n\to\infty}\frac{1}{n} C_{1+s}$}
\put(168,100){$\displaystyle\lim_{n\to\infty}\frac{1}{n} C_{1-s}$}
\end{picture}
\caption{Schematic showing the relation between the various entropies and the transition rate $\hatR_{-s}$ (defined in \eqref{eqn:crit_rate1} and \eqref{eqn:minus_equivalence}). The figure with the Gallager forms of the  conditional R\'enyi entropy $H_{1\pm s}^\uparrow$  and $\hatR_{-s}^\uparrow$  (defined in \eqref{eqn:crit_rate2} and \eqref{eqn:minus_equivalence2}) is completely analogous.  See Fig.~\ref{fig:equivs_gal}. }
\label{fig:schematic}
\end{figure}
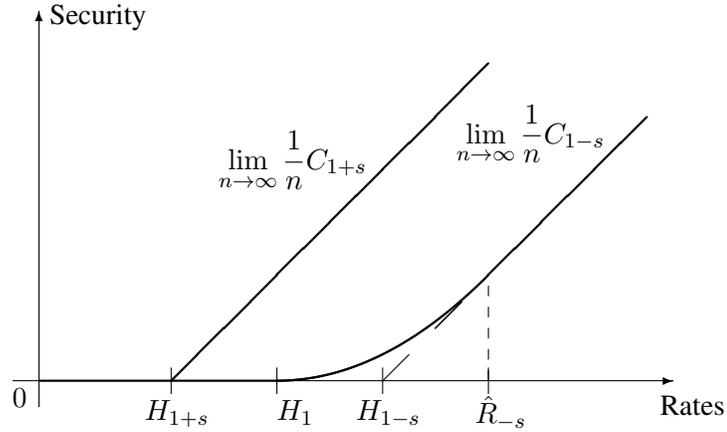

\begin{figure}[t]
\centering
\includegraphics[width = .6\columnwidth]{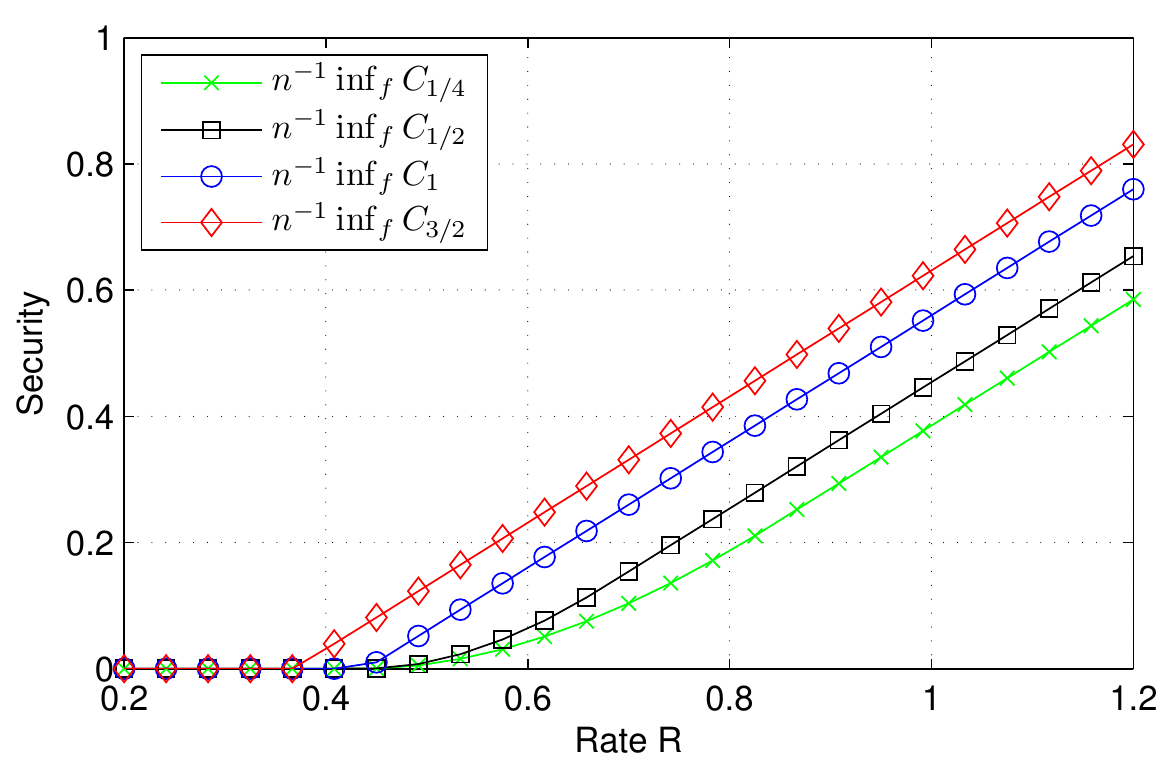} 
\caption{Illustration of the security measures  $C_{1+s}$ and $C_{1-s}$ (for $s\in [0,1]$) in \eqref{eqn:C1s} and~\eqref{eqn:C1s_minus} for the discrete memoryless multiple source $P_{AE}$ where $P_{AE}(0,0)=  0.7$ and $P_{AE}(0,1)=P_{AE}(1,0)=P_{AE}(1,1)=0.1$.  }\label{fig:equivs}
\includegraphics[width = .6\columnwidth]{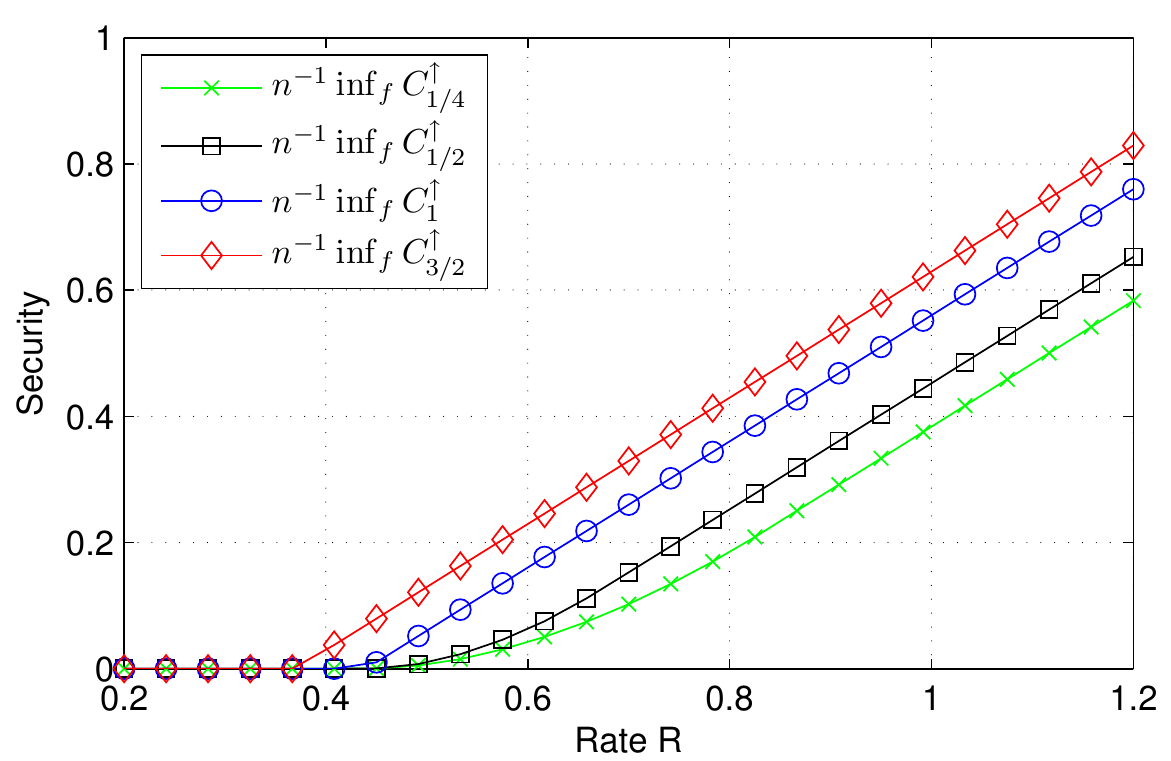}
\caption{Illustration of the security measures  $C_{1+s}^\uparrow$  and $C_{1-s}^\uparrow$ (for $s\in [0,1]$) in \eqref{eqn:C1sG} and \eqref{eqn:C1sG_minus} respectively  for the same source. }
\label{fig:equivs_gal}
\end{figure}

\begin{theorem}[Asymptotics of the Equivocation] \label{thm:equiv}
Let\footnote{As is usual in information theory, we ignore the integer effects on the size of the hash function $M_n=\|f\|$ since this is inconsequential asymptotically. This imprecision is also employed in the sequel for notational convenience. }  $M_n=\rme^{nR}$.  Assume that  $f_{X_n}:\calA^n\to \calM_n=\{1,\ldots, M_n\}$ is a % an $\epsilon$-almost  universal$_2$
random  hash function.\footnote{In particular, all the infima in \eqref{eqn:C1s}--\eqref{eqn:C1sG_minus} (as well as similar statements in the sequel) are taken over all $f_{X_n}$ that are random hash functions. \label{fn:inf}} For any $s\in [0,1]$, we have 
\begin{align}
\lim_{n\to\infty}\frac{1}{n}\inf_{f_{X_n}}    C_{1+s}(f_{X_n}(A^n)|E^n X_n|P_{AE}^n \times P_{X_n})  &= |R- H_{1+s}(A|E|P_{AE})|^+ , \label{eqn:C1s}\\
\lim_{n\to\infty}\frac{1}{n}\inf_{f_{X_n}}   C_{1+s}^\uparrow(f_{X_n}(A^n)|E^n X_n|P_{AE}^n \times P_{X_n})&  = |R- H_{1+s}^{\uparrow}(A|E|P_{AE})|^+ .\label{eqn:C1sG}
\end{align}
%where the infima in \eqref{eqn:C1s} and \eqref{eqn:C1sG} are taken with respect to the class of hash functions. 
Furthermore, for any $s\in ( 0,1]$, we also have 
\begin{align}
\lim_{n\to\infty}\frac{1}{n}\inf_{f_{X_n}}      C_{1-s}(f_{X_n}(A^n)|E^n X_n|P_{AE}^n \times P_{X_n})  &=\left\{ \begin{array}{cc}
 R- H_{1-s}(A|E|P_{AE})  & R\ge \hatR_{-s}\\
\max_{t\in [0,s]} \frac{t}{s}(R- H_{1-t}(A|E|P_{AE}) )  & R\le \hatR_{-s}
\end{array} \right.  ,  \label{eqn:C1s_minus} \\
\lim_{n\to\infty}\frac{1}{n}\inf_{f_{X_n}}    C_{1-s}^\uparrow(f_{X_n}(A^n)|E^n X_n|P_{AE}^n \times P_{X_n})   &=\left\{ \begin{array}{cc}
 R- H_{1-s}^{\uparrow}(A|E|P_{AE}) & R\ge \hatR_{-s}^\uparrow\\
\max_{t\in [0,s]} \frac{t}{s}(R- H_{1-t|1-s } (A|E|P_{AE}) )  & R\le \hatR_{-s}^\uparrow
\end{array} \right.     , \label{eqn:C1sG_minus}
\end{align}
where recall that  $ \hatR_{-s}$ and $ \hatR_{-s}^\uparrow$ are defined in \eqref{eqn:crit_rate1} and \eqref{eqn:crit_rate2} respectively.  (Also see \eqref{eqn:minus_equivalence} and \eqref{eqn:minus_equivalence2} for alternative representations.)  
%\begin{align}
%\hatR_s &:=\frac{\rmd }{\rmd t} \, tH_{1-t}(A|E|P_{AE})\Big|_{t=s } , \label{eqn:crit_rate1} \\
%\hatR_s^\uparrow  &:=\frac{\rmd }{\rmd t}\, tH_{1-t }^{\uparrow}(A|E|P_{AE})\Big|_{t=s}. \label{eqn:crit_rate2}
%\end{align}
Furthermore,  the infima in \eqref{eqn:C1s}--\eqref{eqn:C1sG_minus}  are achieved by any  sequence of   $\epsilon$-almost  universal$_2$ hash functions $f_{X_n}$ (where $\epsilon$ is a fixed positive number).
\end{theorem}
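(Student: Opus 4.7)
The plan is to exploit the identities $C_{1+s}(f_{X_n}(A^n)|E^n X_n|\fndot) = \log M_n - H_{1+s}(f_{X_n}(A^n)|E^n X_n|\fndot)$ and its Gallager-form analogue, so that the asymptotic evaluation of each security measure reduces to bounding the conditional R\'enyi entropy of the hashed source. Matching converse and achievability bounds are then established, with the achievability realized by any $\epsilon$-almost universal$_2$ family.

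For the converse direction in \eqref{eqn:C1s} and \eqref{eqn:C1sG} ($s\in[0,1]$), I would combine two bounds on $H_{1+s}(f_{X_n}(A^n)|E^n X_n)$. The first is the data-processing bound \eqref{eqn:dpi2}--\eqref{eqn:dpi3}, together with independence of $X_n$ from $(A^n,E^n)$ and the i.i.d.\ additivity $H_{1+s}(A^n|E^n|P_{AE}^n) = nH_{1+s}(A|E|P_{AE})$; the second is the support-size bound $H_{1+s}(f_{X_n}(A^n)|E^n X_n)\le \log M_n=nR$. Whichever of these is tighter controls $C_{1+s}$, yielding the $|R-H_{1+s}|^+$ expression. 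For achievability, I would invoke the one-shot hashing bounds stated in Section~\ref{sec:one-shot}: the R\'enyi leftover-hash lemma guarantees, for a universal$_2$ family,
\begin{equation}
D_{1+s}\bigl(P_{f_{X_n}(A^n),E^n,X_n}\,\big\|\,P_{\mix,\calM_n}\times P_{E^n}\times P_{X_n}\bigr) \le \frac{1}{s}\log\bigl(1+(\epsilon M_n)^s \,\rme^{-snH_{1+s}(A|E|P_{AE})}\bigr), \nn
\end{equation}
which on the exponential scale matches $n|R-H_{1+s}|^+$. The Gallager case follows by choosing $Q_{E^n}$ according to the tilted distribution \eqref{eqn:Q_tilt}, which is itself i.i.d.\ on the product source, and invoking \eqref{eqn:ent_min}.

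For the regime $s\in(0,1]$ (equations \eqref{eqn:C1s_minus} and \eqref{eqn:C1sG_minus}) the analysis is substantially more delicate because R\'enyi entropies of order below one do not satisfy a clean single-parameter data processing bound of the required tightness. The idea is to introduce an auxiliary parameter $t\in[0,s]$ and establish, via monotonicity of $r\mapsto rD_{1+r}$ and data processing at order $1-t$, an interpolation bound on $H_{1-s}(f_{X_n}(A^n)|E^n X_n)$ as a convex combination involving $\log M_n$ and $H_{1-t}(A^n|E^n|P_{AE}^n)=nH_{1-t}(A|E|P_{AE})$. Optimizing over $t$ and evaluating the derivative condition produces the transition rate $\hatR_{-s}$ and the two-regime formula. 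For achievability, the corresponding one-shot bound would be similarly indexed by $t$, and its asymptotic evaluation (again using i.i.d.\ additivity) recovers the $\max_{t\in[0,s]}$ expression. The Gallager form is treated by the same route, with the two-parameter entropy $H_{1-t|1-s}$ of \eqref{eqn:two_param} naturally appearing through the analogue of the tilted distribution.

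The main obstacle will be the negative-order case: both the converse and the achievability must be set up so that a single random hash function attains the supremum over $t\in[0,s]$ rather than requiring different hashes for different $t$. Handling this requires evaluating the expectation of the R\'enyi divergence inside the one-shot bound before taking the limit, and then recognizing a Legendre-dual structure powered by strict concavity of $t\mapsto tH_{1-t}(A|E|P_{AE})$ and $t\mapsto tH_{1-t|1-s}(A|E|P_{AE})$. Proving that the resulting variational expression is tight, and identifying the precise rate $\hatR_{-s}$ (respectively $\hatR_{-s}^\uparrow$) at which the two regimes meet, is the core technical step of the proof.
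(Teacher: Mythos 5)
Your treatment of \eqref{eqn:C1s}--\eqref{eqn:C1sG} is essentially the paper's: achievability from the one-shot hashing bounds \eqref{eqn:os_direct_pluss}--\eqref{eqn:os_direct_G_pluss} of Lemma~\ref{lem:os_d_1} together with i.i.d.\ additivity, and the converse from the data-processing inequalities \eqref{eqn:dpi2}--\eqref{eqn:dpi3} plus nonnegativity of the security measures. The genuine gap is in the order-$(1-s)$ statements \eqref{eqn:C1s_minus}--\eqref{eqn:C1sG_minus}, which you yourself defer as ``the core technical step.'' On the achievability side, ``i.i.d.\ additivity'' cannot deliver the claimed formula: the relevant one-shot lower bounds \eqref{eqn:os_direct_minuss}--\eqref{eqn:os_direct_G_minuss} contain sums restricted to the events $\{P_{A|E}(a|e)\gtrless \epsilon/M_n\}$, which do not tensorize into single-letter R\'enyi entropies. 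They are information-spectrum quantities, and the paper evaluates them by large deviations: Cram\'er's theorem applied to the conditional entropy density, once under $P_{AE}^n$ and once under the tilted non-negative measure $P_{AE}^n (P_{A|E}^n)^{-s}$ (giving \eqref{eqn:cramer1}--\eqref{eqn:cramer2}), and, for the Gallager form, the G\"artner--Ellis theorem conditioned on the type of $\be$ combined with a type decomposition and Sion's minimax; this is exactly where the auxiliary parameter $t$ and the two-parameter entropy $H_{1-t|1-s}$ of \eqref{eqn:two_param} come from, and the transition rate is then identified through the envelope identity \eqref{eqn:hatR_der}. In particular there is no family of one-shot bounds ``indexed by $t$,'' so your worry that a single hash must attain the supremum over $t$ is moot: $t$ is purely an analysis parameter, and any fixed $\epsilon$-almost universal$_2$ family achieves the bound.

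The converse route you sketch for the $(1-s)$ case also does not work as described. Implementing ``monotonicity plus data processing at order $1-t$'' via H\"older's inequality on the $M_n$-point support gives $sH_{1-s}(f(A^n)|E^n)\le \frac{(1-s)t}{1-t}H_{1-t}(A^n|E^n|P_{AE}^n)+\frac{s-t}{1-t}\log M_n$, hence only $\liminf_n \frac{1}{n}C_{1-s}\ge \frac{t(1-s)}{s(1-t)}\,\big(R-H_{1-t}(A|E|P_{AE})\big)$, whose prefactor falls strictly short of the required $\frac{t}{s}$ for $0<t<s$; and the sharpened inequality $\rme^{sH_{1-s}}\le M_n^{s-t}\rme^{tH_{1-t}}$ that would repair it is false in general (one atom of mass close to one together with many light atoms on a large alphabet violates it). This is why the paper needs the new one-shot converse bounds of Lemma~\ref{lem:os_c_1}, proved by splitting $\calA\times\calE$ at the threshold $c/M$, with the small-probability part then controlled by Markov's inequality (cf.\ \eqref{eqn:apply_markov}) and the restricted sums again evaluated by Cram\'er-type arguments. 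Without these two ingredients---the threshold-splitting one-shot bounds and their large-deviation evaluation---your outline establishes neither \eqref{eqn:C1s_minus} nor \eqref{eqn:C1sG_minus}, nor does it identify $\hatR_{-s}$ and $\hatR_{-s}^\uparrow$ as the points where the two regimes meet.
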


This result is proved in Section~\ref{sec:prf_equiv}. The  ideas to prove the direct parts  (upper bounds on the leakage rates) for the $s=0$ cases are contained in previous works such as \cite{Hayashi11,chou12,chou15}. All other parts are   novel. 

We remark that the converse parts  (lower bounds) to \eqref{eqn:C1s}--\eqref{eqn:C1sG} hold for all $s\ge 0$ (and not only being upper bounded by $1$) owing to the data processing inequalities in \eqref{eqn:dpi2}--\eqref{eqn:dpi3}.  Furthermore, instead of the formulae in \eqref{eqn:crit_rate1} and \eqref{eqn:crit_rate2}, the   rates in which the behavior of the security measures change  $\hatR_{-s}$ and $\hatR_{-s}^\uparrow$ can also be expressed as 
\begin{align}
\hatR_{-s}  &= \frac{\rmd }{\rmd t} \, tH_{1-t}(A|E|P_{AE})\Big|_{t=s } , \quad\mbox{and}\label{eqn:minus_equivalence}\\
\hatR_{-s}^\uparrow & = \frac{\rmd }{\rmd t} \, tH_{1-t}^\uparrow(A|E|P_{AE})\Big|_{t=s } . \label{eqn:minus_equivalence2}
\end{align}
These alternative expressions  for $\hatR_{-s}$ and $\hatR_{-s}^\uparrow$ will be useful in the proof of Theorem~\ref{thm:equiv}.
%The same holds for $\hatR_{-s}^\uparrow$. 

%\begin{figure}
%\centering
%\includegraphics[width = .6\columnwidth]{figs/equiv_0303} 
%\caption{Illustration of the security measures  $C_{1+s}$ and $C_{1-s}$ (for $s\in [0,1]$) in \eqref{eqn:C1s} and~\eqref{eqn:C1s_minus} for the discrete memoryless multiple source $P_{AE} = [0.7\,\, 0.1; 0.1\,\, 0.1]$. }
%\label{fig:equivs}
%\includegraphics[width = .6\columnwidth]{figs/equiv_gal_2003}
%\caption{Illustration of the security measures  $C_{1+s}^\uparrow$  and $C_{1-s}^\uparrow$ (for $s\in [0,1]$) in \eqref{eqn:C1sG} and \eqref{eqn:C1sG_minus} respectively  for the same source. }
%\label{fig:equivs_gal}
%\end{figure}

The results in  \eqref{eqn:C1s}--\eqref{eqn:C1sG_minus} imply that an optimum sequence of hash functions $\{f_{X_n}\}_{n\in\bbN}$ is such that asymptotically  the normalized security measure  $C_{1 + s}$ and its Gallager-type counterpart $C_{1 + s}^{\uparrow}$ increase  {\em linearly} with the rate $R$ if the rate is larger than the conditional R\'enyi entropy and its Gallager-type counterpart.   However, note that this only holds for the case where $R$ is greater than the analogue of the critical rates, defined in \eqref{eqn:crit_rate1}--\eqref{eqn:crit_rate2} in the case where the R\'enyi parameter $\alpha=1-s$ is less than one. Observe that there is difference in   behavior when we consider the other direction, i.e., the quantities $C_{1-s}$ and $C_{1-s}^{\uparrow}$ for $s\in [0,1]$. 
     Below the critical rate, the equivocation no longer increases linearly with $R$ but  is nonetheless still convex in $R$. See Fig.~\ref{fig:schematic} for a schematic of the various rates and the behavior of the equivocations. We numerically calculate the asymptotics of the equivocations in Theorem~\ref{thm:equiv} and display the results in  Figs.~\ref{fig:equivs}  and~\ref{fig:equivs_gal}. The behaviors of the normalized security measure $C_{1+s},C_{1-s}$ and their Gallager-type counterparts $C_{1+s}^\uparrow,C_{1-s}^\uparrow$ are similar.  % \eqref{eqn:C1sG} and   \eqref{eqn:C1sG_minus} are similar.

%In applications, such as those mentioned in the introduction (Section \ref{sec:motivation}), one is typically interested in the collision entropy 

Finally, we examine the optimal (maximum) key generation rates, i.e., the largest rates $R$ for which there exists a sequence of functions from $\calA^n$ to $\{1,\ldots, \rme^{nR}\}$ such that  $\frac{1}{n} C_{1+ s}$ or $\frac{1}{n} C_{1 + s}^\uparrow$ tend to zero as the blocklength grows.  We observe from the following corollary that this cutoff rate depends strongly on the sign of $s$.  In particular for $s \in (0,1]$, the cutoff rates are  $H_{1+s}(A|E|P_{AE})$ and $H_{1+s}^\uparrow(A|E|P_{AE})$ respectively, while for $s\in [-1,0]$, the cutoff rates are both equal to the Shannon conditional entropy $H(A|E|P_{AE})$ independent of $s$. This difference between the behaviors of the optimal key generation rates depending on the sign of $s$ (also illustrated in Figs.~\ref{fig:equivs}  and~\ref{fig:equivs_gal}) is somewhat surprising (at least to the authors).

%observe also that for $C_{1+s}$ or $C_{1+s}^\uparrow$ with $s\in (0,1]$, the optimal (maximum) key generation rates $R^*$ (the cutoff rates at which $|R- H_{1+s}(A|E|P_{AE})|^+ $  and $|R- H_{1+s}^\uparrow(A|E|P_{AE})|^+ $ transition  from being $0$ to a positive number or the largest rates for which  $\frac{1}{n} C_{1+s}$ or $\frac{1}{n} C_{1+s}^\uparrow$ tend to zero as the blocklength grows) are $H_{1+s}(A|E|P_{AE})$ and $H_{1+s}^\uparrow(A|E|P_{AE})$ respectively. These are, in general, different from the Shannon conditional entropy $H (A|E|P_{AE})$. If security is measured using $C_{1-s}$ or $C_{1-s}^\uparrow$, we observe that $R^*$ {\em is} the Shannon conditional entropy for all $s\in [0,1]$. This is stated formally as follows:
\begin{corollary}[Optimal  key generation rates] \label{cor:key}
We have
\begin{align}
\sup\bigg\{ R \in\bbR_+ : \lim_{n\to\infty}\inf_{f:\calA^n \to \{1,\ldots, \rme^{nR} \}} \frac{C_{1+s}(f(A^n) | E^n| P_{AE}^n) }{n} = 0 \bigg\} &= \left\{ \begin{array}{cl}
H_{1+s}(A|E|P_{AE}) & \mbox{  if  }  s\in (0,1] \\
H (A|E|P_{AE}) & \mbox{  if  } s\in [-1,0] \\
\end{array}  \right. ,  \label{eqn:opt_key1}\\*
\sup\bigg\{ R \in\bbR_+ :\lim_{n\to\infty} \inf_{f:\calA^n \to \{1,\ldots, \rme^{nR} \}}\frac{C_{1+s}^\uparrow(f(A^n) | E^n| P_{AE}^n) }{n} = 0 \bigg\} &= \left\{ \begin{array}{cl}
H_{1+s}^\uparrow( A|E|P_{AE}) & \mbox{  if  }  s\in (0,1] \\
H (A|E|P_{AE}) & \mbox{  if  } s\in [-1,0] \\
\end{array}  \right. . \label{eqn:opt_key2}
\end{align}
\end{corollary}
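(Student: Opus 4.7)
The plan is to read off the cutoff directly from Theorem~\ref{thm:equiv} by identifying the rates $R$ for which the asymptotic formulas in \eqref{eqn:C1s}--\eqref{eqn:C1sG_minus} vanish. For $s\in(0,1]$, \eqref{eqn:C1s} and \eqref{eqn:C1sG} give $|R-H_{1+s}(A|E|P_{AE})|^+$ and $|R-H_{1+s}^{\uparrow}(A|E|P_{AE})|^+$, which equal zero precisely when $R\le H_{1+s}(A|E|P_{AE})$ and $R\le H_{1+s}^{\uparrow}(A|E|P_{AE})$, respectively. The $s=0$ case is recovered from these same formulas, since both security measures then coincide with the classical $C(A|E|P_{AE})$ and the cutoff becomes $H(A|E|P_{AE})$. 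Derandomization to replace $f_{X_n}$ by a deterministic $f$ is standard: any deterministic $f$ is a special case of a random hash function, which handles the converse, while Markov's inequality applied to \eqref{eqn:equivo_rand2} yields a realization $x_n^\ast$ whose deterministic protocol $f_{x_n^\ast}$ matches the random ensemble's asymptotic rate.

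For $s\in[-1,0)$ I would set $s':=-s\in(0,1]$ and invoke \eqref{eqn:C1s_minus}. The asymptotic equivocation vanishes iff simultaneously $R\le\hatR_{-s'}$ and $\max_{t\in[0,s']}\tfrac{t}{s'}\bigl(R-H_{1-t}(A|E|P_{AE})\bigr)=0$. Since the maximand is $0$ at $t=0$, the nontrivial requirement reduces to $R\le H_{1-t}(A|E|P_{AE})$ for every $t\in(0,s']$. The monotonicity of $s\mapsto H_{1+s}(A|E|P_{AE})$ on $(-\infty,0)$ stated in Section~\ref{sec:info_measures} shows that $t\mapsto H_{1-t}(A|E|P_{AE})$ is nondecreasing on $(0,s']$, so the infimum is attained as a limit and equals $\lim_{t\to 0^+}H_{1-t}(A|E|P_{AE})=H(A|E|P_{AE})$. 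The side condition $R\le\hatR_{-s'}$ is automatic when $R\le H(A|E|P_{AE})$: by concavity of $t\mapsto tH_{1+t}(A|E|P_{AE})$ the map $s\mapsto\hatR_s$ is nonincreasing, so $\hatR_{-s'}\ge\hatR_0=H(A|E|P_{AE})\ge R$. This produces the cutoff $H(A|E|P_{AE})$ for \eqref{eqn:opt_key1}.

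The Gallager-form version \eqref{eqn:opt_key2} is the main obstacle, since the required identity $\inf_{t\in(0,s']}H_{1-t|1-s'}(A|E|P_{AE})=H(A|E|P_{AE})$ does not follow from any manifest monotonicity of $H_{1-t|1-s'}$ in $t$. My plan is to rewrite, using $\log\sum_a P_{A|E}(a|e)^{1-t}=tH_{1-t}(A|E{=}e)$,
\begin{equation*}
H_{1-t|1-s'}(A|E|P_{AE})=\frac{1-s'}{t}\log\sum_e P_E(e)\exp\!\Bigl(\tfrac{tH_{1-t}(A|E{=}e)}{1-s'}\Bigr),
\end{equation*}
and then apply Jensen's inequality to the convex exponential to obtain the uniform lower bound $H_{1-t|1-s'}(A|E|P_{AE})\ge\sum_e P_E(e)H_{1-t}(A|E{=}e)\ge H(A|E|P_{AE})$, where the last step uses that the unconditional R\'enyi entropy is nonincreasing in its order (so $H_{1-t}\ge H_1=H$ for $t\ge 0$). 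A short L'H\^opital calculation gives $\lim_{t\to 0^+}H_{1-t|1-s'}(A|E|P_{AE})=H(A|E|P_{AE})$, showing the bound is tight. With the Gallager-form counterpart of the $\hatR$ inequality ($\hatR_{-s'}^{\uparrow}\ge\hatR_0^{\uparrow}=H(A|E|P_{AE})$) handled identically, the argument closes.
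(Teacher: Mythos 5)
Your proposal is correct and follows the same skeleton as the paper's own proof: both read the cutoffs directly off Theorem~\ref{thm:equiv}, observing for $s\in(0,1]$ that the limits $|R-H_{1+s}(A|E|P_{AE})|^+$ and $|R-H_{1+s}^{\uparrow}(A|E|P_{AE})|^+$ vanish exactly up to the corresponding R\'enyi entropies, and for $s\in[-1,0]$ (with $s'=-s$) that the maximum over $t\in[0,s']$ in \eqref{eqn:C1s_minus} is zero iff $R\le H_{1-t}(A|E|P_{AE})$ for all $t\in(0,s']$, i.e.\ iff $R\le H(A|E|P_{AE})$ by monotonicity and continuity at $t=0$. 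Where you add value is completeness: the paper proves only \eqref{eqn:opt_key1} and dismisses \eqref{eqn:opt_key2} as ``completely analogous,'' which tacitly needs a substitute for the monotonicity of $H_{1-t}$ in $t$, namely a bound of the form $H_{1-t|1-s'}(A|E|P_{AE})\ge H(A|E|P_{AE})$; your rewriting of the two-parameter entropy via $\log\sum_a P_{A|E}(a|e)^{1-t}=tH_{1-t}(A|E{=}e)$, the Jensen step, and the L'H\^opital limit $\lim_{t\to0^+}H_{1-t|1-s'}=H$ supply precisely this missing ingredient and are correct (for $s'<1$; the boundary $s'=1$ is degenerate already in the statement of Theorem~\ref{thm:equiv} and is inherited, not created, by your argument). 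Your explicit treatment of derandomization (converse by specializing random hashes to deterministic $f$, achievability by random selection from an $\epsilon$-almost universal$_2$ ensemble) is likewise left implicit in the paper. One small gloss, shared with the paper: your ``vanishes iff $R\le\hatR_{-s'}$ and the max is zero'' ignores the clause $R\ge\hatR_{-s'}$ of \eqref{eqn:C1s_minus}, where strict positivity of $R-H_{1-s'}$ for $R>H$ rests on $H_{1-s'}\le\hatR_{-s'}$ (a consequence of the convexity of $t\mapsto tH_{1-t}$); since the paper's proof skips the same point, this is not a genuine gap.
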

\begin{proof}
We only prove the statement for $C_{1+s}$ in \eqref{eqn:opt_key1} since that for $C_{1+s}^\uparrow$ in  \eqref{eqn:opt_key2} is completely analogous.
The case for $s\in (0,1]$ is obvious from~\eqref{eqn:C1s}  in  Theorem~\ref{thm:equiv} since the limit is $|R- H_{1+s}(A|E|P_{AE})|^+$.
%, for $s\in (0,1]$, it is obvious that the largest $R$ such that $|R-H_{1+s}(A|E|P_{AE}) |=0$ is exactly $H_{1+s}(A|E|P_{AE})$. 
Now, for the case $s\in [-1,0]$, if $R\le H(A|E|P_{AE})$,  we know from the monotonically decreasing nature  of $H_{1+s}(A|E|P_{AE})$ (in $s$) that $R-H_{1-t}(A|E|P_{AE})$ is non-positive for $t\in[0,s]$. Thus, referring to \eqref{eqn:C1s_minus} in Theorem~\ref{thm:equiv},  the optimal $t$ in the optimization $\max_{t\in [0,s]} \frac{t}{s}(R- H_{1-t}(A|E|P_{AE}) )  $ is attained at $t=0$ and consequently, the optimal objective value is $0$. On the other hand, for any $R > H(A|E|P_{AE})$,  the optimal $t\in (0,s]$ and so the optimal objective value  is (strictly) positive. Thus, for $s\in [-1,0]$, the optimal key generation rate is the Shannon conditional entropy $H(A|E|P_{AE})$.  This concludes the proof for \eqref{eqn:opt_key1}.
\end{proof}
 In Section \ref{sec:motivation}, we alluded to the importance of the collision entropy $H_2$ in cryptography and QKD. The implication of \eqref{eqn:opt_key1} in  Corollary \ref{cor:key} is that if we operate at a hashing rate $R> H_2$ and we employ the security criterion $C_2$, then there will inevitably be some residual leakage of the source $A^n$ given a hashed version $f(A^n)$ and side-information $E^n$. 
 
Because of the normalizations of $C_{1+ s}$ and $C_{1+ s}^\uparrow$ by $n$ in \eqref{eqn:opt_key1} and \eqref{eqn:opt_key2}, Corollary \ref{cor:key} is analogous to results in the vast majority of the literature in information-theoretic security~\cite{liang_book,  Bloch_book}  where the {\em weak secrecy} criterion is employed. We address the analogue of the {\em strong secrecy} criterion~\cite{Maurer:2000} in Theorem \ref{thm:exponents}  to follow where we not only demand that the unnormalized quantities $C_{1+ s}$ and $C_{1+ s}^\uparrow$ vanish with $n$, we also demand that they do so exponentially fast and we identify the exponents. 

\section{Exponential Behavior    of the Security Measures}\label{sec:exponent}
In this section,   we evaluate the exponential rates of decay of the security measures $C_{1\pm s} $ and $C_{1\pm s}^{\uparrow}$ for fixed rates $R$ above an analogue of the critical rate. 

\begin{theorem}[Exponents of the  Equivocation] \label{thm:exponents}
Let $M_n=\rme^{nR}$. Assume that  $f_{X_n}:\calA^n\to \calM_n=\{1,\ldots, M_n\}$ is a %an $\epsilon$-almost  universal$_2$ 
random hash function. For $R\ge \hatR_1$ ($\hatR_s$ being defined in \eqref{eqn:crit_rate1}), 
%\begin{equation}
%R\ge \hatR_+:=\frac{\rmd}{\rmd s}\, s H_{1+s}(A|E|P_{AE})\Big|_{s=1}  \label{eqn:crit1}
%\end{equation}
and any $s\in [0,1]$, we have 
\begin{align}
\lim_{n\to\infty}-\frac{1}{n}\log \inf_{f_{X_n}} C_{1+s}(f_{X_n}(A^n)|E^n X_n|P_{AE}^n \times P_{X_n}) &   = \left|\sup_{t\in [s,1)} tH_{1+t}(A|E|P_{AE})-tR \right|^+ ,\label{eqn:exp1} \\
\lim_{n\to\infty}-\frac{1}{n}\log \inf_{f_{X_n}}  C_{1-s}(f_{X_n}(A^n)|E^n X_n|P_{AE}^n \times P_{X_n}) &  =\max_{t\in [0,1]} tH_{1+t}(A|E|P_{AE})-tR . \label{eqn:gal_exp1}
\end{align}
For the Gallager-type counterparts of the R\'enyi quantities  and  $R\ge\hatR_1^\uparrow$ ($\hatR_s^\uparrow$ being defined in \eqref{eqn:crit_rate2}), 
%\begin{equation}
%R\ge \hatR_+^{\uparrow}:= \frac{\rmd }{\rmd s}\, sH_{1+s}^\uparrow(A|E|P_{AE})\Big|_{s=1}  \label{eqn:crit2}
%\end{equation}
and any $s\in [0,1]$, we also have 
\begin{align}
\lim_{n\to\infty}-\frac{1}{n}\log \inf_{f_{X_n}}  C_{1+s}^{\uparrow}(f_{X_n}(A^n)|E^n X_n|P_{AE}^n \times P_{X_n})    & = \left|\max_{t\in [s,1]} tH_{1+t}(A|E|P_{AE})-tR \right|^+ ,\label{eqn:exp2}\\
\lim_{n\to\infty}-\frac{1}{n}\log \inf_{f_{X_n}}  C_{1-s}^{\uparrow}(f_{X_n}(A^n)|E^n X_n|P_{AE}^n \times P_{X_n}) &=\max_{t\in [0,1]} tH_{1+t}(A|E|P_{AE})-tR.  \label{eqn:gal_exp2}
\end{align}The   infima in \eqref{eqn:exp1}--\eqref{eqn:gal_exp2}   are achieved by any sequence of $\epsilon$-almost universal$_2$ hash functions $f_{X_n}$.
\end{theorem}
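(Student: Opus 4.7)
The plan is to reduce each limit to a pair of matching one-shot inequalities from Section~\ref{sec:one-shot}, evaluate them on the i.i.d.\ product $P_{AE}^n$, and then optimize the R\'enyi parameter $t$ over a suitable sub-interval of $[0,1]$. For brevity, write $C^{(n)}_{1+t}:=C_{1+t}(f_{X_n}(A^n)|E^n X_n|P_{AE}^n\times P_{X_n})$ and let $C^{(n),\uparrow}_{1+t}$ be its Gallager-form analogue; the proof splits into achievability (upper bounds on $\inf_{f_{X_n}}$ of the security measures, hence lower bounds on the exponents) and converse.

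For achievability, fix any $t\in(0,1]$ and apply the one-shot leftover-hash inequality for $\epsilon$-almost universal$_2$ hash functions. On the $n$-fold product $P_{AE}^n$ with $M_n=\rme^{nR}$ this yields
\begin{equation*}
\rme^{tC^{(n)}_{1+t}}\;\le\;1+\epsilon\,\rme^{nt(R-H_{1+t}(A|E|P_{AE}))},
\end{equation*}
which uses the additivity $H_{1+t}(A^n|E^n|P_{AE}^n)=nH_{1+t}(A|E|P_{AE})$ together with the identity relating $D_{1+t}^{(n)}$ to an expectation over the seed $X_n$ (see the discussion around \eqref{eqn:equivo_rand}--\eqref{eqn:equivo_rand2}). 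Whenever $R<H_{1+t}(A|E|P_{AE})$, a random-selection argument combined with $\log(1+x)\le x$ gives $\inf_{f_{X_n}}C^{(n)}_{1+t}\le(\epsilon/t)\rme^{-nt(H_{1+t}(A|E|P_{AE})-R)}$, and the monotonicity $D_{1+s}\le D_{1+t}$ for $s\le t$ transfers this bound to $C^{(n)}_{1+s}$. Taking the supremum over $t\in[s,1)$ yields the lower bound matching~\eqref{eqn:exp1}; for~\eqref{eqn:gal_exp1} the enlarged range $t\in[0,1]$ is permissible because $D_{1-s}\le D_{1+t}$ holds for every $t\ge -s$. The Gallager-form statements~\eqref{eqn:exp2} and~\eqref{eqn:gal_exp2} follow by the same template, since the inequality $C^{(n),\uparrow}_{1+s}\le C^{(n)}_{1+s}$ implicit in \eqref{eqn:relate_C_H}--\eqref{eqn:relate_C_H_g} transfers the universal$_2$ bound directly to the Gallager-form quantity, and the closed right endpoint $t=1$ is admissible because the $\uparrow$-version of the one-shot bound does not degenerate there.

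For the converse, I would invoke a matching one-shot lower bound on $C_{1+s}(f(A^n)|E^n|P_{AE}^n)$ valid for \emph{every} deterministic $f:\calA^n\to\calM_n$, also established in Section~\ref{sec:one-shot}. Evaluated on the i.i.d.\ source and optimized via Legendre duality with the Gallager function implicit in~\eqref{eqn:gallager_form}, this lower bound saturates the achievability exponent. The hypothesis $R\ge\hatR_1$ (respectively $R\ge\hatR_1^\uparrow$) is exactly the regime in which the concave map $t\mapsto tH_{1+t}(A|E|P_{AE})-tR$ attains its maximum on $[s,1]$ at an interior point or at $t=1$ in the limit, so the saddlepoint matching is clean; for $R<\hatR_1$ the exponent would become linear in $R$ and the equivocation-regime picture of Theorem~\ref{thm:equiv} would take over.

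The principal obstacle will be the converse for \eqref{eqn:exp1} and \eqref{eqn:exp2}: the data-processing inequalities \eqref{eqn:dpi2}--\eqref{eqn:dpi3} yield only the first-order bound $C^{(n)}_{1+s}\ge n(R-H_{1+s}(A|E|P_{AE}))$, which captures neither the correct exponent nor even the correct sign in the relevant regime. An exponentially tight converse must instead analyze the information projection $\inf_f D_{1+s}(P_{f(A^n),E^n}\|P_{\mix}\times P_{E^n})$ over the manifold of hash-induced distributions, and this is where the two-parameter conditional R\'enyi entropy $H_{1+s|1+t}$ from~\eqref{eqn:two_param} and the tilted optimizer from \eqref{eqn:Q_tilt}--\eqref{eqn:ent_min} become essential. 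By contrast, the converses for \eqref{eqn:gal_exp1} and \eqref{eqn:gal_exp2} are easier because the monotonicity of $D_{1-s}$ in $s$ reduces the problem to a single-parameter evaluation at any $t\in[0,1]$.
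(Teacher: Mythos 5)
Your achievability half is essentially the paper's own argument and is sound: apply the one-shot hashing bound \eqref{eqn:os_direct_pluss} (with $\epsilon=1$) and, for the Gallager form, a bound of the type in Lemma~\ref{lem:os_d_2}, use additivity of $H_{1+t}$ on the product source, transfer the bound from parameter $t$ to $s$ via the monotonicity $C_{1\pm s}\le C_{1+t}$ for $t\ge \pm s$, and optimize $t$; your shortcut $C^{\uparrow}_{1+s}\le C_{1+s}$ for \eqref{eqn:exp2} also recovers the stated exponent, since by continuity of $t\mapsto tH_{1+t}-tR$ the supremum over $[s,1)$ coincides with the maximum over $[s,1]$.

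The genuine gap is the converse, which you leave as ``invoke a matching one-shot lower bound \ldots{} optimized via Legendre duality'' --- but producing and evaluating that lower bound is precisely the new content of this theorem. The paper's converse rests on the one-shot bounds of Lemma~\ref{lem:os_c_2} (e.g.\ \eqref{10-20-2b}--\eqref{10-20-2} and \eqref{10-20-2_205}), which are themselves nontrivial: they are proved by merging all symbols with $P_{A|E}(a|e)<c/M$ into one through the map $g$, invoking the data-processing inequality \eqref{eqn:dpi_rd}, and using the structural Lemma~\ref{lem:one} about optimal $f$. These bounds are then evaluated with an $n$-dependent cutoff $c=c_0\rme^{-n\hatR_s+nR}$ and Cram\'er's theorem, where the crucial point is that the optimizing tilt in the large-deviations exponent of $P_{AE}^n\{P_{A|E}^n\ge c_0\rme^{-n\hatR_s}\}$ is exactly $t=s$ by the definition of $\hatR_s$; one then needs the case split $R\ge\hatR_s$ versus $R\le\hatR_s$ together with the identity \eqref{eqn:eq_H} (this is also where the hypothesis $R\ge\hatR_1$ enters, so that the unconstrained Cram\'er maximum agrees with the maximum over $t\in[0,1]$). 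Your proposed machinery does not substitute for this: the two-parameter entropy $H_{1+s|1+t}$ and the tilted optimizer \eqref{eqn:Q_tilt} are the tools for the Gallager-form \emph{equivocation} converse in Theorem~\ref{thm:equiv}, not for the exponent converse; and your claim that the converses of \eqref{eqn:gal_exp1}--\eqref{eqn:gal_exp2} follow ``by monotonicity of $D_{1-s}$ in $s$'' points the inequality the wrong way --- monotonicity upper-bounds $C_{1-s}$ by measures with larger R\'enyi parameter, whereas a converse needs a \emph{lower} bound on $C_{1-s}$, which the paper obtains directly from \eqref{10-20-1} and \eqref{eqn:os_conv_expG} with a constant $c$ satisfying $c^{-s}<1-s$, again via Cram\'er's theorem. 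Without these ingredients the stated equalities are not established, only the achievability direction.
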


\begin{figure}
\centering
\includegraphics[width = .6\columnwidth]{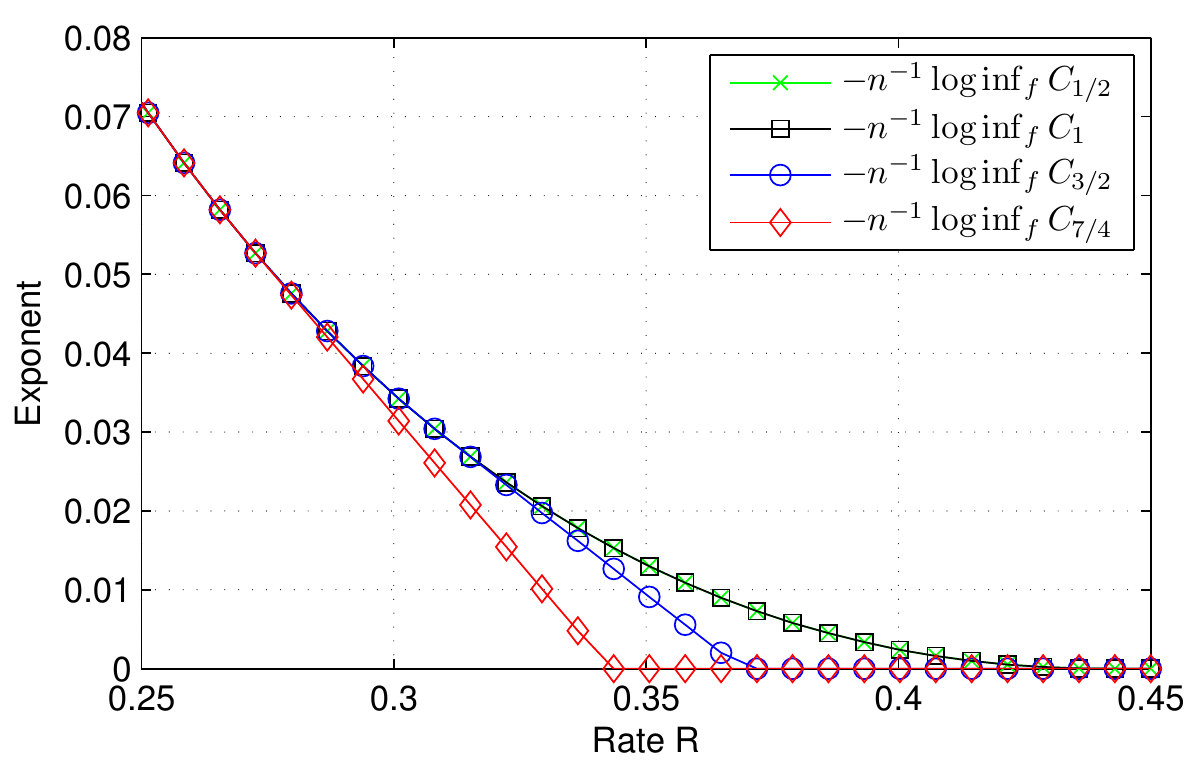}
\caption{Illustration of the exponents of the security measures  $C_{1+s}$  and $C_{1-s}$ (for $s\in [0,1]$) in \eqref{eqn:exp1}  and \eqref{eqn:gal_exp1} respectively  for the discrete memoryless multiple source $P_{AE}$ as in Fig.~\ref{fig:equivs}.  The curves for $C_{1/2}$ and $C_1$  are identical and they are equal to zero for all rates $R\ge H_1 = 0.4400$ bits per source symbol (cf.\ Corollary~\ref{cor:key}).  }
\label{fig:exp}
\end{figure}

This result is proved in Section~\ref{sec:prf_exp}. The techniques for the direct parts are somewhat similar to those in \cite{Hayashi06,Hayashi11, chou15} using  improved  versions of Bennett {\em et al.'s}~\cite{BBCM} bound which was based on the R\'enyi entropy of order 2. However, the non-asymptotic bounds (e.g., Lemma \ref{lem:os_c_2}) and asymptotic evaluations for the converse parts require new ideas.  Different from the direct part, we need to convert the evaluation of $\rme^{-sC_{1-s}}$ and $\rme^{-\frac{s}{1-s}C_{1-s}^\uparrow}$ into information spectrum~\cite{Han10} quantities (involving the conditional entropy random variable) so that is is amenable to asymptotic evaluation. These information spectrum quantities are then evaluated using various large deviation~\cite{Dembo} bounds, such as Cramer's theorem.   
 We make several other observations. 

First,  the exponents of the security indices (namely $-\frac{1}{n}\log C_{1\pm s}$ and $-\frac{1}{n}\log C_{1\pm s}^\uparrow$) are non-negative because $C_{1\pm s} \le\log| f_{X_n} (\calA^n) | =O(n)$ and $C_{1\pm s}^\uparrow \le\log| f_{X_n} (\calA^n) | =O(n)$ (cf.\ their definitions in \eqref{eqn:relate_C_H} and \eqref{eqn:relate_C_H_g}). The expressions in \eqref{eqn:gal_exp1} and \eqref{eqn:gal_exp2} are already nonnegative and so we only need to include the $| \fndot |^+$ operation for \eqref{eqn:exp1} and \eqref{eqn:exp2}.

Second, the derivative of the conditional R\'enyi entropies $\hatR_1$ and $\hatR_1^\uparrow$  are the analogues of the critical rate in error exponent analysis \cite{gallagerIT,Csi97}. For the exponents, we have a complete characterization of the exponential rates of decay of both $ C_{1\pm s}$ and $ C_{1\pm  s}^\uparrow$  for $s\in [0,1]$ and they are given by optimization of quantities that are related to the conditional R\'enyi entropy. We observe that  the Gallager form results in larger exponents in general as the optimizations in  \eqref{eqn:gal_exp1} and \eqref{eqn:gal_exp2} are larger than their non-Gallager counterparts in \eqref{eqn:exp1} and \eqref{eqn:exp2} respectively. 

Finally, the exponents in  \eqref{eqn:exp1}  and \eqref{eqn:gal_exp1} of Theorem \ref{thm:exponents} are illustrated in Fig.~\ref{fig:exp}.    We observe the same behavior for the exponents of the Gallager forms in~\eqref{eqn:exp2} and~\eqref{eqn:gal_exp2} since the  expressions  are the same and so we omit these cases. We note (from the plot and from direct evaluations) that the zero-crossings for the exponents of $C_{1/2}, C_1 , C_{3/2}$ and $C_{7/4}$ occur at $H_1, H_1, H_{3/2}$ and $H_{7/4}$ respectively ($H_1$ being the Shannon entropy). This is in line with Corollary~\ref{cor:key}. Indeed, the exponent being positive implies that the normalized security measures $\frac{1}{n} C_{1\pm s}$  and $\frac{1}{n} C_{1\pm s}^\uparrow$ vanish   as   blocklength grows.  Thus, we conclude that the optimal key generation rates under both the strong and weak secrecy criteria are the same.

\section{Second-Order Asymptotics } \label{sec:2nd}
 In the previous sections, the security measures in terms of equivocations and their logarithms were normalized by the blocklength $n$. In this section, we study different normalizations, e.g., by $\sqrt{n}$. In addition, we examine the effect of changing the size of the hash function $M_n$ from $\rme^{nR}$ (considered in Sections~\ref{sec:equiv} and \ref{sec:exponent})  to  $\rme^{nR +\sqrt{n}L}$, where $L\in\bbR$ is an arbitrary real number.

 \subsection{Basic Definitions}
  To present our results, we first define the following important quantities.
 
\begin{definition}
Given a discrete joint source $P_{AE} \in\calP(\calA\times\calE)$, define the {\em conditional varentropy}~\cite{verdu14} or  {\em conditional source dispersion}~\cite{kost12, TK12c} to be 
\begin{equation}
 V(A|E|P_{AE}) := \sum_{a,e}P_{AE}(a,e) \big(\log P_{A|E}(a|e) + H(A|E|P_{AE}) \big)^2. \label{eqn:dispersion}
 \end{equation} 
 We also define the following variants  of the conditional varentropy
 \begin{align}
 V_1(A|E|P_{AE}) & := \sum_e P_E(e) \big( H(A|E|P_{AE}) -H(A|P_{A|E=e}) \big)^2  \label{eqn:V1_def}\\
  V_2(A|E|P_{AE}) & :=V(A|E|P_{AE}) -V_1(A|E|P_{AE})  \label{eqn:sum_vars} \\
  &=\sum_{a,e} P_{AE}(a,e) \big(\log P_{A|E}(a|e) + H(A|P_{A|E=e})\big)^2.\label{eqn:V2_def}
 \end{align}
\end{definition}
One can  readily  check that $V=V_1+V_2$ from the definitions. This also  follows immediately  from the law of total variance. Let 
\begin{equation}
\Phi(t) :=\frac{1}{\sqrt{2\pi}} \int_{-\infty}^t \rme^{-u^2/2}\, \rmd u
\end{equation}
be the cumulative distribution function of the standard Gaussian random variable. 
With these definitions, we are ready to state our results on the second-order asymptotics for the security measures $C_{1+s}$ and $C_{1+s}^\uparrow$ which are simple functions of the  equivocation $H_{1+s}$ and $H_{1+s}^\uparrow$ respectively.  Note that the second-order analysis of $C_1$  (corresponding to the $s = 0$ case) with  no side information (i.e., $E = \emptyset$) was performed in  Hayashi's work~\cite[Theorem 8]{Hayashi08} in the context of intrinsic randomness based on the relative entropy (Kullback-Leibler divergence) criterion.  The other results in Theorems~\ref{thm:second} and~\ref{thm:second_large} are novel.

% of $C_1$ ($s = 0$ case), which builds on Hayashi's work~\cite[Theorem 8]{Hayashi08},%, Note that apart from the second-order analysis of  $C_1$ ($s=0$ case), which builds on Hayashi's work   with no side information~\cite[Theorem 8]{Hayashi08} (i.e., $E=\emptyset$), 

To state our result succinctly, we define the quantities which all depend on $s$, $L$ and $P_{AE}$ (but we suppress the dependence on the fixed joint distribution $P_{AE}$ for brevity):
\begin{align}
\Gamma_1(s,L)&:= -\frac{1}{s}\log \left(2^{\frac{s}{1-s}}s^{\frac{s}{1-s}} (1-s) \right)-\frac{1}{s} \log \Phi\bigg( -\frac{L}{\sqrt{V(A|E|P_{AE}) } }\bigg) ,\label{eqn:defA}\\
\Gamma_2(s,L)&:= -\frac{1-s}{s }\log\Phi\bigg( -\frac{L}{\sqrt{V(A|E|P_{AE}) } }\bigg) ,\label{eqn:defB} \\
\Psi_1(s,L)&:= -\frac{1}{s}\log \left(2^{s+ \frac{s}{1-s}}   s^{\frac{s}{1-s}} (1-s) \right) \nn\\*
&\qquad\qquad-\frac{1-s}{s}\log \int_{-\infty}^{\infty} \Phi\bigg(- \frac{L+x}{\sqrt{ V_2(A|E|P_{AE})} }\bigg)^{\frac{1}{1-s}} \frac{ \rme^{-x^2/(2 V_1(A|E|P_{AE}))}}{\sqrt{2\pi V_1(A|E|P_{AE})}}\,\rmd x, \label{eqn:defD}\\
\Psi_2(s,L)&:=-\frac{1-s}{s}\log \Phi\bigg( -\frac{L}{\sqrt{V(A|E|P_{AE}) } } \bigg). \label{eqn:defE}
\end{align} 

\subsection{Bounds on the  Second-Order Asymptotics } 
\begin{theorem}[Second-Order Asymptotics] \label{thm:second}
  Assume that  $f_{X_n}:\calA^n\to \calM_n=\{1,\ldots, M_n\}$ is a %an $\epsilon$-almost  universal$_2$ 
random hash function. Consider the following three cases:
\begin{itemize}
\item Case (A): $\alpha=1+s$ with $s\in (0,1]$: Suppose that the number of messages $M_n = \rme^{n H_{1+s}(A|E|P_{AE}) + \sqrt{n} L }$ or $M_n = \rme^{n H_{1+s}^\uparrow(A|E|P_{AE}) + \sqrt{n} L }$. When $L\ge 0$, we have 
\begin{align}
\lim_{n\to\infty}\frac{1}{\sqrt{n}}\inf_{f_{X_n}}    C_{1+s}(f_{X_n}(A^n)|E^n X_n|P_{AE}^n \times P_{X_n}) &=L \label{eqn:2order1}\\
\lim_{n\to\infty}\frac{1}{\sqrt{n}}\inf_{f_{X_n}}    C_{1+s}^\uparrow(f_{X_n}(A^n)|E^n X_n|P_{AE}^n \times P_{X_n})& =L. \label{eqn:2order1_g}
\end{align}
Similarly to Theorem~\ref{thm:equiv}, the infima in \eqref{eqn:2order1} and \eqref{eqn:2order1_g} are achieved by any sequence of $\epsilon$-almost universal$_2$ hash functions $f_{X_n}$.

When $L\le 0$, we have 
\begin{align}
\lim_{n\to\infty}-\frac{1}{\sqrt{n}}\log\inf_{f_{X_n}}     C_{1+s}(f_{X_n}(A^n)|E^n X_n|P_{AE}^n \times P_{X_n}) &=-sL \label{eqn:2order1_Lneg}\\
\lim_{n\to\infty}-\frac{1}{\sqrt{n}}\log\inf_{f_{X_n}}  C_{1+s}^\uparrow(f_{X_n}(A^n)|E^n X_n|P_{AE}^n \times P_{X_n})& =-sL. \label{eqn:2order1_g_Lneg}
\end{align}
Similarly to Theorem~\ref{thm:exponents}, the infima in \eqref{eqn:2order1_Lneg} and \eqref{eqn:2order1_g_Lneg} are achieved by any sequence of $\epsilon$-almost  universal$_2$ hash functions $f_{X_n}$.
\item Case (B):   $\alpha=1$ (i.e., $s=0$):  Suppose that $M_n = \rme^{n H  (A|E|P_{AE}) + \sqrt{n} L }$ for some $L\in\bbR$, we have 
\begin{equation}
\lim_{n\to\infty}\frac{1}{\sqrt{n}}\inf_{f_{X_n}}    C_{1}(f_{X_n}(A^n)|E^n X_n|P_{AE}^n \times P_{X_n})  =\int_{-\infty}^{L/\sqrt{ V(A|E|P_{AE})}} \frac{L-\sqrt{V(A|E|P_{AE})}x}{\sqrt{2\pi}} \rme^{-x^2/2}\, \rmd x.\label{eqn:caseB}
\end{equation}
By \eqref{eqn:limit_s0}, the same asymptotic behavior also holds true for the Gallager version of the security measure $C_{1}^\uparrow(f_{X_n}(A^n)|E^n X_n|P_{AE}^n \times P_{X_n})$. Similarly to Theorem~\ref{thm:equiv}, the infima in \eqref{eqn:caseB} is achieved by any sequence of $\epsilon$-almost universal$_2$ hash functions $f_{X_n}$.
 
\item Case (C):  $\alpha=1-s$  with $s\in (0,1]$:   Suppose that $M_n = \rme^{n H (A|E|P_{AE}) + \sqrt{n} L }$ for some $L\in\bbR$, we have 
\begin{align}
\max\left\{ \Gamma_1(s,L),\Gamma_2(s,L) \right\} &\le\liminf_{n\to\infty}  \inf_{f_{X_n}}    C_{1-s}(f_{X_n}(A^n)|E^n X_n|P_{AE}^n \times P_{X_n}) \nn\\*
&\le\limsup_{n\to\infty}  \inf_{f_{X_n}}    C_{1-s}(f_{X_n}(A^n)|E^n X_n|P_{AE}^n \times P_{X_n})  \le \frac{\Gamma_2(s,L)}{1-s}  . \label{eqn:caseC1}
\end{align}
In addition, for the Gallager-type counterparts, with $M_n = \rme^{n H (A|E|P_{AE}) + \sqrt{n} L }$ for some $L\in\bbR$, we also have 
\begin{align}
&\max \{\Psi_1(s,L),\Psi_2(s,L)\}\nn\\*
&\qquad\le\liminf_{n\to\infty}  \inf_{f_{X_n}}    C_{1-s}^\uparrow(f_{X_n}(A^n)|E^n X_n|P_{AE}^n \times P_{X_n}) \nn\\*
&\qquad\le\limsup_{n\to\infty}  \inf_{f_{X_n}}    C_{1-s}^\uparrow(f_{X_n}(A^n)|E^n X_n|P_{AE}^n \times P_{X_n}) \nn\\*
&\qquad\qquad\le \Psi_1(s,L)+\frac{1}{s}\log \left(2^{s+ \frac{1}{1-s}}   s^{\frac{s}{1-s}} (1-s) \right) .  \label{eqn:caseC2}
\end{align}
The upper bounds in \eqref{eqn:caseC1} and \eqref{eqn:caseC2} are achieved by any sequence of $\epsilon$-almost universal$_2$ has functions $f_{X_n}$. 
\end{itemize}
\end{theorem}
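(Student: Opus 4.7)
The plan is to combine three ingredients: (i) the non-asymptotic (one-shot) bounds for $\epsilon$-almost universal$_2$ hash functions collected in Section~\ref{sec:one-shot}; (ii) the data-processing inequalities \eqref{eqn:dpi2}--\eqref{eqn:dpi3} for the converse parts; and (iii) classical limit theorems (central limit theorem and Cram\'er's large-deviation theorem) to evaluate the one-shot bounds asymptotically under the scaling $M_n = \rme^{nR+\sqrt{n}L}$. The structure follows the general template of second-order / exponent refinements: a non-asymptotic bound reduces the problem to controlling an expectation of a function of an i.i.d.\ sum, and the CLT or Cram\'er's theorem supplies the sharp constant.

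For Case (A) with $L\ge 0$, where $R=H_{1+s}(A|E|P_{AE})$ or $R=H_{1+s}^{\uparrow}(A|E|P_{AE})$, the converse is immediate from the data processing inequality: $H_{1+s}(f_{X_n}(A^n)|E^n X_n|P_{AE}^n\times P_{X_n})\le H_{1+s}(A^n|E^n|P_{AE}^n)=nH_{1+s}(A|E|P_{AE})$ by tensorization and DPI, so $C_{1+s}\ge \sqrt{n}L$. For achievability, I would invoke a one-shot leftover-hash-type bound for $H_{1+s}$ under $\epsilon$-almost universal$_2$ hashing to show $H_{1+s}(f_{X_n}(A^n)|E^n X_n|P_{AE}^n\times P_{X_n}) = nH_{1+s}(A|E|P_{AE})+o(\sqrt{n})$, giving the matching upper bound. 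The Gallager version is entirely analogous using the $H_{1+s}^{\uparrow}$ one-shot bound. Case (B) is the classical second-order problem: a conditional CLT applied to $-\log(P_{A^nE^n}/P_{E^n})$, which has mean $nH(A|E|P_{AE})$ and variance $nV(A|E|P_{AE})$, together with the leftover-hash bound, yields the Gaussian expectation $\bbE[|L-\sqrt{V}\,Z|^+]$ with $Z\sim\calN(0,1)$, i.e., the integral in~\eqref{eqn:caseB}.

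For Case (A) with $L\le 0$ the first-order equivocation rate vanishes and the correct scale is exponential. I would refine the proof of Theorem~\ref{thm:exponents}: the exponent $\sup_{t\in[s,1)} t[H_{1+t}(A|E|P_{AE})-R_n]$ with $R_n=H_{1+s}(A|E|P_{AE})+L/\sqrt{n}$ is dominated near $t=s$ because $\hatR_s<H_{1+s}(A|E|P_{AE})$ (a consequence of strict concavity of $t\mapsto tH_{1+t}$); the value at $t=s$ equals $-sL/\sqrt{n}$, producing the normalized limit $-sL$. A Taylor expansion of $t\mapsto tH_{1+t}$ at $s$ shows that perturbing $t$ away from $s$ costs an $O((t-s)\,L/\sqrt{n})$ correction that is suboptimal to leading order, which establishes both the achievability and the converse at this exponential rate.

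Case (C) is the most delicate, and I expect the main obstacle to lie here---particularly for the Gallager formulation. The rate $R=H(A|E|P_{AE})$ is strictly below both $H_{1-s}(A|E|P_{AE})$ and $H_{1-s}^{\uparrow}(A|E|P_{AE})$, so neither a pure second-order nor a pure exponent analysis applies in isolation. For the upper bound on $C_{1-s}$, starting from $\rme^{sC_{1-s}}$ and applying the relevant one-shot universal$_2$ bound, one splits the underlying log-likelihood random variable into its Gaussian fluctuation around $nH(A|E|P_{AE})$ (CLT regime) and a tail region, producing $\Gamma_2(s,L)/(1-s)$. The lower bound combines two complementary strategies: a DPI-plus-CLT computation yielding $\Gamma_2(s,L)$, and a Markov / information-spectrum argument yielding $\Gamma_1(s,L)$; taking the maximum gives~\eqref{eqn:caseC1}. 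For the Gallager form $C_{1-s}^{\uparrow}$ one first optimizes over the auxiliary measure $Q_{E^n}$ via \eqref{eqn:ent_min}; the tilted distribution \eqref{eqn:Q_tilt} induces a variance contribution $V_1$ from the variation of $H(A|P_{A|E=e})$ across $e$, while the conditional fluctuation given $E^n$ contributes $V_2$. The double Gaussian integral defining $\Psi_1(s,L)$ emerges from a conditional CLT given $E^n$ followed by integration against the marginal CLT for the $E^n$ component. The residual gap between the upper and lower bounds in~\eqref{eqn:caseC1}--\eqref{eqn:caseC2} reflects a genuine slackness inherent to these techniques and motivates the large-$|L|$ asymptotic regime treated in Theorem~\ref{thm:second_large}.
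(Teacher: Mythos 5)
Your overall architecture (achievability from the one-shot universal$_2$ bounds, converses from data-processing plus one-shot converse bounds, with CLT/Cram\'er evaluations) is the same as the paper's, and your treatment of Case (A) with $L\ge 0$ and of Case (C) matches the paper's proof in substance, including the $V_1$/$V_2$ split behind $\Psi_1(s,L)$. However, two of your steps would not go through as written.

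First, the converse of \eqref{eqn:2order1_Lneg}--\eqref{eqn:2order1_g_Lneg} cannot be obtained by ``refining'' Theorem~\ref{thm:exponents} with a Taylor expansion of $t\mapsto tH_{1+t}$. At $R=H_{1+s}(A|E|P_{AE})$ the exponent in Theorem~\ref{thm:exponents} is exactly zero on the $n$-scale, so that statement carries no information at the $\sqrt{n}$-scale, and its converse proof is a fixed-rate argument that does not transfer to $R_n=H_{1+s}(A|E|P_{AE})+L/\sqrt{n}$ without redoing the analysis. The paper instead lower-bounds $\rme^{sC_{1+s}}$ for an \emph{arbitrary} $f$ via the one-shot bound \eqref{10-20-2b}, keeps the tilted sum $\sum_{(\ba,\be):P_{A|E}^n\ge c/M_n}P_E^n P_{A|E}^{n\,1+s}M_n^{s}$, whose unrestricted value is exactly $\rme^{s\sqrt{n}L}$, and then uses a $\sqrt{n}$-normalized (modified G\"artner--Ellis / moderate-deviation) evaluation to show the restriction and the residual probability terms perturb this only by quantities exponentially small in $n$, hence negligible against $\rme^{s\sqrt{n}L}$. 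Your sketch contains no converse ingredient at this scale, so this direction is a genuine gap (and the same applies to the Gallager version via \eqref{10-20-2_205}).

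Second, your one-line achievability for Case (B) (``conditional CLT plus the leftover-hash bound'') fails if applied to the raw product source: with $M_n=\rme^{nH(A|E|P_{AE})+\sqrt{n}L}$ the collision-entropy bound gives $M_n\rme^{-nH_2(A|E|P_{AE})}\to\infty$ because $H_2(A|E|P_{AE})<H(A|E|P_{AE})$ under the paper's standing assumptions, so the bound is vacuous. The indispensable idea in the paper is the flattening step \eqref{eqn:mixture}--\eqref{eqn:cvx_div2}: by convexity of relative entropy, $C_1$ is bounded by an average over joint types of $C_2$ computed for the uniform distribution on each type class, where the conditional collision entropy equals the empirical conditional entropy up to $O(\log n)$; only then does the CLT over types yield the Gaussian integral in \eqref{eqn:caseB} (an information-spectrum slicing would serve the same purpose, but some such device is needed and your proposal names none). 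Relatedly, the Case (B) converse is not automatic: the paper adapts the intrinsic-randomness converse of \cite{Hayashi08} conditionally on $E^n=\be$ and removes the conditioning by Jensen's inequality, a direction your proposal leaves unaddressed.
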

This result is proved in Section~\ref{sec:prf_sec}.  We remark that the converse parts  (lower bounds) to \eqref{eqn:2order1}--\eqref{eqn:2order1_g} hold for all $s\ge 0$ (and not only being upper bounded by $1$) owing to the data processing inequalities in \eqref{eqn:dpi2}--\eqref{eqn:dpi3}.

\begin{figure}[t]
\centering
\includegraphics[width = .6\columnwidth]{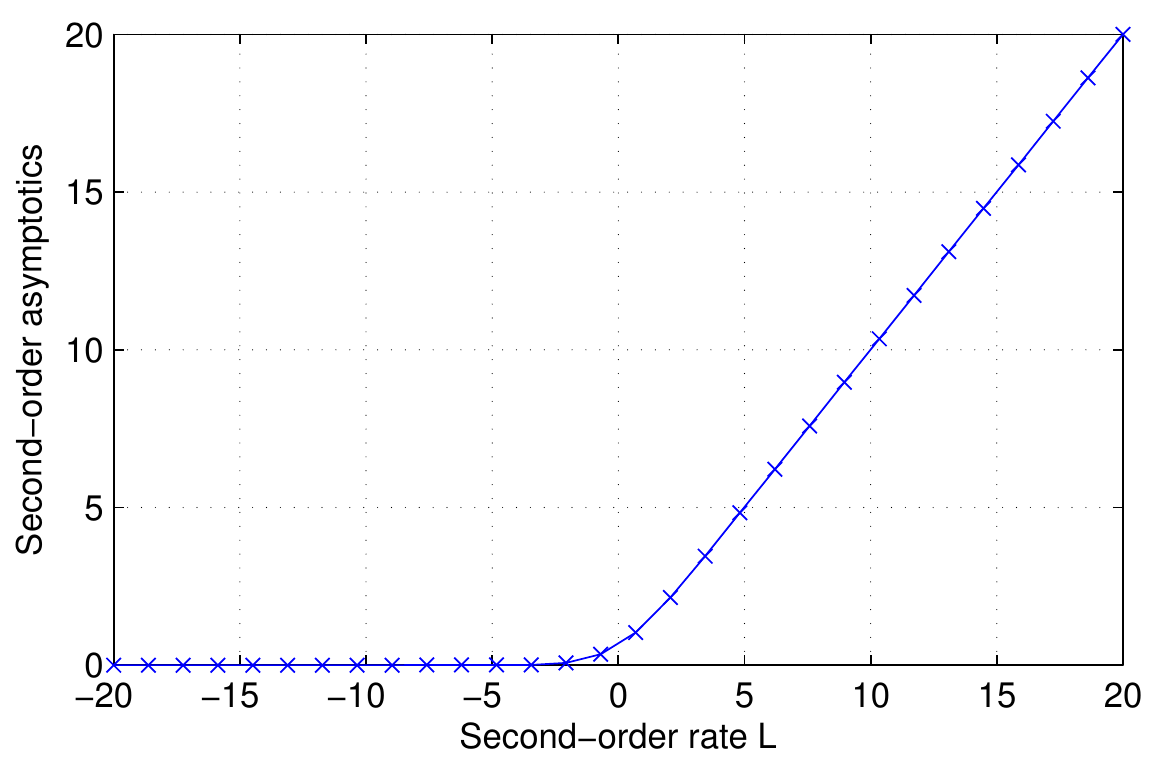}
\caption{Illustration of the second-order asymptotics in Case (B) given by the right-hand-side of~\eqref{eqn:caseB}   for the discrete memoryless multiple source $P_{AE}$ as in Fig.~\ref{fig:equivs}. It is easy to see that the integral there is non-negative. }
\label{fig:sec}
\end{figure} 
 
\subsection{Remarks on Theorem \ref{thm:second}} 
Observe that in Theorem \ref{thm:second} (Case (A) for instance), the number of compressed symbols $M_n$ satisfies
\begin{align}
\log M_n &= n H_{1+s}(A|E|P_{AE}) + \sqrt{n} L, \qquad\mbox{or}, \label{eqn:sec_ord}\\
\log M_n & = n H_{1+s}^\uparrow(A|E|P_{AE}) + \sqrt{n} L.\label{eqn:sec_ord1}
\end{align}
The leading conditional R\'enyi entropy terms  scaling in $n$ are known as the first-order terms, while the terms scaling as $\sqrt{n}$ are known as the second-order terms. The coefficient  $L$ is known as the {\em second-order coding rate}~\cite{Hayashi08,Hayashi09,TanBook} and the second-order asymptotic characterizations depend on $L$.  Note that even though $L$ is termed as the second-order coding {\em rate}, it may be negative. Observe that the conditional varentropies appear in \eqref{eqn:caseB}--\eqref{eqn:caseC2}, which suggests that we evaluate the one-shot bounds using the central limit theorem among other techniques. We have tight results (equalities) for Cases (A) and (B) but unfortunately  not for Case (C) where  the R\'enyi parameter $\alpha=1-s$ for $s\in (0,1]$. However, in the limit of the second-order coding rate $L$ being large (either in the positive or negative direction), we can assert that one of the terms in the maxima in the lower bounds of  \eqref{eqn:caseC1} and \eqref{eqn:caseC2} dominates and matches the upper bound and hence, we have a tight result up to the term in $L^2$ (Theorem~\ref{thm:second_large}). We now comment specifically on each of the cases.  

\begin{enumerate}
\item For Case (A), the second-order asymptotic behaviors of  $C_{1+s}$ and $C_{1+s}^\uparrow$ when they are  normalized by $\frac{1}{\sqrt{n}}$ are linear in $L$.
\item The same is true for Case (B) for large positive $L$ because with $V:= V(A|E|P_{AE})$, 
\begin{align}
\lim_{L\to\infty}\frac{1}{L} \cdot \int_{-\infty}^{L/\sqrt{ V }} \frac{L-\sqrt{V }x}{\sqrt{2\pi}} \rme^{-x^2/2}\, \rmd x=\lim_{L\to\infty}\left\{ \Phi\Big( \frac{L}{\sqrt{V}}\Big)-\frac{\sqrt{V}}{L} \int_{-\infty}^{L/\sqrt{V}}\frac{x\rme^{-x^2/2}}{\sqrt{2\pi}}\,\rmd x\right\}=1.
\end{align}
In contrast, when  $L\to-\infty$ in Case (B),  the limit is zero.   The second-order asymptotics in Case (B) in~\eqref{eqn:caseB} is shown in Fig.~\ref{fig:sec} and is obtained via numerical integration to approximate the integral. The limit in~\eqref{eqn:caseB} is monotonically increasing in $L$. This is intuitive because as $L$ increases, there is potentially more leakage to $E^n$ and less uniformity on the  (larger) support $\{1,\ldots, \rme^{n H (A|E|P_{AE}) + \sqrt{n} L }\}$. 
\item  For Case (C) there is no normalization by $\frac{1}{\sqrt{n}}$ and we only have bounds. However, for    large $|L|$, we will see from Theorem~\ref{thm:second_large} that the second-order asymptotic behavior is quadratic in $L$.  The bounds on the  second-order asymptotics in the two parts (conditional R\'enyi entropy and its Gallager version) of Case (C) in~\eqref{eqn:caseC1}  and~\eqref{eqn:caseC2} are shown in Figs.~\ref{fig:sec_caseC} and~\ref{fig:sec_caseC2} respectively. 
\end{enumerate}
 We conclude that in the second-order asymptotic regime where the number of compressed symbols satisfies \eqref{eqn:sec_ord}--\eqref{eqn:sec_ord1}, there are distinct differences between the three regimes of the R\'enyi parameter $\alpha \in [0,1)$, $\alpha=1$ and $\alpha \in (1,2]$.

\begin{figure}[t]
\centering
\includegraphics[width = .85\columnwidth]{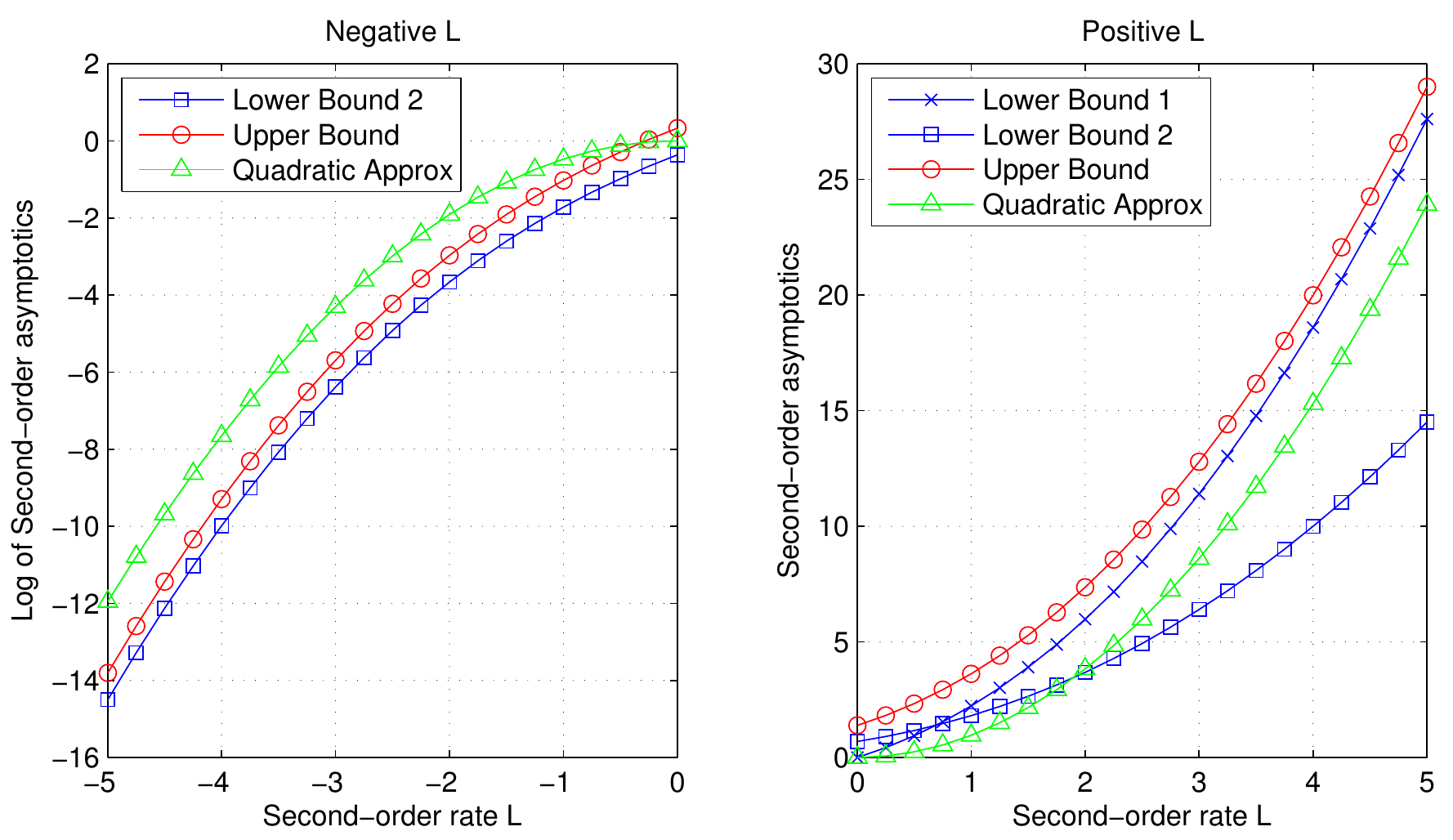}
\caption{Illustration of the bounds on the second-order asymptotics of $C_{1/2}(f_{X_n}(A^n)|E^n X_n|P_{AE}^n \times P_{X_n})$ (i.e., $s=1/2$) in Case (C) given by the left- and right-hand-sides of~\eqref{eqn:caseC1}   for the same source. Note that the figure on the left is plotted in log scale (corresponding to  \eqref{eqn:L-inf}) while the figure on the right is plotted in linear scale.  For $L\le 0$,  lower bound 1 in \eqref{eqn:caseC1} is negative (lower bound 2 dominates) so is not shown in the left plot. Observe the quadratic behaviors; this is corroborated by Theorem~\ref{thm:second_large}. The quadratic approximations in \eqref{eqn:Linf} and \eqref{eqn:L-inf} (without the $O(\log L)$ terms) are also plotted. Observe that there is a constant offset between the quadratic and the bounds as we do not determine the $O(\log L)$ terms in Theorem \ref{thm:second_large} exactly. }
\label{fig:sec_caseC}
\end{figure} 
 
% The second-order asymptotics in Case (B) in~\eqref{eqn:caseB} is shown in Fig.~\ref{fig:sec} and is obtained via numerical integration to approximate the integral. Notice that the limit in~\eqref{eqn:caseB} is monotonically increasing in $L$. This is intuitive because as $L$ increases, there is potentially more leakage to $E^n$ and less uniformity on the  (larger) support $\{1,\ldots, \rme^{n H (A|E|P_{AE}) + \sqrt{n} L }\}$. The bounds on the  second-order asymptotics in the two parts (conditional R\'enyi entropy and its Gallager version) of Case (C) in~\eqref{eqn:caseC1}  and~\eqref{eqn:caseC2} are shown in Figs.~\ref{fig:sec_caseC} and \ref{fig:sec_caseC2}. The behaviour for large $|L|$ is as predicted by Theorem~\ref{thm:second_large} to follow.

\begin{figure}[t]
\centering
\includegraphics[width = .85\columnwidth]{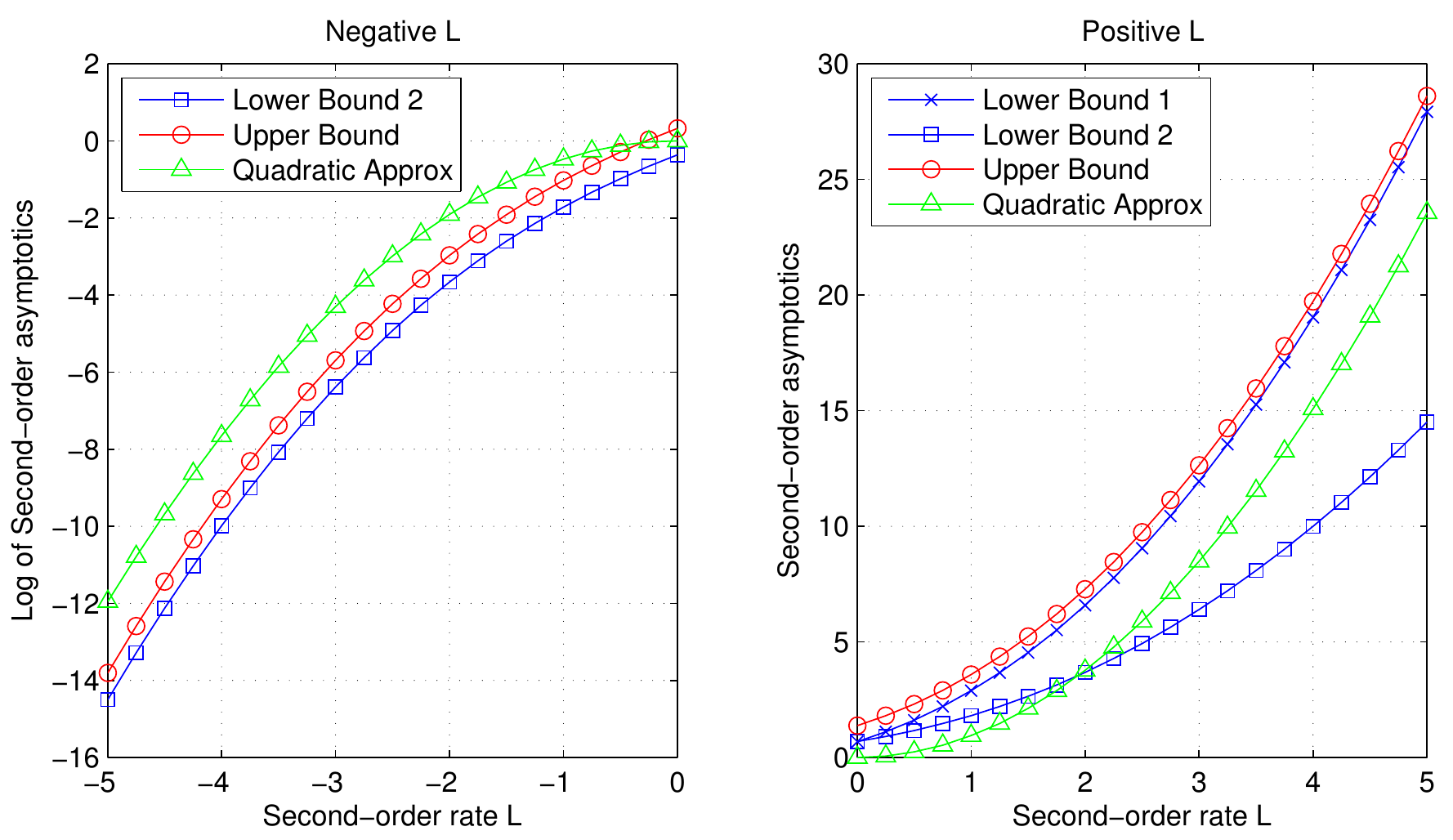}
\caption{Illustration of the bounds on the second-order asymptotics of $C_{1/2}^\uparrow(f_{X_n}(A^n)|E^n X_n|P_{AE}^n \times P_{X_n})$ (i.e., $s=1/2$) in Case (C) given by the left- and right-hand-sides of~\eqref{eqn:caseC2}   for the same source. For $L\le 0$,  lower bound 1 in \eqref{eqn:caseC2} is negative (lower bound 2 dominates) so is not shown in the left plot. Observe the quadratic behaviors--this is corroborated by \eqref{eqn:Linf_g} and \eqref{eqn:L-inf_g} in Theorem~\ref{thm:second_large}. These plots are obtained by using numerical integration to calculate the integral in $\Psi_1(s,L)$ in~\eqref{eqn:caseC2}.}
\label{fig:sec_caseC2}
\end{figure} 

\subsection{Approximations for Large Second-Order Coding Rates}

Since for Case (C) we only have bounds, we now examine the behavior of the bounds in the limit of  large $|L|$ for which we can show  tight results up to the quadratic terms. 
\begin{theorem}[Large Second-Order Rates]\label{thm:second_large}
  Assume that  $f_{X_n}:\calA^n\to \calM_n=\{1,\ldots, M_n\}$ is a %an $\epsilon$-almost  universal$_2$ 
random hash function. For Case (C) in Theorem \ref{thm:second} (R\'enyi parameter $\alpha=1-s$ where $s\in (0,1]$), we have the following asymptotic results as $L\to\infty$:
\begin{align}
\lim_{n\to\infty}  \inf_{f_{X_n}}    C_{1-s}(f_{X_n}(A^n)|E^n X_n|P_{AE}^n \times P_{X_n})  &=\frac{L^2}{2s V (A|E|P_{AE}) } + O(\log L) , \label{eqn:Linf} \\*
\lim_{n\to\infty}  \inf_{f_{X_n}}    C_{1-s}^\uparrow(f_{X_n}(A^n)|E^n X_n|P_{AE}^n \times P_{X_n})  &= \frac{1-s}{2s}\cdot \frac{L^2}{ V_1(A|E|P_{AE}) +  V_2(A|E|P_{AE})(1-s)} +O(\log L).\label{eqn:Linf_g}
\end{align}
Furthermore, we have the following asymptotic results as $L\to - \infty$:
\begin{align}
\lim_{n\to\infty} \log \inf_{f_{X_n}}    C_{1-s}(f_{X_n}(A^n)|E^n X_n|P_{AE}^n \times P_{X_n})  &= -\frac{L^2}{2V(A|E|P_{AE}) }+O(\log |L|) , \label{eqn:L-inf} \\*
\lim_{n\to\infty} \log \inf_{f_{X_n}}    C_{1-s}^\uparrow(f_{X_n}(A^n)|E^n X_n|P_{AE}^n \times P_{X_n})  &= -\frac{L^2}{2V(A|E|P_{AE}) }+O(\log |L|)  \label{eqn:L-inf_g}  .
\end{align}
The infima in \eqref{eqn:Linf}--\eqref{eqn:L-inf_g} are achieved by any sequence of $\epsilon$-almost universal$_2$ hash functions $f_{X_n}$.
\end{theorem}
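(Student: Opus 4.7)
The plan is to derive Theorem \ref{thm:second_large} by evaluating the upper and lower bounds of Theorem \ref{thm:second} Case (C), namely \eqref{eqn:caseC1} and \eqref{eqn:caseC2}, asymptotically as $|L|\to\infty$. The essential analytic inputs are (i) the Gaussian tail expansion $\Phi(-u)=(u\sqrt{2\pi})^{-1}e^{-u^{2}/2}(1+O(u^{-2}))$ for large $u>0$, which gives $-\log\Phi(-u)=u^{2}/2+\log u+O(1)$; (ii) Laplace's method for the Gaussian convolution appearing in $\Psi_{1}$; and (iii) the comparison inequality $C_{1-s}^{\uparrow}\le C_{1-s}$, which follows from $H_{1-s}^{\uparrow}\ge H_{1-s}$ together with \eqref{eqn:relate_C_H}--\eqref{eqn:relate_C_H_g}. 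The comparison inequality lets us transfer the non-Gallager upper bound to the Gallager security measure whenever the direct bound in \eqref{eqn:caseC2} turns out to be too weak.

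For the $L\to+\infty$ regime, substituting the tail expansion into \eqref{eqn:defA}--\eqref{eqn:defB} shows that both the lower bound contribution $\Gamma_{1}(s,L)$ and the upper bound $\Gamma_{2}(s,L)/(1-s)$ in \eqref{eqn:caseC1} reduce to $L^{2}/[2sV(A|E|P_{AE})]+O(\log L)$ at leading order, which establishes \eqref{eqn:Linf}. The central computation for \eqref{eqn:Linf_g} is the Laplace-type analysis of the integral
\begin{equation*}
\int_{-\infty}^{\infty}\Phi\bigl(-(L+x)/\sqrt{V_{2}(A|E|P_{AE})}\bigr)^{1/(1-s)}\,\frac{e^{-x^{2}/(2V_{1}(A|E|P_{AE}))}}{\sqrt{2\pi V_{1}(A|E|P_{AE})}}\,\rmd x.
\end{equation*}
Writing $V_{i}=V_{i}(A|E|P_{AE})$ for brevity, the tail expansion raises the exponent to $-(L+x)^{2}/[2(1-s)V_{2}]$; combining with $-x^{2}/(2V_{1})$ and completing the square, the saddle point is $x^{\ast}=-LV_{1}/[V_{1}+(1-s)V_{2}]$ with saddle value $-L^{2}/[2(V_{1}+(1-s)V_{2})]$. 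Hence $\Psi_{1}(s,L)=\tfrac{(1-s)L^{2}}{2s[V_{1}+(1-s)V_{2}]}+O(\log L)$, and since the upper bound in \eqref{eqn:caseC2} differs from $\Psi_{1}$ only by an $O(1)$ constant, \eqref{eqn:Linf_g} follows.

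For the $L\to-\infty$ regime, the argument of $\Phi(-L/\sqrt{V})$ is large positive, and one uses $1-\Phi(-L/\sqrt V)=\Phi(L/\sqrt V)=(|L|\sqrt{2\pi/V})^{-1}e^{-L^{2}/(2V)}(1+o(1))$ together with $-\log(1-\epsilon)=\epsilon+O(\epsilon^{2})$ for small $\epsilon$. This yields $\Gamma_{2}(s,L)/(1-s)=\tfrac{1}{s}\Phi(L/\sqrt V)(1+o(1))$, whose logarithm is $-L^{2}/(2V)+O(\log|L|)$, proving \eqref{eqn:L-inf}. To obtain \eqref{eqn:L-inf_g}, I pass the same upper bound through $C_{1-s}^{\uparrow}\le C_{1-s}$; the matching lower bound on $\log C_{1-s}^{\uparrow}$ comes from the $\Psi_{2}$ component of the lower bound in \eqref{eqn:caseC2}, which also behaves as $\Phi(L/\sqrt V)$ and hence logarithmically as $-L^{2}/(2V)+O(\log|L|)$.

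The principal obstacle is the Laplace evaluation of the Gaussian convolution in $\Psi_{1}$: the integrand is not a pure Gaussian because of the $\Phi^{1/(1-s)}$ factor, so one must carefully control the sub-exponential prefactor arising from $\Phi(-u)^{1/(1-s)}\sim u^{-1/(1-s)}(2\pi)^{-1/[2(1-s)]}e^{-u^{2}/[2(1-s)]}$ and verify that its contribution is only $O(\log L)$, uniformly in $x$ over the dominant Gaussian neighborhood of $x^{\ast}$. A second subtlety is the bookkeeping of the additive $O(1)$ constants in $\Gamma_{1}$ and $\Psi_{1}$; these are absorbed into $O(\log|L|)$, but one must ensure they do not perturb the announced quadratic coefficients. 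Apart from these points, the proof is a direct asymptotic evaluation of already-established non-asymptotic bounds.
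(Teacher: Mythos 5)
Your proposal is correct, and for three of the four limits it is essentially the paper's own argument: for \eqref{eqn:Linf} you expand $\Gamma_1$ and $\Gamma_2/(1-s)$ with $-\log\Phi(-u)=u^2/2+\log u+O(1)$ exactly as the paper does; for \eqref{eqn:Linf_g} your Laplace evaluation of the $\Psi_1$ integral produces the same saddle point $x^{*}=-LV_1/[V_1+(1-s)V_2]$ and saddle value $-L^2/[2(V_1+(1-s)V_2)]$ used in the paper, and you exploit, as the paper does, that the upper bound in \eqref{eqn:caseC2} is $\Psi_1$ plus an $L$-independent constant; for \eqref{eqn:L-inf} you identify $\Gamma_2$ (lower bound) and $\Gamma_2/(1-s)$ (upper bound), whose logarithms agree up to $O(1)$, again as in the paper. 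The one place you genuinely depart from the paper is the upper bound for \eqref{eqn:L-inf_g}: the paper bounds the $\Psi_1$-based upper bound of \eqref{eqn:caseC2} directly, using the convexity inequality $[1-\Phi(\cdot)]^{1/(1-s)}\ge 1-\tfrac{1}{1-s}\Phi(\cdot)$ together with the Gaussian convolution identity $V=V_1+V_2$ to reduce the integral to $\Phi(L/\sqrt{V})$, whereas you simply invoke the pointwise monotonicity $C_{1-s}^{\uparrow}\le C_{1-s}$ (from $H_{1-s}^{\uparrow}\ge H_{1-s}$, i.e.\ \eqref{eqn:ent_min} with $Q_E=P_E$) to transfer the already-established non-Gallager upper bound, keeping $\Psi_2$ for the matching lower bound. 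This shortcut is valid—the comparison holds for every hash function and every $n$, and the $O(\log|L|)$ slack absorbs the constant-factor discrepancy—and it is arguably cleaner, though it only works here because the Gallager and non-Gallager answers coincide at $L\to-\infty$; at $L\to+\infty$ the constants differ, which is precisely why you (correctly) fall back on the direct $\Psi_1$ analysis there. Your cautionary remarks about controlling the subexponential prefactor in the Laplace step and about absorbing the additive constants are well placed; the paper treats these points at the same level of informality.
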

The proof of Theorem~\ref{thm:second_large} can be found in Section~\ref{sec:prf_thm_large}.

The results in Theorem~\ref{thm:second_large} are somewhat analogous and similar those in  the study of the {\em moderate-deviations} asymptotics in information theory~\cite{altug14b, pol10e,AWK13, Tan12,TWH14}.
% since the deviation from the first-order term   asymptotically dominates $\sqrt{n}$, i.e.,   in contrast to \eqref{eqn:sec_ord} and \eqref{eqn:sec_ord1}, we have 
%\begin{align}
%\log M_n - n H_{1+s}(A|E|P_{AE}) &= \omega(\sqrt{n}),\qquad\mbox{or}, \\
%\log M_n - n H_{1+s}^\uparrow(A|E|P_{AE}) &= \omega(\sqrt{n}).
%\end{align}
Note the difference between the results in \eqref{eqn:Linf}--\eqref{eqn:Linf_g} ($L\to\infty$) versus \eqref{eqn:L-inf}--\eqref{eqn:L-inf_g} ($L\to-\infty$). The former pair of results resembles the equivocation results presented in Section~\ref{sec:equiv} since the effective rate is {\em $(L/\sqrt{n})$-higher} than the conditional R\'enyi entropy and there is no logarithm preceding $C_{1-s}$ and $C_{1-s}^\uparrow$. The latter pair of results resembles the exponent results of Section~\ref{sec:exponent} since the effective rate is  {\em $(|L|/\sqrt{n})$-lower} than the conditional R\'enyi entropy  and there is a logarithm preceding $C_{1-s}$ and $C_{1-s}^\uparrow$. So the results presented in Theorem~\ref{thm:second_large} are  natural in view of the equivocation result in Theorem~\ref{thm:equiv} and the exponent result in Theorem~\ref{thm:exponents}. 
 
 \subsection{Proof of Theorem~\ref{thm:second_large}} \label{sec:prf_thm_large}
 In this section, we present the proof of Theorem~\ref{thm:second_large}. Since there are four statements in \eqref{eqn:Linf}--\eqref{eqn:L-inf_g}, we partition the proof into four  distinct subsections.
 
\paragraph{Proof of \eqref{eqn:Linf}}
When $L\to\infty$, the term $\Gamma_1(s,L)$ in \eqref{eqn:defA}  behaves as $-\frac{1}{s} \log \Phi\big(-\frac{L}{ \sqrt{V(A|E|P_{AE}) }}\big)+O(1)$, and 
attains the maximum in the lower bound in  \eqref{eqn:caseC1} 
because  $\Gamma_2(s,L)$ has the additional factor $1 - s$, which is smaller than $1$. 
   Also see  the right plot of  Fig.~\ref{fig:sec_caseC}.  Thus, in this limiting regime, the lower bound matches the upper bound in \eqref{eqn:caseC1}, namely $-\frac{1}{s} \log \Phi\big(-\frac{L}{ \sqrt{V(A|E|P_{AE}) }}\big)$, up to a constant term, i.e., 
\begin{equation}
\lim_{n\to\infty}  \inf_{f_{X_n}}    C_{1-s}(f_{X_n}(A^n)|E^n X_n|P_{AE}^n \times P_{X_n})  = -\frac{1}{s} \log \Phi\bigg( -\frac{L}{\sqrt{V(A|E|P_{AE}) } }\bigg) +O(1) 
\label{eqn:case1_calc}
\end{equation}
where $O(1)$ denotes a term bounded in $L$ (but dependent on $s$). 
Now by employing the asymptotic equality
\begin{equation}
\Phi(-t) =1-\Phi(t)\sim \frac{\rme^{-t^2/2}}{\sqrt{2\pi} t} ,\qquad \mbox{as  } t\to\infty, \label{eqn:Q_func}
\end{equation}
  we obtain from \eqref{eqn:case1_calc} that 
\begin{equation}
\lim_{n\to\infty}  \inf_{f_{X_n}}    C_{1-s}(f_{X_n}(A^n)|E^n X_n|P_{AE}^n \times P_{X_n})  =  \frac{L^2}{2s {V(A|E|P_{AE})   } } + O(\log L),
\end{equation}
which proves~\eqref{eqn:Linf}. 

\paragraph{Proof of \eqref{eqn:Linf_g}}
 When $L\to\infty$, the term $\Psi_1(s,L)$ dominates  the maximum in the lower bound in~\eqref{eqn:caseC2}  because $V_2\le V$ and thus the integrands in $\Psi_1(s,L)$, namely $ \Phi\big(- \frac{L+x}{\sqrt{ V_2(A|E|P_{AE})} }\big)^{\frac{1}{1-s}} \frac{ \rme^{-x^2/(2 V_1(A|E|P_{AE}))}}{\sqrt{2\pi V_1(A|E|P_{AE})}}$,  are not smaller than $\Phi\big( -\frac{L}{\sqrt{V(A|E|P_{AE}) } } \big)$ in $\Psi_2(s,L)$. See right plot of Fig.~\ref{fig:sec_caseC2}. We can then find the $x$ that dominates the integral in $\Psi_1(s,L)$. We denote this by $x^*$. Since $L$ is large, by \eqref{eqn:Q_func}, 
\begin{align}
\log\bigg[\Phi\bigg(- \frac{L+x}{\sqrt{ V_2 } }\bigg)^{\frac{1}{1-s}} \rme^{-x^2/(2 V_1 )} \bigg]-\bigg[ - \frac{1}{2(1-s)} \left( \frac{L+x}{\sqrt{V_2}}\right)^2  -\frac{x^2}{2V_1} \bigg]\to 0 ,\quad \mbox{as  } L\to\infty. 
\end{align}
Differentiating the quadratic, we obtain
\begin{equation}
x^*=-\frac{L V_1(A|E|P_{AE})}{V_1(A|E|P_{AE})+(1-s) V_2(A|E|P_{AE})}. \label{eqn:x_star}
\end{equation}
The exponential term $\rme^{-(x^*)^2/(2V_1(A|E|P_{AE}))}$ controls the behavior of the integral in $\Psi_1(s,L)$ and substituting  \eqref{eqn:x_star} into this exponential term  yields \eqref{eqn:Linf_g}.

\paragraph{Proof of \eqref{eqn:L-inf}}
Now we assume that $L\to -\infty$. In this case, we find that the term $\Gamma_2(s,L)$ attains the maximum in the lower bound in~\eqref{eqn:caseC1} because $\Gamma_1(s,L)$ is negative due to the constant negative term.   % due to the constant negative offset in the definition of $\Gamma_1(s,L)$. 
  Also see  the left plot of  Fig.~\ref{fig:sec_caseC}.    In this case, taking the logarithm, we have  
\begin{align}
&\lim_{n\to\infty}  \log \inf_{f_{X_n}}    C_{1-s}(f_{X_n}(A^n)|E^n X_n|P_{AE}^n \times P_{X_n}) \nn\\*
&= \log\left[ -\log \left(1-\Phi\bigg(  \frac{L}{\sqrt{V(A|E|P_{AE}) } }\bigg)\right) \right]+O(1 )  \label{eqn:mod1}\\
&= \log\left[ \Phi\bigg(  \frac{L}{\sqrt{V(A|E|P_{AE}) } }\bigg) \right]+O(\log |L| ) \label{eqn:logx_approx} \\
&=-\frac{L^2}{2  {V(A|E|P_{AE})   } } + O(\log |L| )\label{eqn:mod3}
\end{align}
where in~\eqref{eqn:mod1}, $\log(\frac{1-s}{s})$ and $\log(\frac{1}{s})$ can be regarded as $O(1)$ when $ L\to-\infty$, in \eqref{eqn:logx_approx}, we used the fact that $\log(1-t) = - t + O(t^2)$ when $t\downarrow 0$, and finally in \eqref{eqn:mod3}, we used \eqref{eqn:Q_func}. This proves~\eqref{eqn:L-inf}.

%Again for ease of notation, let
%\begin{align}
%D&:= -\frac{1}{s}\log \left(2^{s+ \frac{1}{1-s}}   s^{\frac{s}{1-s}} (1-s) \right) -\frac{1-s}{s}\log \int_{-\infty}^{\infty} \Phi\bigg(- \frac{L+x}{\sqrt{ V_2(A|E|P_{AE})} }\bigg)^{\frac{1}{1-s}} \frac{ \rme^{-x^2/(2 V_1(A|E|P_{AE}))}}{\sqrt{2\pi V_1(A|E|P_{AE})}}\,\rmd x\label{eqn:defD}\\
%E&:=-\frac{1-s}{s}\log \Phi\bigg( -\frac{L}{\sqrt{V(A|E|P_{AE}) } } \bigg). \label{eqn:defE}
%\end{align} 
%be the first and second terms in the maximum in the lower bound in~\eqref{eqn:caseC2}.

\paragraph{Proof of \eqref{eqn:L-inf_g}}
In the other direction, when $L\to-\infty$, we claim that the term $\Psi_2(s,L)$ attains the maximum. This is shown as follows: First, we find that 
\begin{equation}
\left[1-\Phi\bigg( \frac{L+x}{\sqrt{V_2(A|E|P_{AE}) } } \bigg)\right]^{\frac{1}{1-s}}\ge 1-\frac{1}{1-s}\Phi\bigg( \frac{L+x}{\sqrt{V_2(A|E|P_{AE}) } } \bigg) . \label{eqn:use_Phi_prop} 
\end{equation}
This is because $a\mapsto a^{\frac{1}{1-s}}$ is convex and so the linear approximation underestimates the function. 
This means that 
\begin{align}
&\log\int_{-\infty}^{\infty} \Phi\bigg(- \frac{L+x}{\sqrt{ V_2(A|E|P_{AE})} }\bigg)^{\frac{1}{1-s}} \frac{\rme^{-x^2/(2 V_1(A|E|P_{AE}))}}{\sqrt{2\pi V_1(A|E|P_{AE})}} \,\rmd x \nn\\*
&\quad\ge \log\int_{-\infty}^{\infty} \left[1-\frac{1}{1-s}\Phi\bigg( \frac{L+x}{\sqrt{V_2(A|E|P_{AE}) } } \bigg) \right] \frac{\rme^{-x^2/(2 V_1(A|E|P_{AE}))}}{\sqrt{2\pi V_1(A|E|P_{AE})}} \,\rmd x \label{eqn:plug_Phi} \\*
&\quad= \log\left[ 1-\frac{1}{1-s}\Phi\bigg(  \frac{L }{\sqrt{V (A|E|P_{AE}) } } \bigg) \right] \label{eqn:convolution}\\
&\quad =- \frac{1}{1-s}\Phi\bigg(   \frac{L }{\sqrt{V (A|E|P_{AE}) } } \bigg) + O\left(\rme^{-{L^4 }/ {(4V (A|E|P_{AE})^2 )} }  \right)  \label{eqn:log_prop} \\
&\quad =-\frac{1}{1-s}\log \left[ 1-\Phi\bigg(   \frac{L }{\sqrt{V (A|E|P_{AE}) } } \bigg) \right]+ O\left(\rme^{-{L^4 }/ {(4V (A|E|P_{AE})^2 )} }  \right)\label{eqn:s_prop}\\
&\quad =-\frac{1}{1-s}\log  \Phi\bigg(   -\frac{L }{\sqrt{V (A|E|P_{AE}) } } \bigg) + O\left(\rme^{-{L^4 }/ {(4V (A|E|P_{AE})^2 )} }  \right), \label{eqn:end_chain}
\end{align}
where \eqref{eqn:convolution}  follows because the convolution of two independent zero-mean Gaussians is a Gaussian where the variances add and we also note that $V=V_1+V_2$ per~\eqref{eqn:sum_vars}.  
%In more detail, let $f_X(x)=\calN(x;0,V_1)$ and  $f_Y(y)=\calN(y;0,V_2)$, then $f_X\ast f_Y(z) = \calN(z;0,V )$. Using this fact, we have 
%\begin{align}
%\int_{-\infty}^{\infty} \Phi\bigg( \frac{L+x}{\sqrt{V_2 } } \bigg)  \frac{\rme^{-x^2/(2V_1)}}{\sqrt{2\pi V_1}}\, \rmd x & = \int_{-\infty}^{\infty} \int_{-\infty}^{L+x}  \frac{\rme^{-y^2/(2V_2)}}{\sqrt{2\pi V_2}}\, \rmd y\,  \frac{\rme^{-x^2/(2V_1)}}{\sqrt{2\pi V_1}}\, \rmd x\\
%& = \int_{-\infty}^{\infty} \int_{-\infty}^{L}  \frac{\rme^{-(y'+x)^2/(2V_2)}}{\sqrt{2\pi V_2}}\, \rmd y'\,  \frac{\rme^{-x^2/(2V_1)}}{\sqrt{2\pi V_1}}\, \rmd x\\
%& =  \int_{-\infty}^{L}\int_{-\infty}^{\infty} f_Y(y'+x)f_X(x)\, \rmd x'\, \rmd y'\\
%& =  \int_{-\infty}^{L}\int_{-\infty}^{\infty} f_Y(y'-x')f_X(x')\, \rmd x'\, \rmd y'\\
%&= \Phi\bigg( \frac{L}{\sqrt{V}} \bigg) 
%\end{align}
 This argument was also used in the second-order analysis of channels with state   \cite[Lemma~18]{TomTan13a}. Inequalities~\eqref{eqn:log_prop} and \eqref{eqn:s_prop}  follow  from the fact that $\log(1-x) = -x + O(x^2)$ as $x\downarrow 0$ (note that $L\to-\infty$ so the term $\Phi( L/\sqrt{V})$ tends to zero).  
Hence, \eqref{eqn:end_chain}  and the definitions of $\Psi_2(s,L)$ and $\Psi_1(s,L)$  (in \eqref{eqn:defD} and \eqref{eqn:defE} resp.) imply that $\Psi_2(s,L)$ asymptotically dominates $\Psi_1(s,L)$ as $L\to-\infty$. Also see left plot of Fig.~\ref{fig:sec_caseC2}. By a similar calculation as in \eqref{eqn:mod1}--\eqref{eqn:mod3}, we obtain the lower bound to \eqref{eqn:L-inf_g} as follows:
\begin{align}
&\liminf_{n\to\infty} \log \inf_{f_{X_n}}    C_{1-s}^\uparrow(f_{X_n}(A^n)|E^n X_n|P_{AE}^n \times P_{X_n})  \nn\\*
&\quad\ge \log\left[-\frac{1-s}{s}\log \Phi\bigg( -\frac{L}{\sqrt{V(A|E|P_{AE}) } } \bigg)\right] +O(1) \label{eqn:constant_term1}\\
&\quad=\log\left[-\frac{1-s}{s} \left(-\Phi\bigg(  \frac{L}{\sqrt{V(A|E|P_{AE}) } } \bigg) \right)\right] + O(\log |L| ) ) \label{eqn:constant_term0}\\
&\quad = \log\left[ \Phi\bigg(  \frac{L}{\sqrt{V(A|E|P_{AE}) } } \bigg)  \right] + O(\log |L| ) \label{eqn:constant_term} \\
&\quad= -\frac{L^2}{2V(A|E|P_{AE}) }+O(\log |L| ) , \label{eqn:lb_L}
\end{align}
where in  \eqref{eqn:constant_term1}, we used the above observation that $\Psi_2(s,L)=\Omega(\Psi_1(s,L))$ as $L\to-\infty$, in \eqref{eqn:constant_term0}, we wrote $\Phi(-L/\sqrt{V})=1-\Phi( L/\sqrt{V})$ and used the fact that $\log(1-x) = -x + O(x^2)$ as $x\downarrow 0$,  in \eqref{eqn:constant_term} we used the fact that $\log[ (1-s)/s]=O(1)$, and finally in \eqref{eqn:lb_L} we used the  approximation  in \eqref{eqn:Q_func}.
To show that the upper bound in \eqref{eqn:caseC2} matches the lower bound given by~\eqref{eqn:lb_L} (when $L\to -\infty$), we  use \eqref{eqn:use_Phi_prop} and steps similar to those in \eqref{eqn:plug_Phi}--\eqref{eqn:end_chain} to assert that 
%\begin{equation}
%\left[1-\Phi\bigg( \frac{L+x}{\sqrt{V_2(A|E|P_{AE}) } } \bigg)\right]^{\frac{1}{1-s}}\ge 1-\frac{1}{1-s}\Phi\bigg( \frac{L+x}{\sqrt{V_2(A|E|P_{AE}) } } \bigg).
%\end{equation}
%which implies that the upper bound in \eqref{eqn:caseC2} simplifies to 
\begin{align}
&-\frac{1-s}{s}\log\int_{-\infty}^{\infty} \Phi\bigg(- \frac{L+x}{\sqrt{ V_2(A|E|P_{AE})} }\bigg)^{\frac{1}{1-s}} \frac{\rme^{-x^2/(2 V_1(A|E|P_{AE}))}}{\sqrt{2\pi V_1(A|E|P_{AE})}} \,\rmd x \nn\\*
%&\le -\frac{1-s}{s}\log\int_{-\infty}^{\infty}  \left[1-\frac{1}{1-s}\Phi\bigg( \frac{L+x}{\sqrt{V_2(A|E|P_{AE}) } } \bigg)\right]\frac{\rme^{-x^2/(2 V_1(A|E|P_{AE}))}}{\sqrt{2\pi V_1(A|E|P_{AE})}} \,\rmd x\\
%%&= -\frac{1-s}{s}\log \left[\int_{-\infty}^{\infty}  \frac{\rme^{-x^2/(2 V_1(A|E|P_{AE}))}}{\sqrt{2\pi V_1(A|E|P_{AE})}} \, \rmd x-\frac{1}{1-s}\Phi\bigg( \frac{L }{\sqrt{V (A|E|P_{AE}) } } \bigg)   \right] \\
& \quad\le -\frac{1-s}{s}\log \left[ 1- \frac{1}{1-s}\Phi\bigg( \frac{L }{\sqrt{V (A|E|P_{AE}) } } \bigg) \right] \label{eqn:conv}\\
&\quad=  \frac{1}{s} \Phi\bigg( \frac{L }{\sqrt{V (A|E|P_{AE}) } } \bigg) + O\left(\rme^{-{L^4 }/ {(4V (A|E|P_{AE})^2 )} }  \right) ,\label{eqn:conv_2}
\end{align}
where in \eqref{eqn:conv_2}, we again used the fact that $\log(1-x)=-x+O(x^2)$ for $x\downarrow 0$.
%where \eqref{eqn:conv} follows because the convolution of two independent zero-mean Gaussians is a Gaussian where the variances add and we also note that $V=V_1+V_2$ per~\eqref{eqn:sum_vars}. This argument was also used in the second-order analysis of channels with state in \cite{TomTan13a}.
 Now taking the logarithm and the limit  as $L\to-\infty$, we match the lower bound in \eqref{eqn:lb_L} completing the proof of~\eqref{eqn:L-inf_g}.

\section{One-Shot Bounds}\label{sec:one-shot}
To prove  Theorems~\ref{thm:equiv},  \ref{thm:exponents} and \ref{thm:second}, we leverage the following one-shot (i.e., blocklength $n$ equal to $1$) bounds. The proofs of these one-shot bounds are rather technical and hence we provide them in the appendices.

\subsection{One-Shot Bounds for the Direct Parts}
For the direct parts of the equivocation results, we evaluate the following one-shot bounds. The first two bounds in \eqref{eqn:os_direct_pluss} and \eqref{eqn:os_direct_G_pluss} can be considered as  generalizations of the bounds by Hayashi in \cite{Hayashi11} where $\epsilon=1$.
\begin{lemma} \label{lem:os_d_1}
For an ensemble of $\epsilon$-almost universal$_2$ hash functions $f_X : \calA\to\calM=\{1,\ldots, M\}$, we have for $s\in [0,1]$, 
\begin{align}
 \rme^{sC_{1+s} (f_X(A) | E X | P_{AE}\times P_X)}   &\le\epsilon^s + M^s \rme^{-s H_{1+s} (A|E|P_{AE})}  ,\label{eqn:os_direct_pluss}\\
 \rme^{\frac{s}{1+s} C_{1+s}^\uparrow (f_X(A) | E X | P_{AE}\times P_X) }&\le\epsilon^{\frac{s}{1+s}} + M^{\frac{s}{1+s}} \rme^{-{\frac{s}{1+s}} H_{1+s}^\uparrow (A|E|P_{AE})}. \label{eqn:os_direct_G_pluss}  
 \end{align}
 In the other direction with $s\in [0,1]$,
 \begin{align}
&  \rme^{-sC_{1-s} (f_X(A) | EX | P_{AE} \times P_X))}  \nn\\*
  &\ge 2^{-s}\sum_{(a,e) : P_{A|E}(a|e)\ge \frac{\epsilon }{M}  } P_{AE}(a,e) P_{A|E}(a|e)^{-s} M^{-s} 
+ 2^{-s}\sum_{(a,e) : P_{A|E}(a|e)< \frac{\epsilon }{M} } P_{AE}(a,e) \epsilon^{-s} , \label{eqn:os_direct_minuss} \\
&  \rme^{- \frac{s}{1-s}C_{1-s}^\uparrow (f_X(A) | E X | P_{AE}\times P_X))}  \nn\\*
    &\ge \frac{1}{2M^{\frac{s}{1-s}}}\sum_e P_E(e) \Big( \sum_{a: P_{A|E}(a|e)\ge \frac{\epsilon}{M}}P_{A|E}(a|e)^{1-s}\Big)^{\frac{1}{1-s}}   + (2\epsilon)^{-\frac{s}{1-s}} \sum_e P_E(e)\Big(\sum_{a:P_{A|E}(a|e) <\frac{\epsilon}{M}}P_{A|E}(a|e) \Big)^{\frac{1}{1-s}}. \label{eqn:os_direct_G_minuss} 
\end{align}
\end{lemma}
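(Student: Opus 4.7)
\textbf{Proof Proposal for Lemma~\ref{lem:os_d_1}.} The plan is to express each of the four quantities as a sum over $m, e, x$ involving the binned probability
$S_m(e,x) := \sum_{a : f_x(a) = m} P_{A|E}(a|e)$,
and then exploit the universal$_2$ property through carefully-chosen applications of Jensen's inequality. The key algebraic identity, obtained by re-indexing the sum over bins as a sum over source symbols, is
\begin{equation*}
\sum_m S_m(e,x)^{1+t} \;=\; \sum_a P_{A|E}(a|e)\, S_{f_x(a)}(e,x)^t, \qquad t \in \bbR.
\end{equation*}
For the collision term, the universal$_2$ property gives
$\bbE_X\big[S_{f_X(a)}(e,X) - P_{A|E}(a|e)\big] \le \eps/M$.

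For the upper bound \eqref{eqn:os_direct_pluss}, I would write $\rme^{sC_{1+s}} = M^s \sum_{e,x} P_E(e) P_X(x) \sum_m S_m(e,x)^{1+s}$, apply the identity with $t=s$, and use the subadditivity $(u+v)^s \le u^s + v^s$ (valid for $s\in[0,1]$) on $S_{f_x(a)}(e,x) = P_{A|E}(a|e) + Z_a(e,x)$, where $Z_a(e,x)$ denotes the mass of the colliding other elements. Jensen's inequality (concavity of $u\mapsto u^s$) and the universal$_2$ bound then give $\bbE_X[Z_a(e,X)^s] \le (\eps/M)^s$, which yields the claim. For \eqref{eqn:os_direct_G_pluss}, the same $S_m$-identity is combined with Jensen's inequality for the concave outer map $y\mapsto y^{1/(1+s)}$ (with $s\in[0,1]$, so $1/(1+s)\in[1/2,1]$) to pull the expectation inside; a further application of the subadditivity $(u+v)^{1/(1+s)} \le u^{1/(1+s)} + v^{1/(1+s)}$ separates the ``true'' and ``collision'' contributions.

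For the reverse-direction bounds \eqref{eqn:os_direct_minuss}--\eqref{eqn:os_direct_G_minuss}, the plan is to apply Jensen's inequality in the opposite direction, using the \emph{convexity} of $y\mapsto y^{-s}$ and $y\mapsto y^{1/(1-s)}$ (for $s\in(0,1)$). After rewriting $\rme^{-sC_{1-s}} = M^{-s}\sum_{e,x} P_X(x) P_E(e) \sum_a P_{A|E}(a|e) S_{f_x(a)}(e,x)^{-s}$, convexity gives
$\bbE_X\big[S_{f_X(a)}(e,X)^{-s}\big] \;\ge\; \bigl(\bbE_X[S_{f_X(a)}(e,X)]\bigr)^{-s} \;\ge\; \bigl(P_{A|E}(a|e) + \eps/M\bigr)^{-s}$.
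Then the elementary inequality $p + b \le 2\max\{p,b\}$ yields $(p+b)^{-s} \ge 2^{-s}\min\{p^{-s},b^{-s}\}$ with $p=P_{A|E}(a|e)$, $b=\eps/M$, which naturally produces the split according to whether $P_{A|E}(a|e) \gtrless \eps/M$ and delivers \eqref{eqn:os_direct_minuss}.

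The bound \eqref{eqn:os_direct_G_minuss} for the Gallager form is expected to be the main obstacle, because the outer exponent $1/(1-s)$ acts on a sum whose structure we need to preserve through the bin-to-symbol re-indexing. My plan is: (i) first split $\sum_m S_m(e,x)^{1-s}$ into the contributions of $B_1 := \{a : P_{A|E}(a|e) \ge \eps/M\}$ and $B_2 := \{a : P_{A|E}(a|e) < \eps/M\}$ via the identity above; (ii) use the superadditivity $(y_1+y_2)^{1/(1-s)} \ge y_1^{1/(1-s)} + y_2^{1/(1-s)}$, valid because $1/(1-s) \ge 1$; (iii) apply Jensen's inequality separately to each piece (again exploiting convexity of $y\mapsto y^{1/(1-s)}$), moving the expectation inside; and (iv) apply the $2^{-s}\min\{p^{-s},b^{-s}\}$ bound piecewise, which collapses to $P_{A|E}(a|e)^{-s}$ on $B_1$ and $(M/\eps)^s$ on $B_2$. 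Tracking constants through these nested steps--especially making sure the $2^{-s}$ factors combine correctly with the outer $1/(1-s)$ exponent to match the $\tfrac{1}{2}$ and $(2\eps)^{-s/(1-s)}$ prefactors in the stated bound--is the delicate part and will require a possibly looser final bounding step (e.g., absorbing $2^{-s/(1-s)}$ into $\tfrac12$) at the end of the chain.
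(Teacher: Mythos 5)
Your proposal follows essentially the same route as the paper's proof: the bin-to-symbol re-indexing identity, the $\epsilon$-almost universal$_2$ bound $\bbE_X\sum_{a'\in f_X^{-1}(f_X(a))}P_{A|E}(a'|e)\le P_{A|E}(a|e)+\epsilon/M$, Jensen's inequality in the appropriate direction (concavity of $t\mapsto t^{s}$ and $t\mapsto t^{1/(1+s)}$ for the $1+s$ bounds, convexity of $t\mapsto t^{-s}$ and $t\mapsto t^{1/(1-s)}$ for the $1-s$ bounds), the elementary bound $p+\epsilon/M\le 2\max\{p,\epsilon/M\}$ yielding the $2^{-s}\min$ split, and sub/superadditivity of the power maps. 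The only differences are immaterial reorderings of these steps (e.g., you split off the collision mass before applying Jensen, and in \eqref{eqn:os_direct_G_minuss} you split into the two symbol classes before invoking the outer exponent rather than after), which lead to the same final expressions.
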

For the direct parts of the exponents results, we evaluate the following   one-shot bound. 
\begin{lemma}\label{lem:os_d_2}
For an ensemble of universal$_2$ hash functions $f_X : \calA\to\calM=\{1,\ldots, M\}$, we have  for any  $s\in [0,1]$, 
\begin{align}
 \rme^{\frac{s}{1+s} C_{1+s}^\uparrow (f_X(A)  | E X | P_{AE}\times P_X )}  \le  1 + \frac{1}{1  +   s}M^{s} \rme^{-sH_{1+s}(A|E|P_{AE})}.\label{eqn:os_direct_exp} 
\end{align}
\end{lemma}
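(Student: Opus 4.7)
My plan is to derive this one-shot bound by composing bound \eqref{eqn:os_direct_pluss} of Lemma \ref{lem:os_d_1} with two elementary facts: the monotonicity between the conditional R\'enyi entropy and its Gallager form, and a Bernoulli-type concavity inequality. This avoids any direct analysis of the universal$_2$ collision statistics.

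The key observation is that $H_{1+s}^{\uparrow}(B|E'|P_{BE'}) \ge H_{1+s}(B|E'|P_{BE'})$, since by \eqref{eqn:ent_min} the Gallager form equals $\max_{Q_{E'}} H_{1+s}(B|E'|P_{BE'}\|Q_{E'})$, and $Q_{E'} = P_{E'}$ is an admissible choice. Applied to $B = f_X(A)$ and $E' = (E,X)$, together with the identities \eqref{eqn:relate_C_H} and \eqref{eqn:relate_C_H_g}, this gives the pointwise inequality
\[
C_{1+s}^{\uparrow}(f_X(A)|EX|P_{AE}\times P_X) \le C_{1+s}(f_X(A)|EX|P_{AE}\times P_X).
\]
Multiplying by $s/(1+s) \ge 0$ and exponentiating, I then use $\rme^{\frac{s}{1+s} C_{1+s}} = \bigl(\rme^{s C_{1+s}}\bigr)^{1/(1+s)}$ and invoke \eqref{eqn:os_direct_pluss} with $\epsilon = 1$ (valid because every universal$_2$ hash is $1$-almost universal$_2$) to obtain
\[
\rme^{\frac{s}{1+s} C_{1+s}^{\uparrow}(f_X(A)|EX|P_{AE}\times P_X)} \le \bigl(1 + M^s \rme^{-s H_{1+s}(A|E|P_{AE})}\bigr)^{1/(1+s)}.
\]

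The final step is the Bernoulli-type inequality $(1+t)^p \le 1 + pt$, valid for $t \ge 0$ and $p \in (0,1]$ by concavity of $t \mapsto (1+t)^p$. I apply it with $p = 1/(1+s) \in [1/2, 1]$ (since $s \in [0,1]$) and $t = M^s \rme^{-s H_{1+s}(A|E|P_{AE})} \ge 0$, which immediately yields the claimed bound.

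There is no major technical obstacle, as the argument is a short composition of known facts. The only delicate point worth checking is that $\epsilon = 1$ in \eqref{eqn:os_direct_pluss} is exactly what produces the ``$1 + \cdots$'' shape needed for Bernoulli's inequality to deliver the advertised $\tfrac{1}{1+s}$ prefactor; this is precisely why the lemma is stated for universal$_2$ hashes rather than general $\epsilon$-almost universal$_2$ hashes. Note also that despite $H_{1+s}^{\uparrow} \ge H_{1+s}$ making the right-hand side strictly larger than the analogous bound \eqref{eqn:os_direct_G_pluss} in which $H_{1+s}^{\uparrow}$ appears, the present form is what matches the exponent $\max_{t \in [s,1]} tH_{1+t}(A|E|P_{AE}) - tR$ appearing in \eqref{eqn:exp2} of Theorem \ref{thm:exponents}.
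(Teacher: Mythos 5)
Your proof is correct, and it takes a genuinely shorter route than the paper's. The paper proves \eqref{eqn:os_direct_exp} by a direct computation on the Gallager quantity: it expands $M^{\frac{s}{1+s}}\rme^{-\frac{s}{1+s}H_{1+s}^\uparrow(f_X(A)|EX|P_{AE}\times P_X)}$, pulls $\bbE_X$ inside via concavity, invokes the universal$_2$ property and $(x+y)^s\le x^s+y^s$ exactly as in the proof of \eqref{eqn:os_direct_G_pluss}, and then applies $(1+x)^{\frac{1}{1+s}}\le 1+\frac{x}{1+s}$ \emph{for each fixed $e$} before averaging over $P_E$. You instead observe $C_{1+s}^\uparrow\le C_{1+s}$ (via \eqref{eqn:ent_min} with the admissible choice $Q_{E}=P_{E}$, applied to $(f_X(A),(E,X))$, which is in essence the Jensen step the paper performs by hand), reduce to the already-proved bound \eqref{eqn:os_direct_pluss} with $\epsilon=1$, and apply the Bernoulli inequality once at the end. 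Both arguments rest on the same two ingredients — the hash-collision estimate and the concavity bound $(1+x)^{\frac{1}{1+s}}\le 1+\frac{x}{1+s}$ — but your ordering of the Jensen/Bernoulli steps relative to the averages over $e$ and $X$ lands on the identical right-hand side while avoiding any re-derivation of the hash computation; what it buys is modularity and brevity, while the paper's version is self-contained within the appendix. Your side observation about why $\epsilon=1$ is needed (so that the additive constant is exactly $1$ and Bernoulli produces the $\frac{1}{1+s}$ prefactor) is also on point; the closing comparison with \eqref{eqn:os_direct_G_pluss} is only heuristic, since the two bounds carry different exponents and are not directly comparable, but this does not affect the proof.
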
 

For the direct parts of the second-order results, we evaluate the following   one-shot bound. 

\begin{lemma}\label{lem:os_d_3}
For an ensemble of an $\epsilon$-almost universal$_2$ hash functions $f_X : \calA\to\calM=\{1,\ldots, M\}$, we have  for  any $s\in [0,1]$ and $c>0$, 
\begin{align}
 \rme^{-sC_{1-s} (f_X(A) | EX | P_{AE} \times P_X )} & \ge P_{AE}\Big\{ (a,e) : P_{A|E}(a|e) \le\frac{c}{M}\Big\} \Big( \frac{1}{c+\epsilon}\Big)^{s}, \label{eqn:one-shot-dir-second-1}\\
  \rme^{-\frac{s}{1-s} C_{1-s}^\uparrow (f_X(A) | EX | P_{AE} \times P_X )}  &\ge  \Big( \frac{1}{c+\epsilon}\Big)^{\frac{s}{1-s}}\sum_e P_E(e) \bigg(\sum_{a:P_{A|E}(a|e)\le\frac{c}{M}}P_{A|E}(a|e) \bigg)^{\frac{1}{1-s}}\label{eqn:one-shot-dir-second-2}  .
\end{align}
\end{lemma}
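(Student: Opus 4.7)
\textbf{Proof proposal for Lemma~\ref{lem:os_d_3}.} The plan is to unpack the definition of the R\'enyi divergence of order $1-s$, exploit the fact that $X$ appears as an independent random variable in the joint distribution $P_{f_X(A),E,X}=P_X P_E\cdot q_X(\fndot|E)$, and then apply Jensen's inequality twice: once to push the expectation over $X$ inside a $(\fndot)^{1/(1-s)}$ (only needed for the Gallager case), and once to use convexity of $q\mapsto q^{-s}$ together with the defining property of $\epsilon$-almost universal$_2$ hashing. For every $x \in \calX$ and $e\in\calE$, set
\begin{equation}
q_x(m\mid e) := \sum_{a:\, f_x(a)=m} P_{A\mid E}(a\mid e), \qquad T(x,a,e):= q_x(f_x(a)\mid e) - P_{A\mid E}(a\mid e).
\end{equation}
The $\epsilon$-almost universality in \eqref{eqn:hash} gives $\bbE_X[T(X,a,e)] \le \frac{\epsilon}{M}\sum_{a'\ne a}P_{A|E}(a'|e)\le \frac{\epsilon}{M}$ for every fixed $(a,e)$.

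The key algebraic identity, obtained by writing $q_x(m\mid e)^{1-s}=q_x(m\mid e)^{-s}\cdot q_x(m\mid e)$ and swapping the sum over $m$ with the sum over $a$ in its preimage, is
\begin{equation}
\sum_m q_x(m\mid e)^{1-s} \;=\; \sum_a P_{A\mid E}(a\mid e)\, q_x(f_x(a)\mid e)^{-s}.
\end{equation}
Using this, together with $\rme^{-sD_{1-s}(P\|Q)}=\sum P^{1-s}Q^s$ and the product structure of the reference measure $P_{\mix,\calM}\times P_E\times P_X$, the first target quantity becomes
\begin{equation}
\rme^{-sC_{1-s}(f_X(A)\mid EX\mid P_{AE}\times P_X)}
= \frac{1}{M^s}\sum_{a,e} P_{AE}(a,e)\,\bbE_X\!\left[q_X(f_X(a)\mid e)^{-s}\right].
\end{equation}
For the Gallager version, the minimizing $Q_{EX}$ in \eqref{eqn:relate_C_H_g} yields, after the same computation,
\begin{equation}
\rme^{-\frac{s}{1-s}C_{1-s}^\uparrow(f_X(A)\mid EX\mid P_{AE}\times P_X)}
= M^{-\frac{s}{1-s}}\sum_e P_E(e)\, \bbE_X\!\left[\Big(\sum_a P_{A\mid E}(a\mid e)\,q_X(f_X(a)\mid e)^{-s}\Big)^{\frac{1}{1-s}}\right].
\end{equation}

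Now I apply Jensen. Because $q\mapsto q^{-s}$ is convex on $(0,\infty)$ for $s\in(0,1]$,
\begin{equation}
\bbE_X\!\left[q_X(f_X(a)\mid e)^{-s}\right]
= \bbE_X\!\left[(P_{A\mid E}(a\mid e) + T(X,a,e))^{-s}\right]
\ge \left(P_{A\mid E}(a\mid e) + \tfrac{\epsilon}{M}\right)^{-s}.
\end{equation}
Restricting the outer sum to those $(a,e)$ with $P_{A\mid E}(a\mid e)\le c/M$, on which $(P_{A\mid E}(a\mid e)+\epsilon/M)^{-s}\ge ((c+\epsilon)/M)^{-s}$, yields
\begin{equation}
\rme^{-sC_{1-s}(f_X(A)\mid EX\mid P_{AE}\times P_X)}
\ge \frac{1}{M^s}\cdot \Big(\frac{M}{c+\epsilon}\Big)^{s}\!\!\sum_{(a,e):\, P_{A\mid E}(a\mid e)\le c/M}\!\!P_{AE}(a,e),
\end{equation}
which is exactly \eqref{eqn:one-shot-dir-second-1}. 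For the Gallager bound \eqref{eqn:one-shot-dir-second-2} I first apply Jensen to the convex map $y\mapsto y^{1/(1-s)}$ (valid for $s\in(0,1)$) to move $\bbE_X$ inside, and then repeat the step above on each $a$; restricting the resulting sum to $\{a: P_{A\mid E}(a\mid e)\le c/M\}$ produces the claimed bound.

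The main obstacle I expect is purely bookkeeping: keeping the two Jensen applications in the correct order for the Gallager case, and making sure that the $\epsilon$-almost universal$_2$ property is used per fixed $(a,e)$ rather than on a joint expression, so that the resulting bound holds even without true pairwise independence of the hash. The boundary case $s=1$, where $1/(1-s)$ is undefined, is handled separately or by continuity (both sides are well-defined limits).
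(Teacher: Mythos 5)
Your proposal is correct and follows essentially the same route as the paper's proof: rewrite $\sum_m q_x(m|e)^{1-s}$ as $\sum_a P_{A|E}(a|e)\,q_x(f_x(a)|e)^{-s}$, push $\bbE_X$ inside via convexity of $x\mapsto x^{-s}$ (and of $y\mapsto y^{1/(1-s)}$ for the Gallager case), invoke the $\epsilon$-almost universal$_2$ property per fixed $(a,e)$, and restrict the sum to $\{(a,e): P_{A|E}(a|e)\le c/M\}$ where $\bbE_X[q_X(f_X(a)|e)]\le (c+\epsilon)/M$. This matches the paper's argument (its bounds \eqref{10-14-7b}--\eqref{10-14-6} and the proofs of \eqref{eqn:one-shot-dir-second-1}--\eqref{eqn:one-shot-dir-second-2}) step for step.
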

\subsection{One-Shot Bounds for the Converse Parts}
For the converse parts of the equivocation results, we evaluate the following   one-shot bounds.
\begin{lemma} \label{lem:os_c_1}
Fix $c>1$ and $s\ge 0$. Any hash function $f:\calA\to\calM=\{1,\ldots, M\}$ satisfies
\begin{align}
\rme^{-sC_{1-s}(f(A) | E | P_{AE}  )}&\le c^{-s}\sum_e P_E(e) \sum_{a:P_{A|E}(a|e) \ge\frac{c}{M}}P_{A|E}(a|e)^{1-s} M^{-s}   \nn\\*
 &\qquad\qquad+ 2^{\frac{s}{1-s}}s^{\frac{s}{1-s} }(1-s) P_{AE} \Big\{(a,e) : P_{A|E} (a|e)\le\frac{c}{M} \Big\} .\label{eqn:os_conv_eq} 
\end{align}
For the Gallager-type counterpart, 
\begin{align}
\rme^{-\frac{s}{1-s}C_{1-s}^\uparrow(f(A) | E | P_{AE} )}&\le 2^{\frac{s}{1-s}} \sum_e P_E(e) \bigg[ \Big(c^{-s}  \sum_{a:P_{A|E}(a|e) \ge\frac{c}{M}} P_{A|E}(a|e)^{1-s} M^{-s}  \Big)^{\frac{1}{1-s}}\nn\\*
&\qquad \qquad+ \Big(2^{\frac{s}{1-s}}s^{\frac{s}{1-s} }(1-s) P_{A|E=e}\Big\{a : P_{A|E}(a|e) <\frac{c}{M} \Big\}\Big)^{\frac{1}{1-s}}  \bigg].\label{eqn:os_conv_eqG}
\end{align}
\end{lemma}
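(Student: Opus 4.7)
The plan is to establish the two converse one-shot bounds \eqref{eqn:os_conv_eq} and \eqref{eqn:os_conv_eqG} by decomposing $P_{A|E=e}$ into a ``typical'' part on $B_e:=\{a:P_{A|E}(a|e)\ge c/M\}$ and an ``atypical'' part on $B_e^c$, and then bounding the two contributions separately using the concavity of $x\mapsto x^{1-s}$ for $s\in[0,1]$. I would start from the identity
\[
\rme^{-sC_{1-s}(f(A)|E|P_{AE})}=M^{-s}\sum_e P_E(e)\sum_m P_{f(A)|E}(m|e)^{1-s}
\]
and, for each fixed $e$, split $P_{f(A)|E}(m|e)=p_m+q_m$ where $p_m=\sum_{a\in B_e:f(a)=m}P_{A|E}(a|e)$ and $q_m$ is the analogous sum over $B_e^c$.

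The central tool is the first-order upper bound from concavity: $(p+q)^{1-s}\le p^{1-s}+(1-s)p^{-s}q$ for $p>0$, combined with the observation that whenever $p_m>0$ one has $p_m\ge c/M$ and hence $p_m^{-s}\le(M/c)^s$. For indices with $p_m=0$ I would instead apply the global tangent-line bound $y^{1-s}\le (1-s)x_0^{-s}y+s\,x_0^{1-s}$ (valid for all $y\ge 0$ by concavity) directly to $q_m$ at a carefully chosen $x_0$. Summing the contributions from the $p_m$ part, I would use subadditivity $(a+b)^{1-s}\le a^{1-s}+b^{1-s}$ to reduce $\sum_m p_m^{1-s}$ to a sum over $a\in B_e$, and the lower bound $p_m\ge c/M$ to extract the $c^{-s}$ prefactor on the first term of \eqref{eqn:os_conv_eq}. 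For the $q_m$ part, I would use $\sum_m q_m=P_{A|E=e}(B_e^c)$ together with a Markov-type count of the indices $m$ for which $q_m$ exceeds a threshold, and optimize over the tangent parameter $x_0$; this optimization is what produces the specific constant $(1-s)(2s)^{s/(1-s)}=2^{s/(1-s)}s^{s/(1-s)}(1-s)$. Averaging over $P_E$ converts $\sum_e P_E(e)\cdot P_{A|E=e}(B_e^c)$ into the joint probability $P_{AE}\{(a,e):P_{A|E}(a|e)\le c/M\}$, completing \eqref{eqn:os_conv_eq}.

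For the Gallager-type bound \eqref{eqn:os_conv_eqG}, I note that
\[
\rme^{-\frac{s}{1-s}C_{1-s}^\uparrow(f(A)|E|P_{AE})}=\sum_e P_E(e)\bigl(M^{-s}\sum_m P_{f(A)|E}(m|e)^{1-s}\bigr)^{1/(1-s)},
\]
and I would invoke \eqref{eqn:os_conv_eq}'s per-$e$ version inside the outer $1/(1-s)$-th power. Since $1/(1-s)\ge 1$, the map $z\mapsto z^{1/(1-s)}$ is \emph{convex}, so subadditivity is replaced by the Minkowski-type bound $(a+b)^r\le 2^{r-1}(a^r+b^r)$ for $r\ge 1$; applied with $r=1/(1-s)$ this produces the outer prefactor $2^{s/(1-s)}$ visible in \eqref{eqn:os_conv_eqG}, and the two resulting summands are precisely the $(1-s)$-th power analogues of the typical and atypical pieces from \eqref{eqn:os_conv_eq}.

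The main obstacle is the delicate interplay between the tangent-line parameter $x_0$ and the threshold $c/M$ that produces the nontrivial constant $(1-s)(2s)^{s/(1-s)}$. A naive Jensen or subadditivity argument on $\sum_m q_m^{1-s}$ only yields a bound proportional to $P_{A|E=e}(B_e^c)^{1-s}$, which is too weak; to obtain the \emph{linear} dependence on $P_{A|E=e}(B_e^c)$ appearing in the second term of \eqref{eqn:os_conv_eq}, one must trade off the tangent's linear slope against a Markov-count of the atypical bins where $q_m$ is appreciable, and optimize the tangent point. The Gallager case brings the additional complication that the outer convex power forfeits a factor of $2$ which must be tracked carefully through the inner analysis.
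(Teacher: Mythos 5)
Your skeleton is recognisably the same as the paper's Appendix proof of this lemma: you split each bin's mass at the threshold $c/M$, you convert the sub-threshold power term into a linear term via an optimised tangent to $x\mapsto x^{1-s}$ (your tangent optimisation is exactly the paper's computation of $\min_y g_1(x,y)$ for $g_1(x,y)=x+y-2x^{1-s}y^s$, which is where the constant $2^{\frac{1}{1-s}}s^{\frac{s}{1-s}}(1-s)$ comes from), and for \eqref{eqn:os_conv_eqG} you apply the per-$e$ version of \eqref{eqn:os_conv_eq} inside the outer $\frac{1}{1-s}$-power together with $(a+b)^{\frac{1}{1-s}}\le 2^{\frac{s}{1-s}}\big(a^{\frac{1}{1-s}}+b^{\frac{1}{1-s}}\big)$, which is precisely the paper's Lemma~\ref{lem:convex} step. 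So the Gallager reduction is fine conditional on the first bound.

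The first bound, however, is where your sketch has a genuine gap, and the mechanism you describe would fail. (i) Low bins: you need the low-bin contribution $M^{-s}\sum_m q_m^{1-s}$ to be dominated by a constant multiple of $\sum_m q_m=P_{A|E=e}(B_e^c)$. No such bound holds in isolation: if a single bin carries low mass $q$, then $M^{-s}q^{1-s}\le\kappa q$ forces $\kappa\ge (Mq)^{-s}$, which diverges as $q\downarrow 0$, and the constraint $q<c/M$ does not prevent this. Your tangent bound $y^{1-s}\le (1-s)x_0^{-s}y+s\,x_0^{1-s}$ does give a linear term, but the intercepts $s\,x_0^{1-s}$, summed over the (up to $M$) occupied bins and weighted by $M^{-s}$, leave an additive remainder of order one when $x_0\asymp 1/M$, and a Markov-type count of the bins where $q_m$ is appreciable only redistributes that remainder; it does not cancel it. What pays for the intercepts in the paper is that the argument is \emph{not} an additive upper bound on $\rme^{-sC_{1-s}}$ at all: the paper lower-bounds $\sum_m g_1\big(P_{f(A)|E}(m|e),\tfrac1M\big)$, so every occupied bin brings along its own $1/M$ of the uniform reference measure, and the pointwise inequality $g_1(x,1/M)\ge\min_y g_1(x,y)$ spends exactly that $1/M$ to buy the linear-in-$q_m$ bound. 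This pairing of each bin's $1/M$ against its $x^{1-s}M^{-s}$ is the missing idea in your write-up; without it a second term linear in $P_{AE}\{P_{A|E}\le c/M\}$ is unobtainable. (ii) High bins: your route to the $c^{-s}$ prefactor does not work as stated. From $p_m\ge c/M$ you get $p_m^{1-s}\le c^{-s}M^{s}p_m$, i.e.\ you trade the power for the linear quantity $c^{-s}\sum_m p_m$ (the form in \eqref{10-20-1} of Lemma~\ref{lem:os_c_2}), while plain subadditivity gives $M^{-s}\sum_m p_m^{1-s}\le M^{-s}\sum_{a\in B_e}P_{A|E}(a|e)^{1-s}$ with no $c^{-s}$; the two moves cannot be combined into $c^{-s}M^{-s}\sum_{a\in B_e}P_{A|E}(a|e)^{1-s}$, since already for a bin containing a single above-threshold atom $a$ one has $p_m^{1-s}=P_{A|E}(a|e)^{1-s}>c^{-s}P_{A|E}(a|e)^{1-s}$ (with a point mass $P_A(a_0)=1$ such a claim would read $M^{-s}\le c^{-s}M^{-s}$). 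The paper's own high-bin step is the plain subadditivity bound \eqref{28-3}; you should follow that (or keep $c^{-s}$ only with the linear term), and redo the low-bin step through the $g_1,g_2$ bookkeeping rather than a direct term-by-term bound on $\rme^{-sC_{1-s}}$.
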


For the converse parts of the exponents results and the   second-order results for the R\'enyi parameter being $1+s$ (with $s$ nonnegative), we evaluate the following    one-shot bounds.
\begin{lemma}\label{lem:os_c_2}
Fix $c> 1$ and $s\in [0,1]$.   Any hash function $f:\calA\to\calM=\{1,\ldots, M\}$ satisfies
\begin{align}
\rme^{-s C_{1-s}(f(A) | E  | P_{AE}  )}
& \le 
\sum_{ (a,e):P_{A|E} (a|e) \ge \frac{c}{M}}
P_E(e)P_{A|E}(a|e)^{1-s} M^{-s}
+\sum_e P_E(e) P_{A|E=e}\Big\{ a : P_{A|E} (a|e) < \frac{c}{M}\Big\}^{1-s}
\label{10-20-1b} \\
& \le 
P_{A,E}\Big\{ (a,e):P_{A|E} (a|e) \ge \frac{c}{M}\Big\} c^{-s}
+
P_{A,E}\Big\{ (a,e):P_{A|E} (a|e) < \frac{c}{M}\Big\}^{1-s} .
\label{10-20-1} 
\end{align}
 For the Gallager-type counterpart, for $s\in [0,1)$,
 \begin{align}  
 \rme^{-\frac{ s}{1-s}C_{1-s}^\uparrow(f(A) | E | P_{AE} )}   &\le \sum_e P_E(e)  \bigg[ P_{A|E=e}\Big\{a: P_{A|E}(a|e)\ge\frac{c}{M}\Big\}c^{-s} \nn\\*
 &\qquad \qquad+ P_{A|E=e}\Big\{a: P_{A|E}(a|e) < \frac{c}{M}\Big\}^{1-s}  \bigg]^{\frac{1}{1-s}} \label{eqn:os_conv_expG}.
\end{align}
%Inequalities \eqref{10-20-1b}--\eqref{10-20-1} hold  for    $s \in  (-\infty, 1]$ and \eqref{eqn:os_conv_expG} holds for $s \in [-1,1)$. 
In the other direction,  for $s\in [0,1]$, we have
\begin{align}
\rme^{s C_{1+s}(f(A) | E  | P_{AE}  )}
& \ge 
\sum_{ (a,e):P_{A|E} (a|e) \ge \frac{c}{M}}
P_E(e)P_{A|E}(a|e)^{1+s} M^{s} \nn\\*
&\qquad\qquad+\sum_e P_E(e) P_{A|E=e}
\Big\{ (a,e):P_{A|E} (a|e) < \frac{c}{M}\Big\}^{1+s}
\label{10-20-2b} \\
& \ge 
P_{AE}\Big\{ (a,e):P_{A|E} (a|e) \ge \frac{c}{M}\Big\} c^{s}
+
P_{AE}\Big\{ (a,e):P_{A|E} (a|e) < \frac{c}{M}\Big\}^{1+s}
\label{10-20-2}  .
\end{align}
 For the Gallager-type counterpart, for $s\in [0,1]$, we have
 \begin{align}
 \rme^{\frac{s}{1+s} C_{1+s}^\uparrow(f(A) | E  | P_{AE}  )}& \ge 
\sum_{ e} P_E(e) \bigg[ P_{A|E=E} \Big\{ a:P_{A|E}(a|e) \ge\frac{c}{M}\Big\}c^s   \nn\\*
&\qquad\qquad+ P_{A|E=E} \Big\{ a:P_{A|E}(a|e) <\frac{c}{M}\Big\}^{1+s}\bigg]^{\frac{1}{1+s}}
\label{10-20-2_205}  . % \\
% \rme^{\frac{s}{1+s} C_{1+s}^\uparrow(f(A) | E  | P_{AE}  )}& \ge 
%\sum_{ e} P_E(e) \bigg[ \sum_{a: P_{A|E}(a|e) \ge\frac{c}{M}} P_{A|E}(a|e)^{1+s}M^s \nn\\*
%&\qquad\qquad+ P_{A|E=E} \Big\{ a:P_{A|E}(a|e) <\frac{c}{M}\Big\}^{1+s}\bigg]^{\frac{1}{1+s}}
%\label{10-20-2_206}   . 
 \end{align}
\end{lemma}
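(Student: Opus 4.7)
The plan is to prove all four pairs of bounds by a single two-step mechanism: partition the conditional distribution according to the threshold $c/M$, then exploit the sub- or super-additivity of $x\mapsto x^{1\mp s}$ together with Jensen's inequality applied to the uniform counting measure on the range $\calM=\{1,\ldots,M\}$ of $f$. Throughout, let $B_e:=\{a\in\calA:P_{A|E}(a|e)\ge c/M\}$, and for each message $m\in\calM$ let $A_m:=f^{-1}(m)$, so that
\begin{equation}
P_{f(A)|E}(m|e)=P_{A|E=e}(A_m\cap B_e)+P_{A|E=e}(A_m\cap B_e^c).
\end{equation}

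For the upper bound \eqref{10-20-1b} I would invoke the subadditivity $(x+y)^{1-s}\le x^{1-s}+y^{1-s}$ (valid since $1-s\in[0,1]$) applied to $P_{f(A)|E}(m|e)^{1-s}$, then observe that $\sum_m P_{A|E=e}(A_m\cap B_e)^{1-s}\le\sum_{a\in B_e}P_{A|E}(a|e)^{1-s}$ by the same subadditivity inside each block $A_m\cap B_e$, whereas $\sum_m P_{A|E=e}(A_m\cap B_e^c)^{1-s}\le M^{s}\,P_{A|E=e}(B_e^c)^{1-s}$ by Jensen's inequality for the concave $x\mapsto x^{1-s}$ under the uniform distribution on $\calM$. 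Multiplying by $M^{-s}$ and averaging over $e$ against $P_E(e)$ yields \eqref{10-20-1b}. The lower bound \eqref{10-20-2b} is obtained by the mirror argument: super-additivity $(x+y)^{1+s}\ge x^{1+s}+y^{1+s}$ (since $1+s\ge 1$) and Jensen applied to the convex $x\mapsto x^{1+s}$ reverse both inequalities, and multiplying by $M^{s}$ completes the step.

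The cleaner forms \eqref{10-20-1} and \eqref{10-20-2} follow from these by two routine reductions. On $B_e$, use $P_{A|E}(a|e)M\ge c$ to bound $P_{A|E}(a|e)^{1\mp s}M^{\mp s}$ above (resp.\ below) by $P_{A|E}(a|e)\cdot c^{\mp s}$, collapsing the first term to a probability times $c^{\mp s}$; on the outer sum over $e$, apply Jensen once more---concavity of $x\mapsto x^{1-s}$ for the upper bound and convexity of $x\mapsto x^{1+s}$ for the lower bound---to convert $\sum_e P_E(e)P_{A|E=e}(B_e^c)^{1\pm s}$ into $P_{AE}\{(a,e):P_{A|E}(a|e)<c/M\}^{1\pm s}$.

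For the Gallager-type bounds \eqref{eqn:os_conv_expG} and \eqref{10-20-2_205}, I would repackage the inner-shell estimate already produced as
\begin{equation}
M^{s}\sum_m P_{f(A)|E}(m|e)^{1+s}\ge P_{A|E=e}(B_e)\,c^{s}+P_{A|E=e}(B_e^c)^{1+s},
\end{equation}
with the companion $1-s$ version carrying the opposite inequality. Combined with the identity
\begin{equation}
\rme^{\frac{s}{1+s}C_{1+s}^{\uparrow}(f(A)|E|P_{AE})}=\sum_e P_E(e)\Big[M^{s}\sum_m P_{f(A)|E}(m|e)^{1+s}\Big]^{\frac{1}{1+s}}
\end{equation}
(and its analogue with $s$ replaced by $-s$) together with the monotonicity of $x\mapsto x^{1/(1\pm s)}$, this delivers the two Gallager bounds. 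The only delicate aspect is bookkeeping: sub-additivity swaps with super-additivity between the $1-s$ and $1+s$ regimes, and so do the concavity/convexity invocations in Jensen's step; once these directions are tracked correctly in each of the four cases the computation is elementary.
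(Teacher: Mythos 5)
Your proposal is correct, and it reaches all six inequalities by a genuinely different route from the paper. The paper proves \eqref{10-20-1b}--\eqref{10-20-1} by first passing to the per-$e$ optimal hash $f_e=\argmin_f D_{1-s}(P_{f(A)|E=e}\|P_{\mix,\calM})$, establishing a structural lemma (via an exchange argument) that this optimizer is injective on atoms with $P_{A|E=e}(a)\ge 1/M$, and then merging all ``light'' output symbols into one auxiliary symbol and invoking the data-processing inequality for the R\'enyi divergence; the Gallager-type and $1+s$ cases are then obtained by reusing the per-$e$ ($|\calE|=1$) bound inside the outer $\frac{1}{1\pm s}$-power, exactly as you do. Your argument instead works with an arbitrary $f$ directly: writing $P_{f(A)|E}(m|e)=P_{A|E=e}(A_m\cap B_e)+P_{A|E=e}(A_m\cap B_e^c)$ and using sub/super-additivity of $x\mapsto x^{1\mp s}$ atom-by-atom on the heavy part, plus Jensen over the uniform measure on $\calM$ on the light part (yielding the factor $M^{\pm s}$ that cancels against the normalization), gives \eqref{10-20-1b} and \eqref{10-20-2b} for every $f$ without any appeal to optimality, to the injectivity lemma, or to data processing; the reductions to \eqref{10-20-1}, \eqref{10-20-2} via $P_{A|E}(a|e)M\ge c$ and an outer Jensen in $e$, and to \eqref{eqn:os_conv_expG}, \eqref{10-20-2_205} via the identity $\rme^{\frac{s}{1+s}C_{1+s}^\uparrow(f(A)|E|P_{AE})}=\sum_e P_E(e)\bigl[M^{s}\sum_m P_{f(A)|E}(m|e)^{1+s}\bigr]^{\frac{1}{1+s}}$ (and its $-s$ analogue) with monotonicity of $x\mapsto x^{1/(1\pm s)}$, are all sound. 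What each approach buys: yours is shorter, entirely elementary, treats the four regimes by one symmetric mechanism, and makes transparent that no optimization over hash functions is needed; the paper's detour buys a structural fact about divergence-optimal hash functions (injectivity on heavy atoms) that may be of independent interest and a conceptual framing in terms of information processing, at the cost of extra machinery. The only bookkeeping worth flagging is the degenerate endpoint $s=1$ of the non-Gallager bounds, where $x\mapsto x^{0}$ must be read with the usual convention $0^0=0$ so that the sub-additivity and Jensen steps remain valid; this is the same boundary convention implicit in the paper's statement.
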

For the converse parts of the second-order results for the R\'enyi parameter being $1 - s$ (with $s$ nonnegative), we need the following one-shot bound  as well as  \eqref{eqn:os_conv_eqG} and \eqref{eqn:os_conv_expG} although  the converse parts of the second-order results with R\'enyi parameter being $1+s$ require  \eqref{10-20-2b} and \eqref{10-20-2_205}.

\begin{lemma}\label{lem:os_c_3}
Fix $c>1$ and $s\in [0,1]$. Any hash function
$f:{\cal A} \to {\cal M}=\{1, \ldots, M\}$
satisfies 
%\begin{align}
%\rme^{-s C_{1-s}(f(A) | E  | P_{AE}  )}
%& \le 
%\sum_{ (a,e):P_{A|E} (a|e) \ge \frac{c}{M}}
%P_E(e)P_{A|E}(a|e)^{1-s} M^{-s}
%+\sum_e P_E(e) P_{A|E=e}\{ a : P_{A|E} (a|e) < \frac{c}{M}\}^{1-s}
%\label{10-20-1b} \\
%& \le 
%P_{A,E}\{ (a,e):P_{A|E} (a|e) \ge \frac{c}{M}\} c^{-s}
%+
%P_{A,E}\{ (a,e):P_{A|E} (a|e) < \frac{c}{M}\}^{1-s}
%\label{10-20-1} 
%\end{align}
%for $c\ge 1$ and $s \in [0,1]$. In addition, we also have the alternative bound
\begin{equation}
\rme^{-sC_{1-s}(f(A) | E | P_{AE}  )} \le 
c^{-s}P_{AE}\Big\{ (a,e) : P_{A|E}(a|e) \ge\frac{c}{M}\Big\}  
+ 2^{\frac{s}{1-s}}s^{\frac{s}{1-s} } P_{AE} \Big\{(a,e) : P_{A|E} (a|e)\le\frac{c}{M}\Big\} \label{eqn:os_conv_sec}  .
\end{equation}
%For the Gallager-type counterpart, for any $s\in [0,1)$, we have 
%\begin{align}
%\rme^{-\frac{s}{1-s} C_{1-s}^{\uparrow}(f(A)|E|P_{AE})}
%& \le 
%\sum_e P_E(e)
%\bigg[\sum_{ a:P_{A|E} (a|e) \ge \frac{c}{M}}
%P_{A|E}(a|e)^{1-s} M^{-s} \nn\\*
%&\qquad\qquad+
%P_{A|E=e}\Big\{ (a,e): P_{AE} (a|e) < \frac{c}{M}\Big\}^{1-s}
%\bigg]^{\frac{1}{1-s}}.  \label{eqn:os_gal_sec}%\\
%%\rme^{-\frac{s}{1-s}C_{1-s}^\uparrow(f(A) | E | P_{AE}  )}& \le 2^{\frac{s}{1-s}}\sum_e P_e(e) \bigg[ \Big(2c^{-s} P_{A|E=e}\Big\{ a: P_{A|E}(a|e) \ge\frac{c}{M} \Big\}\Big)^{\frac{1}{1-s}}  \nn\\*
%%&\qquad\qquad+ \Big(2^{\frac{1}{1-s}}s^{\frac{s}{1-s}}(1-s)P_{A|E=e}\Big\{ a: P_{A|E}(a|e) < \frac{c}{M} \Big\}\Big)^{\frac{1}{1-s}}  \bigg]. \label{eqn:os_gal_sec2}
%\end{align}
\end{lemma}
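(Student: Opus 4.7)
The plan is to start from the identity $\rme^{-sC_{1-s}(f(A)|E|P_{AE})}=M^{-s}\rme^{sH_{1-s}(f(A)|E|P_{f(A)E})}$, which follows from \eqref{eqn:relate_C_H} by substituting $s\mapsto -s$ and $A\mapsto f(A)$, and then to invoke the data-processing inequality \eqref{eqn:dpi2} at R\'enyi parameter $1-s\in[0,1]$.  Since $f$ is a (possibly stochastic) function of $A$, we have $H_{1-s}(f(A)|E|P_{f(A)E})\le H_{1-s}(A|E|P_{AE})$, and the coefficient $s\ge 0$ preserves the direction after exponentiation.  This yields the $f$-independent master bound
$$\rme^{-sC_{1-s}(f(A)|E|P_{AE})}\le M^{-s}\sum_{(a,e)} P_E(e)\, P_{A|E}(a|e)^{1-s},$$
and I would then split the right-hand sum according to whether $P_{A|E}(a|e)\ge c/M$ or $P_{A|E}(a|e)\le c/M$.

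For the high-probability region, I would use the pointwise identity $P_{A|E}(a|e)^{1-s}M^{-s}=P_{A|E}(a|e)\,(P_{A|E}(a|e)M)^{-s}$ together with the constraint $P_{A|E}(a|e)M\ge c$ (valid on this region, and here $s\ge 0$) to get $P_{A|E}(a|e)^{1-s}M^{-s}\le c^{-s}P_{A|E}(a|e)$.  Summation then contributes exactly the first term $c^{-s}P_{AE}\{(a,e):P_{A|E}(a|e)\ge c/M\}$ appearing on the right-hand side of \eqref{eqn:os_conv_sec}, with no further optimization needed.

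The real work is the low-probability region, where the goal is to bound
$$M^{-s}\sum_{(a,e):P_{A|E}(a|e)\le c/M} P_E(e)\, P_{A|E}(a|e)^{1-s}\le 2^{\frac{s}{1-s}}s^{\frac{s}{1-s}}\, P_{AE}\bigl\{(a,e):P_{A|E}(a|e)\le c/M\bigr\},$$
i.e.\ to convert a sum of concave $(1-s)$-th powers into a linear multiple of the probability mass of the low-probability region.  My plan is to linearize $x\mapsto x^{1-s}$ by the tangent-line inequality $x^{1-s}\le s\, x_0^{1-s}+(1-s)\, x_0^{-s}\, x$ at a free anchor point $x_0>0$, and then to choose $x_0$ so that the two terms it produces combine into the single expression $(2s)^{s/(1-s)}P_{AE}\{\,\cdot\le c/M\,\}$ after multiplication by $M^{-s}$ and summation.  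This is the same high-level strategy used to obtain \eqref{eqn:os_conv_eq} in Lemma~\ref{lem:os_c_1}; the constant here differs from the sharper constant there precisely by the $(1-s)$ factor, which gets absorbed into a coarser but structurally simpler bound.

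The main obstacle will be handling the intercept term $s\, x_0^{1-s}$ produced by the tangent linearization: unless $x_0$ is chosen judiciously, summing that term introduces a counting factor $\sum_e P_E(e)|\{a:P_{A|E}(a|e)\le c/M\}|$ that is not controlled by $P_{AE}\{\cdots\le c/M\}$ alone.  The constraint $P_{A|E}(a|e)\le c/M$ is exactly what allows us to convert that counting factor back into probability mass at the cost of the stated constant $2^{s/(1-s)}s^{s/(1-s)}$, and the clean choice of $x_0$ (whose order is dictated by balancing $s x_0^{1-s}$ against $(1-s)x_0^{-s}\cdot(c/M)$) is where the specific form of the constant appears.
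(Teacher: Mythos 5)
Your data-processing reduction and your treatment of the high-probability region are fine, but the low-probability step is a genuine gap that cannot be repaired within your approach. After discarding $f$ via \eqref{eqn:dpi2}, you are committed to proving
\[
M^{-s}\sum_{(a,e):P_{A|E}(a|e)\le c/M}P_E(e)\,P_{A|E}(a|e)^{1-s}\;\le\; 2^{\frac{s}{1-s}}s^{\frac{s}{1-s}}\,P_{AE}\Big\{(a,e):P_{A|E}(a|e)\le \tfrac{c}{M}\Big\},
\]
and this is false: take $E$ trivial and $P_A$ uniform on $N>cM$ atoms, so that every atom lies in the low region; for any fixed $s\in(0,1)$ the left side equals $(N/M)^{s}$, which diverges as $N\to\infty$, while the right side is the constant $(2s)^{s/(1-s)}$. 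Since your $f$-independent master bound $M^{-s}\sum_{a,e}P_E(e)P_{A|E}(a|e)^{1-s}$ already exceeds any bound of the form of the right-hand side of \eqref{eqn:os_conv_sec} in this example, no choice of the anchor $x_0$ in the tangent-line step can close the gap. The sentence that fails is the claimed conversion of the counting factor $\sum_e P_E(e)\,|\{a:P_{A|E}(a|e)\le c/M\}|$ back into probability mass: the constraint $P_{A|E}(a|e)\le c/M$ bounds probability by cardinality (times $c/M$), not cardinality by probability, and the cardinality of the low region can be of order $|\calA|\gg M$.

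The structural ingredient you lose by applying data processing all the way down to $A$ is exactly what the paper's proof preserves: it deduces \eqref{eqn:os_conv_sec} from \eqref{eqn:os_conv_eq} (more precisely from the display \eqref{10-21-2} in its proof) by elementary pointwise estimates such as $P_{A|E}(a|e)^{1-s}M^{-s}\le c^{-s}P_{A|E}(a|e)$ on $\{P_{A|E}\ge c/M\}$ together with $(1-s)\le 1$. The proof of \eqref{eqn:os_conv_eq} never leaves the hashed alphabet: it works with the $M$ bin probabilities $P_{f(A)|E}(m|e)$, splits each bin's mass into its small- and large-atom parts $h_{e,1}(m)+h_{e,2}(m)$ via \eqref{28-1}, and applies the optimization \eqref{28-2} to $g_1(h_{e,1}(m),1/M)$; because the reference value $1/M$ occurs once per bin, the resulting ``intercept'' contributes $M\cdot(1/M)=1$ in total, a constant, rather than a factor proportional to the number of small atoms. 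In other words, the count that enters is $M$, not $|\{a:P_{A|E}(a|e)\le c/M\}|$; that is precisely what your route discards, and reinstating it (by keeping the bin structure, or by merging all small atoms into a single symbol before any data-processing step, as in the proof of Lemma~\ref{lem:os_c_2}) essentially brings you back to the paper's argument.
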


\section{Proofs of the Asymptotic Results} \label{sec:prfs_asymp}
In this section, we prove the asymptotic results in Theorems~\ref{thm:equiv}, \ref{thm:exponents},  and~\ref{thm:second}.

{\em Notation:} Throughout, we let  $\ba=(a_1,a_2,\ldots,a_n)\in\calA^n$ and $\be=(e_1,e_2,\ldots,e_n)\in\calE^n$ denote deterministic  length-$n$ strings. We also let $A^n=(A_1,A_2,\ldots, A_n)$ and $E^n=(E_1,E_2,\ldots, E_n)$ denote random vectors of length $n$. We adopt the exponential equality notation: $a_n\doteq b_n$   if and only if  $\lim_{n\to\infty}\frac{1}{n}\log\frac{a_n}{b_n}= 0$.%  (resp.\ $\liminf_{n\to\infty}\frac{1}{n}\log\frac{a_n}{b_n}\ge 0$). Naturally, $a_n\doteq b_n$ iff $a_n\dotleq b_n$ and $a_n\dotgeq b_n$.

Given a random variable $X$ with distribution (probability mass function) $P$,  we denote the expectation of a function of the random variable $g(X)$ by $\bbE [g(X)] =\sum_x P(x)g(x)$. If we want to make the dependence of the expectation on $X$ or $P$ explicit, we write $\bbE_X[g(X)]$ or $\bbE_P[g(X)]$. The same comment applies to the variance operator which we denote interchangeably  as $\var  [g(X)]$, $\var_X [g(X)]$ or $\var_P [ g(X)]$. 
\subsection{Proof of Theorem \ref{thm:equiv}}\label{sec:prf_equiv}

\subsubsection{Direct Parts}
We first prove the direct parts (upper bounds). 
\paragraph{Proof of the upper bound of~\eqref{eqn:C1s}}
The bound in \eqref{eqn:os_direct_pluss}  implies that 
\begin{align}
& C_{1+s} (f_{X_n}(A^n)|E^nX_n|P_{AE}^n\times P_{X_n}))  \nn\\*
%&=\bbE_{X_n}\left[ \frac{1}{s}\log\left[ \rme^{-s H_{1+s}(f_{X_n}(A^n) | E^n|P_{AE}^n )}M_n^s \right]  \right] \label{eqn:equiv_direct_prf1} \\*
%&  \le\frac{1}{s}\log\bbE_{X_n}\big[ \rme^{-s H_{1+s}(f_{X_n}(A^n) | E^n|P_{AE}^n)}M_n^s\big] \label{eqn:equiv_direct_prf2}\\*
&\le \frac{1}{s} \log\big(\epsilon^s+M_n^s \rme^{-sH_{1+s}( A^n  | E^n|P_{AE}^n)}\big)\label{eqn:equiv_direct_prf4}\\
&= \frac{1}{s} \log\big(\epsilon^s+M_n^s \rme^{-nsH_{1+s}( A   | E |P_{AE} )}\big) . \label{eqn:equiv_direct_prf3}
\end{align} 
For $\epsilon$ being a constant,  this achieves the upper bound of~\eqref{eqn:C1s} upon normalizing by $n$ and taking the $\limsup$.   

\paragraph{Proof of the upper bound of~\eqref{eqn:C1sG}}
The bound in~\eqref{eqn:os_direct_G_pluss} implies that  
\begin{align}
&  C_{1+s}^\uparrow (f_{X_n}(A^n)|E^nX_n|P_{AE}^n\times P_{X_n}))  \nn\\*
%&=\bbE_{X_n}\left[ \frac{1+s}{s}\log\left[ \rme^{-\frac{s}{1+s} H_{1+s}^{\uparrow}(f_{X_n}(A^n) | E^n|P_{AE}^n )}M_n^{\frac{s}{1+s}}  \right]  \right] \label{eqn:equiv_direct_prf_g_1} \\*
%&  \le\frac{1+s}{s}\log\bbE_{X_n}\big[ \rme^{-\frac{s}{1+s}  H_{1+s}^\uparrow(f_{X_n}(A^n) | E|P_{AE}^n)}M_n^{\frac{s}{1+s}} \big] \label{eqn:equiv_direct_prf_g_2} \\*
&\le \frac{1+s}{s} \log\big(\epsilon^{\frac{s}{1+s} }+M_n^{\frac{s}{1+s} } \rme^{-\frac{s}{1+s}  H_{1+s}(  A^n | E^n|P_{AE}^n)}\big)\label{eqn:equiv_direct_prf_g_4}\\
&= \frac{1+s}{s} \log\big(\epsilon^{\frac{s}{1+s} }+M_n^{\frac{s}{1+s} } \rme^{-n\frac{s}{1+s}  H_{1+s}(  A| E |P_{AE} )}\big).\label{eqn:equiv_direct_prf_g_3}
\end{align}
This   leads to the upper bound of \eqref{eqn:C1sG} for constant $\epsilon$ upon normalizing by $n$ and taking the $\limsup$.   

\paragraph{Proof of the upper bound of~\eqref{eqn:C1s_minus}}
To obtain \eqref{eqn:C1s_minus}, we  employ Cram\'er's theorem~\cite{Dembo} on the sequence of random variables   $-\log P_{A|E}^n(A^n|E^n)=\sum_{i=1}^n -\log P_{A|E}(A_i|E_i)$ under the product joint distribution $P_{AE}^n$. It is easy to see by using exponential tail bounds that
\begin{align}
&\lim_{n\to\infty}-\frac{1}{n}\log\sum_{(\ba,\be): P_{A|E}^n(\ba|\be) <   \rme^{-nR}}P_{AE}^n(\ba,\be)  \nn\\*
 &\quad=\lim_{n\to\infty}-\frac{1}{n}\log P_{AE}^n \left\{ (\ba,\be): \frac{1}{n}\sum_{i=1}^n\log P_{A|E} (a_i|e_i)\le -R\right\}\label{eqn:cramer0}\\
&\quad=\max_{ t\ge 0} t(R-H_{1-t} (A|E|P_{AE})).  \label{eqn:cramer1}
\end{align}
Note that the cumulant generating function of the random variable $-\log P_{A|E}(A|E)$ under the joint distribution $P_{AE}$ can be expressed in terms of the conditional R\'enyi entropy as
\begin{equation}
 \log\bbE_{  P_{AE}}\left[ \rme^{  t (-\log  P_{A|E}(A|E))} \right] = t H_{1-t}(A|E|P_{AE}),
\end{equation}
explaining the presence of this term in \eqref{eqn:cramer1}.
We again apply (a generalized version of) Cramer's theorem\footnote{The standard Cram\'er's theorem~\cite[Section 2.2]{Dembo} (or Sanov's theorem~\cite[Section 2.1]{Dembo}) is a  large-deviations result concerning the exponent  of $P^n(\calB)$ where $P$ is a {\em probability} measure and $\calB$ is an event in the sample space $\Omega$.  If $P$ is not necessarily a probability measure but a finite non-negative measure (as it is in our applications), say $\mu$,    Cram\'er's theorem  clearly also applies by defining the  new {\em probability} measure $\calB\mapsto \widetilde{P}(\calB) := \mu(\calB)/\mu(\Omega)$.  \label{fn:cramer} }  to  the sequence of random variables   $\log P_{A|E}^n(A^n|E^n)$  under the sub-distribution (non-negative product measure) $P_{AE}^n(\ba,\be)(P_{A|E}^{-s})^n(\ba|\be)$ and event $\{(\ba,\be):\log P_{A|E}^n(\ba|\be) \ge  -nR\}$. Note that   the cumulant generating function in this case  is 
\begin{align}
 \tau_s(t)& := \log\sum_{a,e} P_{AE}(a,e) P_{A|E}^{-s}(a|e) \exp \left(t\log  {P_{A|E}(a|e)}\right)\\
&  =  (s-t) H_{1-(s-t)}(A|E|P_{AE}),
\end{align}
and by direct differentiation, we also have that 
\begin{equation}
\tau_s'(0)=-\hatR_{-s}
\end{equation}
where $\hatR_s$ is defined in \eqref{eqn:crit_rate1} (cf.\ $\hatR_{-s}$ is presented in a different form in \eqref{eqn:minus_equivalence}).  Thus, by Cram\'er's theorem,
\begin{align}
& \sum_{(\ba,\be): P_{A|E}^n(\ba|\be) \ge \rme^{-nR}}  P_{AE}^n(\ba,\be)P_{A|E}^n(\ba|\be)^{-s}  \doteq \frac{ \bbE[ \rme^{t \log P_{A|E}(A|E)} ] }{\rme^{-tnR} } \label{eqn:expectation}\\
& \qquad = \exp\left[ -n \left(- tR - \log \sum_{a,e} P_{AE}(a,e) P_{A|E}(a|e)^s \rme^{t \log P_{A|E}(A|E)}  \right)\right]\\
&\qquad = \exp\left[ -n \left(- tR  +(t-s) H_{1+(t-s) }(A|E|P_{AE}) \right)\right],\label{eqn:expectation3}
%& \qquad=\max_{t\ge 0} \left\{  - tR + (t-s) H_{1+(t-s)}(A|E|P_{AE})\right\} .\label{eqn:max_neg_t
\end{align}
where in \eqref{eqn:expectation}, the ``expectation'' $\bbE$ is taken with respect to the non-negative measure $(a,e)\mapsto P_{AE}(a,e)P_{A|E}(a|e)$. 
Since $t\ge 0$ is arbitrary,
\begin{align}
&-\frac{1}{n}\log\sum_{(\ba,\be): P_{A|E}^n(\ba|\be) \ge \rme^{-nR}}  P_{AE}^n(\ba,\be)P_{A|E}^n(\ba|\be)^{-s}   \nn\\*
&\qquad = \max_{t\ge 0 }   \big\{ - tR + (t-s) H_{1+(t-s)}(A|E|P_{AE})\big\} \label{eqn:app_cramer}
\end{align}
%In a very similar way as \eqref{eqn:cramer0}--\eqref{eqn:cramer1} (considering the two cases separately), this yields
For the case where $R\ge\hatR_{-s}$, the constraint in the optimization above is active, i.e., $t^*=0$ because the function $s\mapsto \hatR_{-s}$ is monotonically non-decreasing as described in Section~\ref{sec:info_measures}. Conversely, when $R\le\hatR_{-s}$, the constraint is inactive, i.e., the maximum is realized with $R=\hatR_{-(s-t)}$.  Thus, we obtain  
%Considering the two cases ($R\ge \hatR_s$ and $R\le \hatR_s$) separately, we obtain
%When $R\ge \hatR_s$, the maximizer is $t=0$ and when $R\le \hatR_s$, the maximizer is 
\begin{align}
&\lim_{n\to\infty}-\frac{1}{n}\log \sum_{(\ba,\be): P_{A|E}^n(\ba|\be) \ge \rme^{-nR}}  P_{AE}^n(\ba,\be)P_{A|E}^n(\ba|\be)^{-s}\rme^{-snR} \nn\\*
&\qquad =\left\{ \begin{array}{cc}
s(R-H_{1-s}(A|E|P_{AE})) &\mbox{if }  R\ge \hatR_{-s}\\
\max_{t'\le s}t'(R-H_{1-t'}(A|E|P_{AE}) ) & \mbox{if }R\le \hatR_{-s}
\end{array} \right.  \label{eqn:cramer2} ,
\end{align} 
where the second clause follows by the substitution $t'=s-t$.  
%\textcolor{red}{I can readily get the first line. but I don't know how to get the second line (ratio $t'/s$) in \eqref{eqn:cramer2} from \eqref{eqn:max_neg_t}. I think the dividing by $s$ should not be there because later we divide by $s$ in the application of the one-shot bound in~\eqref{eqn:use_one_shot}. } 
Now with these preparations, we can employ the one-shot bound in~\eqref{eqn:os_direct_minuss} with $\epsilon=1$ to prove the direct part of \eqref{eqn:C1s_minus} as follows: Since \eqref{eqn:cramer2} is not greater than \eqref{eqn:cramer1}, the former dominates in the exponent  and we obtain
\begin{align}
&\limsup_{ n\to\infty} \frac{1}{n}C_{1-s} (f_{X_n}(A^n) | E^n X_n|P_{AE}^n\times P_{X_n}) \nn\\*
&=-\frac{1}{s}\liminf_{ n\to\infty} \frac{1}{n}\log\left[ \rme^{-s C_{1-s} (f_{X_n}(A^n) | E^nX_n|P_{AE}^n \times P_{X_n})} \right] \\
%&=-\frac{1}{s}\liminf_{ n\to\infty} \frac{1}{n}\log \bbE_X \big[ \rme^{-s C_{1-s} (f_X(A^n) | E^n|P_{AE}^n)} \big] \label{eqn:swap_exp} \\
&\le -\frac{1}{s}\liminf_{ n\to\infty} \frac{1}{n}\log \bigg[ 2^{-s}  \sum_{(\ba,\be) : P_{A|E}^n(\ba|\be)  <  \rme^{-nR}}   P_{AE}^n (\ba,\be) \nn\\*
&\qquad\qquad+ 2^{-s}  \sum_{(\ba,\be) : P_{A|E}^n(\ba|\be)  \ge \rme^{-nR}}  P_{AE}^n(\ba,\be) P_{A|E}^n(\ba|\be)^{-s} \rme^{-snR} \bigg], \label{eqn:use_one_shot}
\end{align}
where  \eqref{eqn:use_one_shot} follows from~\eqref{eqn:os_direct_minuss}. Now we combine the asymptotic results in \eqref{eqn:cramer1} and \eqref{eqn:cramer2} to evaluate  the asymptotic behavior of~\eqref{eqn:use_one_shot}. In particular, we take into consideration the scaling factor $\frac{1}{s}$. We also note that the domain of maximization of $t$ in \eqref{eqn:cramer1}  and $t'$ in the second clause of \eqref{eqn:cramer2} are $[0,\infty)$ and $(\infty, s]$ respectively. So the intersection of these domains is $[0,s]$ and the eventual max should thus be taken over $[0,s]$. Uniting these observations, we obtain that  the upper bound 
\begin{align}
&\limsup_{ n\to\infty} \frac{1}{n}C_{1-s} (f_{X_n}(A^n) | E^n X_n|P_{AE}^n\times P_{X_n}) \nn\\*
&\qquad \le\left\{ \begin{array}{cc}
 R-H_{1-s}(A|E|P_{AE} ) &\mbox{if }  R\ge \hatR_{-s}\\
\max_{t \in [0, s]} \frac{t}{s} (R-H_{1-t}(A|E|P_{AE}) ) & \mbox{if }R\le \hatR_{-s}
\end{array} \right. .  \label{eqn:intersection_domains}
\end{align}

\paragraph{Proof of the upper bound of~\eqref{eqn:C1sG_minus}}
 The upper bound of \eqref{eqn:C1sG_minus} proceeds in an analogous manner. It proceeds in five distinct steps, each detailed in the following five paragraphs. 
 
In Step 1, we manipulate  the one-shot bound in \eqref{eqn:os_direct_G_minuss} with $\epsilon=1$ as follows: 
\begin{align}
&\limsup_{ n\to\infty} \frac{1}{n}C_{1-s}^\uparrow (f_{X_n}(A^n) | E^n X_n|P_{AE}^n\times P_{X_n}) \nn\\* 
&=-\frac{1-s}{s}\liminf_{ n\to\infty} \frac{1}{n}\log \left[\rme^{- \frac{s}{1-s} C_{1-s}^\uparrow (f_{X_n}(A^n) | E^nX_n|P_{AE}^n \times P_{X_n})} \right]\\* 
&\le-\frac{1-s}{s}\liminf_{n\to\infty} \log\bigg[ \frac{1}{2}\rme^{-\frac{s}{1-s} nR} \sum_{\be}P_E^n(\be) \bigg(\sum_{\ba: P_{A|E}^n(\ba|\be)\ge\epsilon \rme^{-nR}} P_{A|E}^n(\ba|\be)^{1-s} \bigg)^{\frac{1}{1-s}}\nn  \\*
&\qquad\qquad 
 + \frac{1}{2\epsilon^{\frac{s}{1-s}}}\sum_{\be}P_E^n(\be) \bigg(\sum_{\ba: P_{A|E}^n(\ba|\be) < \epsilon \rme^{-nR}} P_{A|E}^n(\ba|\be)^{1-s} \bigg)^{\frac{1}{1-s}}  \bigg]   \label{eqn:1shot_split} .
\end{align} 
In the following two steps, we evaluate the first and second terms in the $\liminf$ in \eqref{eqn:1shot_split}. 

In Step 2, we evaluate the second term in the $\liminf$ in \eqref{eqn:1shot_split} as it is simpler and provides the intuition and techniques for evaluating the first term. For this, we need to employ   the   G\"artner-Ellis theorem~\cite{Dembo} (instead of Cramer's theorem). Doing so to the sequence of  random variables $-\log P_{A|E}^n(A^n|\be) = \sum_{i=1}^n-\log P_{A|E}(A_i|e_i)$  with $\be$ of fixed type~\cite{Csi97} and $A^n$ with the  memoryless distribution $P_{A|E}^n(\cdot|\be)$, as will be shown in the following, we obtain
 \begin{align}
& \lim_{n\to\infty}-\frac{1}{n}\log\sum_{\be} P_E^n(\be) \bigg( \sum_{\ba: P_{A|E}^n(\ba|\be) < \epsilon \rme^{-nR}} P_{A|E}^n(\ba|\be)\bigg)^{\frac{1}{1-s}} \nn\\*
& \qquad=   \max_{t\ge 0} \,\, \frac{t}{1-s} \left( R - H_{1-t| 1-s} (A|E|P_{AE})  \right) , \label{eqn:use_cramer}
 \end{align}
 where $H_{1-t| 1-s} (A|E|P_{AE})$ is the two-parameter conditional R\'enyi entropy defined in \eqref{eqn:two_param}.   To show \eqref{eqn:use_cramer}, consider $\be\in \calT_{Q} = \{ \be \in\calE^n: \mathrm{type}(\be)=Q\}$. Then the G\"artner-Ellis theorem~\cite{Dembo} yields that
\begin{equation}
  P_{A^n|E^n=\be} \Big\{\ba: P_{A|E}^n(\ba|\be) < \epsilon \rme^{-nR} \Big\}  \doteq \exp \left(- n\max_{t\ge 0}\bigg[tR  - \bbE_Q  \log \sum_{a} P_{A|E}^{1-t} (a|E) \bigg] \right),
 \end{equation} 
 where 
 \begin{equation}
 \bbE_Q  \log \sum_{a} P_{A|E}^{1-t} (a|E) =\sum_e Q(e)  \log \sum_{a} P_{A|E}^{1-t} (a|e). \label{eqn:bbEQ}
 \end{equation}
Let $\calP_n(\calE)$ be the set of $n$-types with alphabet $\calE$. Splitting the sum  on the left-hand-side in \eqref{eqn:use_cramer}   into the polynomially many $n$-types on $\calE$, we obtain
 \begin{align}
&\sum_{\be \in\calE^n} P_E^n(\be) \bigg( \sum_{\ba: P_{A|E}^n(\ba|\be) < \epsilon \rme^{-nR}} P_{A|E}^n(\ba|\be)\bigg)^{\frac{1}{1-s}}  \nn\\*
&\doteq\sum_{Q\in\calP_n(\calE) } P_E^n(\calT_Q) \exp \left(- \frac{n}{1-s}\max_{t\ge 0}\bigg[tR  - \bbE_Q  \log \sum_{a} P_{A|E}^{1-t} (a|E) \bigg]\right) \label{eqn:split_type}\\
&\doteq  \max_{Q \in\calP_n(\calE)}\exp\big(-n D( Q\| P_E)\big)\exp \left(- \frac{n}{1-s}\max_{t\ge 0}\bigg[tR  -\bbE_Q  \log \sum_{a} P_{A|E}^{1-t} (a|E) \bigg] \right)\label{eqn:type_cl}\\
&\doteq \exp\left(-n \min_{Q} \max_{t\ge 0} \bigg[ \frac{tR}{1-s}  - \frac{\bbE_Q  \log \sum_{a} P_{A|E}^{1-t} (a|E) }{1-s} +D(Q\| P_E) \bigg]\right)\\
& =\exp\left(-n \max_{t\ge 0}  \min_{Q}\bigg[ \frac{tR}{1-s}  - \frac{\bbE_Q  \log \sum_{a} P_{A|E}^{1-t} (a|E) }{1-s} +D(Q\| P_E) \bigg]\right)  \label{eqn:split_ty} 
 \end{align}
 where \eqref{eqn:type_cl} follows from the fact that $P_E^n(\calT_Q)\doteq \exp(-nD( Q\| P_E))$ \cite[Ch.~2]{Csi97},  the swapping of $\min$ and $\max$ in~\eqref{eqn:split_ty}  follows from the  fact that the objective function is convex  and concave in $Q$ and  $t$ respectively, $Q$ resides in a compact, convex set (the probability simplex) and $t$ resides in a convex  set $[0,\infty)$ (Sion's minimax theorem~\cite{sion56}). Now 
by straightforward calculus, the optimizing distribution  for fixed $t$ is 
 \begin{equation}
 Q^*  (e) = \frac{P_E(e) \big( \sum_a P_{A|E}^{1-t}(a|e) \big)^{\frac{1}{1-s} }}{Z_{s,t} }\end{equation}
 where the normalizing constant   (partition function)
 \begin{equation}
  Z_{s,t}:= \sum_{e } P_E(e ) \Big( \sum_a P_{A|E}^{1-t}(a|e )  \Big)^{\frac{1}{1-s}}.
 \end{equation}
Plugging this into \eqref{eqn:split_ty} we obtain
\begin{align}
\sum_{\be \in\calE^n} P_E^n(\be) \bigg( \sum_{\ba: P_{A|E}^n(\ba|\be) < \epsilon \rme^{-nR}} P_{A|E}^n(\ba|\be)\bigg)^{\frac{1}{1-s}}  \doteq \exp\left(  -n\max_{t\ge 0}\frac{t}{1-s} \bigg[ R-\frac{1-s}{t}Z_{s,t} \bigg]\right) \label{eqn:after_plugging}
\end{align}
which then yields \eqref{eqn:use_cramer}. Note that we have to use the G\"artner-Ellis theorem  (and not Cramer's theorem) because the collection of random variables $\{-\log P_{A|E}(A_i| e_i) : i=1,\ldots, n\}$ is independent but {\em not identically distributed}. 

In Step 3, we evaluate the first term in the $\liminf$ in \eqref{eqn:1shot_split} again by applying the G\"artner-Ellis theorem~\cite{Dembo} to the sequence of random variables $\log P_{A|E}^n(A^n|\be) =\sum_{i=1}^n \log P_{A|E}(A_i| e_i)$ with non-negative measure $P_{A|E}^n(\cdot|\be)^{1-s}$, we have
\begin{align}
\sum_{\ba: P_{A|E}^n(\ba|\be) \ge \epsilon \rme^{-nR} } 
P_{A|E}^n(\ba|\be)^{1-s} \doteq \exp\left( -n \max_{t\ge 0}\bigg[-tR -  \bbE_Q \log\sum_a P_{A|E}^{1-(s-t)}(a|E)\bigg]\right)
\end{align}
where $Q$ is the type of $\be$ and $\bbE_Q \log\sum_a P_{A|E}^{1-(s-t)}(a|E)$ is defined in  \eqref{eqn:bbEQ}. 
So by using a type partitioning argument of sequences $\be$ similarly to~\eqref{eqn:split_type}--\eqref{eqn:after_plugging}, we obtain
\begin{align}
&\sum_{\be} P_E^n(\be)  
\bigg(\sum_{\ba: P_{A|E}^n(\ba|\be) \ge \epsilon \rme^{-nR} } 
P_{A|E}^n(\ba |\be)^{1-s}\bigg)^{\frac{1}{1-s}}    \nn\\*
&\qquad\doteq   \exp\left(  -n \max_{t\ge 0} \bigg[ -\frac{tR}{1-s} +\log\sum_e P_E(e)\Big( \sum_a P_{A|E}^{1-(s-t)}(a|e)\Big)^{ \frac{1}{1-s}}\bigg]\right).
\end{align}
Consequently, considering the two different cases similarly to~\eqref{eqn:cramer2}, we obtain
\begin{align}
& \lim_{n \to \infty}-\frac{1}{n}\log  \left[
\rme^{-\frac{s}{1-s}nR}\sum_{\be} P_E^n(\be)  
\bigg(\sum_{\ba: P_{A|E}^n(\ba|\be) \ge \epsilon \rme^{-nR} } 
P_{A|E}^n(\ba|\be)^{1-s}\bigg)^{\frac{1}{1-s}} \right] \nn\\*
&\qquad = 
\left\{
\begin{array}{ll}
\frac{s}{1-s}(R- H_{1-s}^{\uparrow}(A|E|P_{A E}))
& \hbox{ if }
R \ge  \frac{\rmd}{\rmd t} \,  tH_{1-t|1-s}(A|E|P_{A E}) \big|_{t=s} \\
\max_{t  \in [0,s]} \frac{t}{1-s} (R- H_{1-t|1-s}(A|E|P_{A E}))
& \hbox{ if }
R \le \frac{\rmd}{\rmd t} \,  tH_{1-t|1-s}(A|E|P_{A E}) \big|_{t=s} 
\end{array} 
\right.  . \label{eqn:use_ge}
\end{align}
%%\textcolor{red}{again, i think the divide by $1-s$ shouldn't be in the second clause of~\eqref{eqn:use_ge}.} 
%Indeed, if the type of $\be$ is $Q$, the G\"artner-Ellis theorem with non-negative measure  $P_{A|E}^n(\cdot|\be)^{1-s}$ gives
%\begin{align}
%\sum_{\ba: P_{A|E}^n(\ba|\be) \ge \epsilon \rme^{-nR} } 
%P_{A|E}^n(\ba|\be)^{1-s} \doteq \exp\left( -n \max_{t\ge 0}\bigg[-tR -  \bbE_Q \log\sum_a P_{A|E}^{1-(s-t)}(a|E)\bigg]\right).
%\end{align}
%so by using a type partitioning argument similarly to~\eqref{eqn:split_type}--\eqref{eqn:after_plugging}, we obtain
%\begin{align}
%&\sum_{\be} P_E^n(\be)  
%\bigg(\sum_{\ba: P_{A|E}^n(\ba|\be) \ge \epsilon \rme^{-nR} } 
%P_{A|E}^n(\ba|\be)^{1-s}\bigg)^{\frac{1}{1-s}}    \nn\\*
%&\qquad\doteq   \exp\left(  -n \max_{t\ge 0} \bigg[ -\frac{tR}{1-s} +\log\sum_e P_E(e)\Big( \sum_a P_{A|E}^{1-(s-t)}(a|e)\Big)^{ \frac{1}{1-s}}\bigg]\right).
%\end{align}
%Thus by considering the two different cases per~\eqref{eqn:cramer2}, we obtain  \eqref{eqn:use_ge}. 

In Step 4, we put together the asymptotic  evaluations in~\eqref{eqn:use_cramer}  and~\eqref{eqn:use_ge} in the bound in \eqref{eqn:1shot_split}. We observe that \eqref{eqn:use_cramer} is not smaller than   \eqref{eqn:use_ge}. Thus, the former dominates the exponential behavior, and  plugging \eqref{eqn:use_cramer} into \eqref{eqn:1shot_split}, we obtain
\begin{align}
&\limsup_{ n\to\infty} \frac{1}{n}C_{1-s}^\uparrow (f_{X_n}(A^n) | E^n X_n|P_{AE}^n\times P_{X_n}) \nn\\* 
&\le\left\{
\begin{array}{ll}
 R- H_{1-s}^{\uparrow}(A|E|P_{A E})
& \hbox{ if }
R \ge  \frac{\rmd}{\rmd t} \,  tH_{1-t|1-s}(A|E|P_{A E}) \big|_{t=s} \\
\max_{t\in [0,s]} \frac{t}{s} (R- H_{1-t|1-s}(A|E|P_{A E}))
& \hbox{ if }
R \le \frac{\rmd}{\rmd t} \,  tH_{1-t|1-s}(A|E|P_{A E}) \big|_{t=s} 
\end{array}
\right. \label{eqn:split_two_case_G}  .
\end{align}
%\textcolor{red}{again, i think the divide by $s$ shouldn't be in the second clause of~\eqref{eqn:split_two_case_G}.} 

Finally in Step 5, we   show  that the transition rate in \eqref{eqn:split_two_case_G} 
\begin{equation}
\frac{\rmd}{\rmd t} \,  tH_{1-t|1-s}(A|E|P_{A E}) \big|_{t=s}  = \hatR_{-s}^\uparrow \label{eqn:hatR_der}
\end{equation}
as follows.
Since $\max_{t}H_{1-s|1-t}(A|E|P_{A E})=
H_{1-s|1-s}(A|E|P_{A E})$ (i.e., the maximum is attained at $t=s$),
\begin{align}
\frac{\rmd}{\rmd t} \, s H_{1-s|1-t}(A|E|P_{A E})\big|_{t=s} 
=s \frac{\rmd}{\rmd t} H_{1-s|1-t}(A|E|P_{A E})\big|_{t=s} =0. \label{eqn:zero_der}
\end{align}
Hence, choosing $t_1=t_2=t$,
we have 
\begin{align}
  \hatR_{-s}^\uparrow&=\frac{\rmd}{\rmd t} \, t H_{1-t}^\uparrow(A|E|P_{A E}) \big|_{t=s} \\
&=\frac{\rmd}{\rmd t} \, t H_{1-t|1-t}(A|E|P_{A E}) \big|_{t=s}  \label{eqn:use_equivalence}\\
&=
\frac{\rmd t_1}{\rmd t}
\frac{\partial}{\partial t_1} 
\, t_1 H_{1-t_1|1-t_2}(A|E|P_{A E}) \big|_{t=s}  
+
\frac{\rmd t_2}{\rmd t}
\frac{\partial}{\partial t_2} 
\, t_1 H_{1-t_1|1-t_2}(A|E|P_{A E}) \big|_{t=s}  
 \label{eqn:chain_rule} \\
&=\frac{\rmd}{\rmd t} \, t H_{1-t|1-s}(A|E|P_{A E}) \big|_{t=s}
+\frac{\rmd}{\rmd t} \, s H_{1-s|1-t}(A|E|P_{A E}) \big|_{t=s}  \label{eqn:split_derivatives}\\
&=\frac{\rmd}{\rmd t} \, t H_{1-t|1-s}(A|E|P_{A E}) \big|_{t=s} \label{eqn:use_zero_der} ,
\end{align}
where \eqref{eqn:use_equivalence} follows from the relation in~\eqref{eqn:minus_equivalence2}, \eqref{eqn:chain_rule} follows from the chain rule,  and \eqref{eqn:use_zero_der} follows from~\eqref{eqn:zero_der}. 
%Finally, note that 
%\begin{equation}
%\frac{\rmd}{\rmd t} \,  tH_{1-t|1-s}(A|E|P_{A E}) \big|_{t=s}  = \hatR_{-s}^\uparrow
%\end{equation}
%because by~\eqref{eqn:two_to_one} and \eqref{eqn:minus_equivalence2}, we have 
%\begin{align}
%\frac{\rmd}{\rmd t} \, t H_{1-t|1-t}(A|E|P_{A E}) \big|_{t=s}=\frac{\rmd}{\rmd t} \, t H_{1-t}^\uparrow(A|E|P_{A E}) \big|_{t=s} &=\hatR_{-s}^\uparrow
%\end{align}
%and 
%\begin{align}
%\frac{\rmd}{\rmd s} \, tH_{1-t|1-s}(A|E|P_{A E})\big|_{s=t} =0.
%\end{align}
%due to continuity and the equalities in~\eqref{eqn:two_to_one} and \eqref{eqn:minus_equivalence2}.   
The relations in \eqref{eqn:split_two_case_G}  and \eqref{eqn:hatR_der} complete  the justification of the upper bound of~\eqref{eqn:C1sG_minus}. 
\subsubsection{Converse Parts}

For the converse, we do not consider the common randomness $X_n$ (i.e., $X_n=\emptyset$) since the bound must hold for {\em all} (not just $\epsilon$-almost  universal$_2$) hash functions $f_{X_n}$. This statement applies to the proofs of all converse bounds in the sequel. 

\paragraph{Proofs of the lower bounds of~\eqref{eqn:C1s} and \eqref{eqn:C1sG}}
The lower bounds to \eqref{eqn:C1s} and \eqref{eqn:C1sG} can be easily obtained by using the data processing inequalities for R\'enyi conditional  entropies and their Gallager-type counterparts in \eqref{eqn:dpi1}--\eqref{eqn:dpi3}.% , i.e., $H_{1+s}(f(A)|E|P_{AE}) \le H_{1+s}(A |E|P_{AE})$  for any function $f$, and similarly for the Gallager form of the R\'enyi conditional entropy $H_{1+s}^\uparrow$. 

\paragraph{Proof of the lower bound  of~\eqref{eqn:C1s_minus}}
Now for \eqref{eqn:C1s_minus}, we note that when $R\ge \hatR_{-s}$, we have 
\begin{align}
   s(R -  H_{1-s}(A|E|P_{AE})) \le \max_{t\in [0,s]} t(R -  H_{1-t}( A|E|P_{AE}))   \label{eqn:R_one_dir}
\end{align}
and when $R\le \hatR_{-s}$, equality holds since $t=s$ attains the maximum. Fix $t\in [0,s]$. From~\eqref{eqn:os_conv_eq},  we obtain the bound
\begin{align}
\rme^{-s C_{1-s}(f(A^n)|E^n|P_{A E}^n)}
& \le 
c^{-s}
\sum_{\be} P_{E^n}(\be) \sum_{\ba: P_{A|E}^n(\ba|\be) \ge \frac{c}{M_n} }
 P_{A|E}^n(\ba|\be)^{1-s} M_n^{-s}
\nn\\*
&\qquad\qquad+ 
2^{\frac{s}{1-s}} s^{\frac{s}{1-s}} (1-s)
P_{A E}^n \Big\{(\ba,\be) :  P_{A|E}^n(\ba|\be) < \frac{c}{M_n} \Big\}  \\
& \le 
c^{-s}
\rme^{s H_{1-s}(A^n|E^n|P_{A E}^n)} M_n^{-s}
+ 
2^{\frac{s}{1-s}} s^{\frac{1}{1-s}-1} (1-s)
\rme^{t H_{1-t}(A^n|E^n|P_{A E}^n)} \Big(\frac{M_n}{c}\Big)^{-t}, \label{eqn:apply_markov}
\end{align}
where in   \eqref{eqn:apply_markov}, we upper bounded the probability in the second term using Markov's inequality, i.e.,  for any $t\in [0,s]$, 
\begin{align}
P_{A E}^n \Big\{(\ba,\be) :  P_{A|E}^n(\ba|\be) < \frac{c}{M_n} \Big\} &=P_{A E}^n \Big\{(\ba,\be) : \rme^{-t\log P_{A|E}^n(\ba|\be)}  > \rme^{-t\log \frac{c}{M_n}} \Big\} \\
&\le \frac{ \bbE_{P_{A E}^n}\left[ \rme^{-t\log  P_{A|E}^n(A^n|E^n)}  \right]}{\rme^{-t\log \frac{c}{M_n}} }\\
&= \rme^{t H_{1-t}(A^n|E^n|P_{A E}^n)} \Big(\frac{M_n}{c}\Big)^{-t}. \label{eqn:apply_markov2}
\end{align}
 Put $c=1$ in \eqref{eqn:apply_markov}. We then obtain the lower bound to~\eqref{eqn:C1s_minus} by applying \eqref{eqn:R_one_dir} and its equality version for $R\le \hatR_{- s}$.

\paragraph{Proof of the lower bound  of~\eqref{eqn:C1sG_minus}}
Finally, \eqref{eqn:C1sG_minus} can be obtained by evaluating~\eqref{eqn:os_conv_eqG} as follows:
\begin{align}
&\rme^{-\frac{s}{1-s} C_{1-s}^\uparrow(f(A^n)|E^n|P_{A E}^n)} \nn\\
 &\le  2^{\frac{s}{1-s}}\sum_{\be} \bigg[P_{E^n}(\be)
\Big(c^{-s}\sum_{\ba: P_{A|E}^n(\ba|\be)\ge \frac{c}{M_n} }
 P_{A|E}^n(\ba|\be)^{1-s} M_n^{-s} \Big)^{\frac{1}{1-s}}\nn\\*
& \qquad\qquad+ 
\left(
2^{\frac{s}{1-s}} s^{\frac{s}{1-s}}
 (1-s)
P_{A^n|E^n=\be} \Big\{\ba : P_{A|E}^n(\ba|\be) < \frac{c}{M_n} \Big\} \right)^{\frac{1}{1-s}}  \bigg]\\
 &\le 2^{\frac{s}{1-s}} \bigg[
c^{-s}
\rme^{\frac{s}{1-s} H_{1-s}^{\uparrow}(A^n|E^n|P_{A E}^n)} M_n^{-\frac{s}{1-s}} \nn\\*
&\qquad\qquad+ 
2^{\frac{s}{(1-s)^2}} s^{\frac{s}{(1-s)^2}}
 (1-s)^{\frac{1}{1-s}}
\sum_{\be}P_E^n(\be)
\rme^{\frac{t}{1-s} H_{1-t}(A^n|P_{A^n|E^n=\be} )}
\Big(\frac{M_n}{c} \Big)^{-\frac{t}{1-s}}  \bigg]\\
&= 2^{\frac{s}{1-s}}\bigg[
c^{-s}
\rme^{\frac{s}{1-s} H_{1-s}^{\uparrow}(A^n|E^n|P_{A E}^n)} M_n^{-\frac{s}{1-s}}
+ 
 2^{\frac{s}{(1-s)^2}} s^{\frac{s}{(1-s)^2}} (1-s)^{\frac{1}{1-s}}
\rme^{\frac{t}{1-s} H_{1-t|1-s}(A^n|E^n|P_{A E}^n)}
\Big(\frac{M_n}{c}\Big)^{-\frac{t}{1-s}} \bigg] \label{eqn:equiv_G_lb}
\end{align}
with $s\ge t\ge 0$. For brevity, let $\beta_s:=2^{\frac{s}{(1-s)^2}} s^{\frac{s}{(1-s)^2}} (1-s)^{\frac{1}{1-s}}$ be a function that only depends on $s$. By taking the logarithm of \eqref{eqn:equiv_G_lb}, normalizing by $n$, and using $M_n =\rme^{nR}$, we obtain
\begin{align}
&\frac{1}{n}C_{1-s}^\uparrow(f(A^n)|E^n|P_{A E}^n)  \nn\\*
& \ge - \frac{1-s}{sn }\log \Bigg[ 2^{\frac{s}{1-s}}\bigg[
c^{-s}
\rme^{\frac{s}{1-s} H_{1-s}^{\uparrow}(A^n|E^n|P_{A E}^n)}M_n^{-\frac{s}{1-s}}
+ 
 \beta_s
\rme^{\frac{t}{1-s} H_{1-t|1-s}(A^n|E^n|P_{A E}^n)}
\Big(\frac{M_n}{c}\Big)^{-\frac{t}{1-s}} \bigg] \Bigg]\\
& = - \frac{1-s}{sn }\log \Bigg[ 2^{\frac{s}{1-s}}\bigg[
c^{-s}
\rme^{-n \frac{s}{1-s}(R-  H_{1-s}^{\uparrow}(A |E |P_{A E} ))}
+ 
 \beta_s  c^{\frac{t}{1-s}} 
\rme^{-n \frac{t}{1-s}(R- H_{1-t|1-s}(A |E |P_{A E} ))} \bigg] \Bigg] \label{eqn:equiv_G_lb2}  .
\end{align}
Now similarly to \eqref{eqn:R_one_dir}, we have 
\begin{equation}
   s(R -  H_{1-s}^\uparrow(A|E|P_{AE})) \le \max_{t\in [0,s]} t(R -  H_{1-t|1-s}( A|E|P_{AE}))  .  \label{eqn:R_one_dir2}
\end{equation}
Thus when $R\ge\hatR_{-s}^\uparrow$,  the first term in \eqref{eqn:equiv_G_lb2}  dominates (exponent is not larger),  resulting in the   first  clause in~\eqref{eqn:C1sG_minus}, namely $R-H_{1-s}^\uparrow(A|E|P_{AE})$. On the other hand, when $R\le\hatR_{-s}^\uparrow$, the second term  in \eqref{eqn:equiv_G_lb2}   dominates. To complete the argument, we optimize over $t\in [0,s]$ to obtain the second clause in \eqref{eqn:C1sG_minus}, namely $\max_{t\in [0,s]} \frac{t}{s}(R- H_{1-t|1-s } (A|E|P_{AE}) )$. This completes the proof of the lower bound of~\eqref{eqn:C1sG_minus}.

%This implies  \eqref{eqn:C1sG_minus} upon taking the logarithm, normalizing by $n$ and taking the limit. The two cases of \eqref{eqn:C1sG_minus} arise from the two terms in the square parentheses above.  

\subsection{Proof of Theorem \ref{thm:exponents}}\label{sec:prf_exp}
\subsubsection{Direct Parts}
\paragraph{Proof of the lower bound  of~\eqref{eqn:exp1}}
We note, per the discussion following Theorem \ref{thm:exponents}, that 
\begin{align}
 C_{1+s}(f_{X_n}(A^n) | E^nX_n|P_{AE}^n\times P_{X_n})) & = O(n) ,\quad\mbox{ and }\\
 C_{1+s}^\uparrow(f_{X_n}(A^n) | E^nX_n|P_{AE}^n\times P_{X_n}))&=O(n).
\end{align}
Thus, the exponents are lower bounded by zero, explaining the $|\fndot|^+$ in \eqref{eqn:exp1} and \eqref{eqn:exp2}. 

Now for the non-trivial (non-zero) lower bound on the exponents, we employ \eqref{eqn:os_direct_pluss} with $\epsilon=1$ and $t\in [s,1]$. We recall that $M_n=\rme^{nR}$. Now   we have 
\begin{align}
&  C_{1+s}(f_{X_n}(A^n) | E^nX_n|P_{AE}^n\times P_{X_n}))\nn\\* 
&\le C_{1+t}(f_{X_n}(A^n) | E^nX_{n}|P_{AE}^n\times P_{X_n}))  \label{eqn:exp_prf_1} \\
& \le \frac{1}{t}\log \Big(1 +  M_n^t \rme^{-t H_{1+t} (A^n|E^n|P_{AE}^n)}   \Big)\\
& \le  \frac{1}{t}M_n^t\rme^{-n t H_{1+t} (A|E|P_{AE})}. \label{eqn:exp_prf_3}
\end{align}
Taking the logarithm and optimizing over $t\in [s,1]$, we obtain the lower bound to \eqref{eqn:exp1}. 

\paragraph{Proof of the lower bound  of~\eqref{eqn:exp2}}
Similarly, applying \eqref{eqn:os_direct_exp}  to the case $t\in [s,1]$, we obtain
\begin{align}
&  C_{1+s}^\uparrow(f_{X_n}(A^n) | E^nX_n|P_{AE}^n\times P_{X_n}))\nn\\* 
&\le    C_{1+t}^{\uparrow}(f_{X_n}(A^n) | E^n X_n|P_{AE}^n\times P_{X_n})) \label{eqn:exp_prf_g1}  \\
&\le  \frac{1+t}{t}\log\Big( 1+\frac{1}{1+t} M_n^t \rme^{-t H_{1+t}(A^n|E^n|P_{AE}^n)}\Big)\label{eqn:exp_prf_g2}\\
& \le  \frac{1}{t} M_n^t \rme^{-nt H_{1+t}(A|E|P_{AE})} \label{eqn:exp_prf_g3}
\end{align}
which implies the lower bound to \eqref{eqn:exp2} upon optimizing over $t\in [s,1]$. 

\paragraph{Proofs of the lower bounds of~\eqref{eqn:gal_exp1} and~\eqref{eqn:gal_exp2}}
For the $-s$ versions in \eqref{eqn:gal_exp1} and \eqref{eqn:gal_exp2}, we simply note that 
\begin{align}
C_{1+s}&\ge C_{1-s'}\\
C_{1+s}^\uparrow&\ge C_{1-s'}^\uparrow
\end{align}
 for any $s,s'\in [0,1]$ because as mentioned in Section~\ref{sec:info_measures} (after \eqref{eqn:renyi_ent_Q} and \eqref{eqn:Q_tilt} respectively), $H_{1+s}$ and $H_{1+s}^\dagger$ are monotonically 
 decreasing in $s$. Thus,   we have
\begin{align}
\liminf_{n\to\infty}-\frac{1}{n} \log C_{1-s'} (f_{X_n}(A^n) | E^nX_n|P_{AE}^n\times P_{X_n}))&\!\ge\!\liminf_{n\to\infty}-\frac{1}{n} \log C_{1+s} (f_{X_n}(A^n) | E^nX_n|P_{AE}^n\times P_{X_n})) \\
\liminf_{n\to\infty}-\frac{1}{n} \log C_{1-s'}^\uparrow (f_{X_n}(A^n) | E^nX_n|P_{AE}^n\times P_{X_n}))&\!\ge\!\liminf_{n\to\infty}-\frac{1}{n} \log C_{1+s}^\uparrow (f_{X_n}(A^n) | E^nX_n|P_{AE}^n\times P_{X_n}))  .
\end{align} 
Combining these statements with the bounds derived in \eqref{eqn:exp_prf_1}--\eqref{eqn:exp_prf_g3} completes the proof of \eqref{eqn:gal_exp1} and \eqref{eqn:gal_exp2}.

\subsubsection{Converse Parts}
\paragraph{Proof of the upper bound  of~\eqref{eqn:exp1}}
For the converse, we first show the upper bound to  \eqref{eqn:exp1}. Choose a constant $c_0$ satisfying $c_0^s > 1+s$. Recall the definition of $\hatR_s$ in \eqref{eqn:crit_rate1}. 
%\begin{equation}
% \hatR_s := \frac{\rmd }{\rmd s} sH_{1+s}(A|E|P_{AE}). \label{eqn:Rhat_plus_s}
%\end{equation}
Now assume that $R\ge \hatR_s$. We claim that 
\begin{align}
\lim_{n\to\infty}  -\frac{1}{n}\log  \left[P_{AE}^n \left\{ (\ba,\be): P_{A|E}^n(\ba|\be )\ge c_0 \rme^{-n  \hatR_s} \right\} c_0^s \rme^{-sn \hatR_s} \rme^{snR}\right] = sH_{1+s}(A|E|P_{AE}) - sR . \label{eqn:crma}
\end{align}
This is justified as follows. We know from  Cram\'er's theorem~\cite{Dembo} that
\begin{equation}
P_{AE}^n \left\{ (\ba,\be): P_{A|E}^n(\ba|\be)\ge c_0 \rme^{-n  \hatR_s}\right\}\doteq \exp\left(-n \max_{t\ge 0} \left\{ tH_{1+t}(A|E|P_{AE}) -t   \hatR_s \right\}\right). \label{eqn:cramer_3}
\end{equation}
The maximum appeared in right-hand-side of \eqref{eqn:cramer_3} is attained when the derivative of $tH_{1+t}(A|E|P_{AE}) - t \hat{R}_s$ is zero because $tH_{1+t}(A|E|P_{AE})$ is concave in $t$.
Hence, the real number $t$ satisfies 
 \begin{equation}
\frac{\rmd }{\rmd t}tH_{1+t}(A|E|P_{AE})=   \hatR_s ,\label{eqn:differentiated}
\end{equation}
which implies $t=s \ge 0$ due to the definition of $ \hatR_s$ in \eqref{eqn:crit_rate1} and the strictly decreasing nature of $\hatR_s$.
%Differentiating the objective function $tH_{1+t}(A|E|P_{AE}) -t   \hatR_s $ with respect to $t\ge 0$ and setting the resulting derivative to zero yields
%\begin{equation}
%\frac{\rmd }{\rmd t}tH_{1+t}(A|E|P_{AE})=   \hatR_s . \label{eqn:differentiated}
%\end{equation}
%Thus,  the  real number $t$ that satisfies \eqref{eqn:differentiated} is $t=s$ by the definition of $ \hatR_s$ in \eqref{eqn:crit_rate1}. 
As a result, 
\begin{equation}
P_{AE}^n \left\{ (\ba,\be):P_{A|E}^n(\ba|\be)\ge c_0 \rme^{-n  \hatR_s}\right\}\doteq\exp\left( -n \left\{sH_{1+s} (A|E|P_{AE})-s \hatR_s \right\}\right).
\end{equation}
Plugging this into the left-hand-side of \eqref{eqn:crma} yields the claim.
The one-shot bound in \eqref{10-20-2}  with $c=c_0 \rme^{-n \hatR_s+nR}>1$ implies that
\begin{align}
 & \rme^{sC_{1+s} (f(A^n) | E^n   | P_{AE}^n  ) } \nn\\
&\ge P_{AE}^n \left\{ (\ba,\be) : P_{A|E}^n(\ba|\be) \ge c_0 \rme^{-n  \hatR_s}  \right\}c_0^s \rme^{-n  \hatR_s}\rme^{snR}  \nn\\*
&\qquad\qquad + 1-(1+s) P_{AE}^n \left\{ (\ba,\be) : P_{A|E}^n(\ba|\be) \ge c_0 \rme^{-n  \hatR_s} \right\} \\
&=1+ \left(c_0^s \rme^{-sn \hatR_s}\rme^{snR}-1-s \right)  P_{AE}^n \left\{ (\ba,\be) : P_{A|E}^n(\ba|\be) \ge c_0 \rme^{-n  \hatR_s} \right\}. \label{eqn:esC}
\end{align} 
Hence, taking the logarithm of \eqref{eqn:esC}, employing the lower bound $\log(1+b)\ge b- \frac{b^2}{2}$, the large-deviations result \eqref{eqn:cramer_3}, and the fact that   $\lim_{n\to\infty}\frac{1}{n}\log (c_0^s \rme^{-sn \hatR_s+snR}-1-s)
=s( -\hatR_s+R)$, we obtain
\begin{align}
sC_{1+s} (f(A^n) | E^n   | P_{AE}^n  )\dotgeq \exp\left[ -ns (H_{1+s}(A|E|P_{AE}) -R) \right]. \label{eqn:sC}
\end{align}
Finally,  we obtain the upper bound to \eqref{eqn:exp1} by taking another logarithm and normalizing by $n$. 

 For the other case $R\le  \hatR_s$, we claim that 
\begin{align}
\max_{t\ge s}   \,\,\{tH_{1+t}(A|E|P_{AE})-tR\} = \max_{t\ge 0}\,\,\{ tH_{1+t}(A|E|P_{AE})-tR \} \label{eqn:eq_H} .
\end{align}
This is because by the  strict concavity of $t\mapsto tH_{1+t}$, the map $s\mapsto \hatR_s$ is strictly decreasing. So for $R\le  \hatR_s$ the maximum on the right-hand-side of \eqref{eqn:eq_H} is attained at some $t\ge s$.   
This is also reflected in Fig.~\ref{fig:exp}. 
Thus, \eqref{10-20-1}   implies that
\begin{align}
 & \rme^{sC_{1+s} (f(A^n) | E^n  | P_{AE}^n ) } \nn\\
& \ge  P_{AE}^n \left\{ (\ba,\be) : P_{A|E}^n(\ba|\be) \ge c_0 \rme^{-n R} \right\}c_0^s   \nn\\*
&\qquad\qquad+ \left(1-  P_{AE}^n \left\{ (\ba,\be) : P_{A|E}^n(\ba|\be) \ge c_0 \rme^{-n sR} \right\} \right)^{1+s} \\
&\ge 1+ (c_0^s-1-s)  P_{AE}^n \left\{ (\ba,\be) : P_{A|E}^n(\ba|\be) \ge c_0 \rme^{-n sR } \right\}  .
\end{align} 
Hence, 
\begin{align}
&C_{1+s} (f(A^n) | E^n | P_{AE}^n  )\nn\\*
& =\frac{1}{s}\log(\rme^{sC_{1+s} (f(A^n) | E^n  | P_{AE}^n  ) }) \\
&\ge\frac{1}{s}\log\left( 1+  (c_0^s-1-s)  P_{AE}^n \left\{ (\ba,\be) : P_{A|E}^n(\ba|\be) \ge c_0 \rme^{-n sR } \right\} \right) \\
&\doteq \frac{c_0^s-1-s}{s}P_{AE}^n \left\{ (\ba,\be) : P_{A|E}^n(\ba|\be)\ge c_0 \rme^{-nsR}  \right\}.  \label{eqn:final_combine_cramer}
\end{align}
By combining the asymptotic evaluation using Cram\'er's theorem in \eqref{eqn:cramer_3} and the equality in~\eqref{eqn:eq_H}, we see that for  $R\le  \hatR_s$, we also obtain the upper bound to \eqref{eqn:exp1}.  This completes the proof.

\paragraph{Proof of the upper bound  of~\eqref{eqn:exp2}}
The proof of the upper bound to \eqref{eqn:exp2} is similar and we present the details here.  Similarly to the above proof,  choose $c = c_0 \rme^{-n \hatR_s^\uparrow + nR }>1$ and the constant $c_0^s>1+s$.  Assume that $R\ge \hatR_s^\uparrow$, where  $\hatR_s^\uparrow$ is defined in \eqref{eqn:crit_rate2}. 
%\begin{equation}
% \hatR_s^\uparrow^\uparrow := \frac{\rmd }{\rmd s} sH_{1+s}^\uparrow(A|E|P_{AE}). \label{eqn:Rhat_plus_s_uparow}
%\end{equation}
 Then  the one-shot bound in~\eqref{10-20-2_205}  implies that 
 \begin{align}
& \rme^{\frac{s}{1+s} C_{1+s}^{\uparrow} (f(A^n) | E^n  | P_{AE}^n  )} \nn\\
&\ge  
\sum_{\be} P_E^n(\be)
\Big(
P_{A^n|E^n=\be}\left\{ \ba: P_{A|E}^n (\ba|\be) \ge c_0 \rme^{-n  \hatR_s^\uparrow}
\right\} c_0^{s}\rme^{-s n \hatR_s^\uparrow} \rme^{s nR} \nn\\
&\qquad +
\left(1-
P_{A^n|E^n=\be}\left\{ \ba: P_{A|E}^n (\ba|\be) \ge c_0 \rme^{-n \hatR_s^\uparrow}
\right\}\right)^{1+s}
\Big)^{\frac{1}{1+s}} \\
&\ge 
\sum_{\be} P_E^n(\be)
\Big(
P_{A^n|E^n=\be}\left\{ \ba: P_{A|E}^n (\ba|\be) \ge c_0 \rme^{-n  \hatR_s^\uparrow}
\right\} c_0^{s}\rme^{-s n  \hatR_s^\uparrow} \rme^{s nR} \nn\\*
&\qquad+
1- (1+s)
P_{A^n|E^n=\be}\left\{ \ba: P_{A|E}^n (\ba|\be) \ge c_0 \rme^{-n \hatR_s^\uparrow}
\right\}
\Big)^{\frac{1}{1+s}} \\
&= 
\sum_{\be} P_E^n(\be)
\left(
1 +
\big(c_0^{s} \rme^{-s n  \hatR_s^\uparrow} \rme^{s nR}- (1+s)  \big)
P_{A^n|E^n=\be} \left\{ \ba: P_{A|E}^n (\ba|\be) \ge c_0 \rme^{-n  \hatR_s^\uparrow}
\right\}
\right)^{\frac{1}{1+s}} \\
&\doteq
\sum_{\be} P_E^n(\be)
\Big(
1 +
\frac{ c_0^{s}\rme^{-s n   \hatR_s^\uparrow} \rme^{s nR}- (1+s)}{1+s}
P_{A^n|E^n=\be}\left\{ \ba: P_{A|E}^n (\ba|\be)  \ge c_0 \rme^{-n  \hatR_s^\uparrow}
\right\}
\Big) \label{eqn:doteq_use} \\
&= 
1+
\frac{c_0^{s}\rme^{-s n \hatR_s^\uparrow} \rme^{s nR}- (1+s)}{1+s}
\sum_{\be} P_E^n(\be)
P_{A^n|E^n=\be}\left\{\ba: P_{A|E}^n (\ba|\be)  \ge c_0 \rme^{-n \hatR_s^\uparrow}
\right\},
\end{align}
where~\eqref{eqn:doteq_use} follows from the fact that $\log [(1+a)^t] =t[a+O(a^2)]$ for $a\downarrow 0$.  
 Hence,
\begin{align}
&C_{1+s}^{\uparrow}(f(A^n) | E^n  | P_{AE}^n  )\nn\\
&= \frac{1+s}{s}\log\left( \rme^{\frac{s}{1+s} C_{1+s}^{\uparrow}(f(A^n) | E^n X_n | P_{AE}^n\times P_{X_n} )} \right)\\
&\ge  
\frac{1+s}{s}\log \bigg(
1+
\frac{c_0^{s}\rme^{-s n \hatR_s^\uparrow} \rme^{s nR}- (1+s)}{1+s} 
\sum_{\be} P_E^n(\be)
P_{A^n|E^n=\be} \left\{ \ba: P_{A|E}^n (\ba|\be)  \ge c_0 \rme^{-n \hatR_s^\uparrow}
\right\}
\bigg) \label{11-12-10e_prev} \\
&\doteq
\frac{c_0^{s}\rme^{-s n \hatR_s^\uparrow} \rme^{s nR}- (1+s)}{s}
\sum_{\be} P_E^n(\be)
 P_{A^n|E^n=\be}\left\{ \ba :P_{A|E}^n (\ba|\be)  \ge c_0 \rme^{-n \hatR_s^\uparrow}
\right\} 
 \label{11-12-10e1},\\
 &=
\frac{c_0^{s}\rme^{-s n \hatR_s^\uparrow} \rme^{s nR}- (1+s)}{s}
P_{AE}^n\left\{ (\ba,\be) :P_{A|E}^n (\ba|\be)  \ge c_0 \rme^{-n \hatR_s^\uparrow}
\right\} 
 \label{11-12-10e},
\end{align}
where \eqref{11-12-10e} follows from $\log(1+a)=a+O(a^2)$ as $a\downarrow 0$ and the fact that the summation in~\eqref{11-12-10e_prev} vanishes as $n$ grows.
Combining \eqref{eqn:crma} and \eqref{11-12-10e} yields the upper bound to \eqref{eqn:exp2} for $R\ge \hatR_s^\uparrow$. A similar calculation for the case $R\le \hatR_s^\uparrow$  also yields the the same upper bound to \eqref{eqn:exp2}.

\paragraph{Proof of the upper bound  of~\eqref{eqn:gal_exp1}}
We choose the constant $c$ such that $(1-s)>c^{-s} $.
We apply Cramer's Theorem \cite{Dembo} to the sequence of random variables $\log P_{A|E}^n (A^n|E^n)$. 
Then,
\begin{align}
\lim_{n \to \infty}
-\frac{1}{n}\log P_{A E}^n \bigg\{ (\ba,\be):P_{A|E}^n (\ba|\be)  \ge \frac{c}{\rme^{nR}}\bigg\} 
=\max_{t \ge 0}\left\{ t H_{1+t}(A|E|P_{A E}) -t R  \right\}.
\label{11-12-6}
\end{align}
The one-shot bound in \eqref{10-20-1}  implies that
\begin{align}
&\rme^{-s C_{1-s}(f(A^n) | E^n  | P_{AE}^n  )}\nn\\*
& \le 
P_{A E}^n \Big\{ (\ba,\be):P_{A|E}^n (\ba|\be)  \ge \frac{c}{M }\Big\} c^{-s}
+
P_{A E}^n\Big\{ (\ba,\be):P_{A|E}^n (\ba|\be)  < \frac{c}{M }\Big\}^{1-s} \\
& =
P_{A E}^n\Big\{ (\ba,\be):P_{A|E}^n (\ba|\be)  \ge \frac{c}{M }\Big\} c^{-s}
+
\Big(1-P_{A E}^n\Big\{ (\ba,\be):P_{A|E}^n (\ba|\be) \ge \frac{c}{M }\Big\} \Big)^{1-s} \\
& \le
1- ( (1-s)-c^{-s} )P_{A E}^n\Big\{ (\ba,\be):P_{A|E}^n (\ba|\be)  \ge \frac{c}{M }\Big\}  . \label{eqn:basic_ineq}
\end{align}
%because $(1-x)^{1-s} \le 1-(1-s)x$.

Thus,
\begin{align}
& C_{1-s}(f(A^n) | E^n  | P_{AE}^n  )\nn\\*
& = - \frac{1}{s}\log \left[\rme^{-s C_{1-s}(f(A^n) | E^n  | P_{AE}^n  )}\right] \\
&\ge 
 - \frac{1}{s}\log 
\left[1- ( (1-s)-c^{-s} )P_{A E}^n\Big\{ (\ba,\be):P_{A|E}^n (\ba|\be)  \ge \frac{c}{M }\Big\} \right] \\
&\ge  
\frac{1}{s}
( (1-s)-c^{-s} )P_{A E}^n\Big\{ (\ba,\be):P_{A|E}^n (\ba|\be)   \ge \frac{c}{M }\Big\},
\label{11-12-5}
\end{align}
where the final step uses the inequality $\log(1-t)\le -t$.  Combining the limiting statement in~\eqref{11-12-6} and the bound in~\eqref{11-12-5}, we have the upper bound to~\eqref{eqn:gal_exp1}.

\paragraph{Proof of the upper bound  of~\eqref{eqn:gal_exp2}}
When 
$
\sum_{\be} P_E^n(\be)P_{A^n|E^n=\be}\big\{ \ba : P_{A|E}^n (\ba|\be)  \ge \frac{c}{M } \big\}
$  
is exponentially small, \eqref{eqn:os_conv_expG} implies that
\begin{align}
&\rme^{-\frac{s}{1-s} C_{1-s}^{\uparrow}(f(A^n) | E^n  | P_{AE}^n  )}\nn\\
& \le 
\sum_{\be} P_E^n(\be)
\bigg(
 P_{A^n|E^n=\be}\Big\{ \ba:P_{A|E}^n (\ba|\be)  \ge \frac{c}{M }\Big\} c^{-s} \nn\\*
&\qquad\qquad+
 P_{A^n|E^n=\be}\Big\{ \ba:P_{A|E}^n (\ba|\be)  < \frac{c}{M }\Big\}^{1-s}
\bigg)^{\frac{1}{1-s}} \\
&=
\sum_{\be} P_E^n(\be)
\bigg(
c^{-s} P_{A^n|E^n=\be}\Big\{ \ba:P_{A|E}^n (\ba|\be)  \ge \frac{c}{M }\Big\}  \nn\\*
&\qquad\qquad+
\Big(1- P_{A^n|E^n=\be}\Big\{ \ba:P_{A|E}^n (\ba|\be)  \ge \frac{c}{M }\Big\} \Big)^{1-s}
\bigg)^{\frac{1}{1-s}} \\
&\le 
\sum_{\be} P_E^n(\be)
\left(
1- (1-s-c^{-s})  P_{A^n|E^n=\be}\Big\{ \ba:P_{A|E}^n (\ba|\be)  \ge \frac{c}{M }\Big\} 
\right)^{\frac{1}{1-s}} \\
&\doteq
\sum_{\be} P_E^n(\be)
\left(
1- \frac{1-s-c^{-s}}{1-s}  P_{A^n|E^n=\be}\Big\{ \ba:P_{A|E}^n (\ba|\be)  \ge \frac{c}{M } \Big\}
\right) \label{eqn:same_doteq}\\
&=
1-\frac{1-s-c^{-s}}{1-s}
\sum_{\be}P_E^n(\be)
 P_{A^n|E^n=\be}\Big\{ \ba:P_{A|E}^n (\ba|\be)  \ge \frac{c}{M } \Big\},
%\label{10-20-1c} \\
\end{align}
where \eqref{eqn:same_doteq} follows from the same reasoning as~\eqref{eqn:doteq_use}.
Thus,
\begin{align}
& C_{1-s}^{\uparrow}(f(A^n) | E^n  | P_{AE}^n  )
= - \frac{1-s}{s}\log  \left[ \rme^{-\frac{s}{1-s} C_{1-s}^{\uparrow}(f(A^n) | E^n  | P_{AE}^n  )}  \right]\\
&\ge 
 - \frac{1-s}{s}\log 
\left[
1-
\frac{1-s-c^{-s}}{1-s}
\sum_{\be}P_E^n(\be)
 P_{A^n|E^n=\be}\Big\{ \ba:P_{A|E}^n (\ba|\be) \ge \frac{c}{M } \Big\} \right] \\
&\ge 
\frac{1-s}{s} \cdot \frac{1-s-c^{-s}}{1-s}
\sum_{\be}P_E^n(\be)
 P_{A^n|E^n=\be}\Big\{ \ba:P_{A|E}^n (\ba|\be)  \ge \frac{c}{M } \Big\} \\
&= 
\frac{1-s-c^{-s}}{s}
\sum_{\be} P_E^n(\be)
 P_{A^n|E^n=\be}\Big\{ \ba:P_{A|E}^n (\ba|\be)  \ge \frac{c}{M } \Big\}
\label{11-12-5c} .
\end{align}
Combining \eqref{11-12-6} and \eqref{11-12-5c}, we have the upper bound to \eqref{eqn:gal_exp2}.

\subsection{Proof of Theorem \ref{thm:second}}\label{sec:prf_sec}
\subsubsection{Direct Parts}
\paragraph{Proof of upper bounds for Case (A)}
First, we prove the upper bounds for Case (A) where the R\'enyi parameter $\alpha = 1+s$ for $s\in (0,1]$. Substituting $\rme^{ n H_{1+s}(A|E|P_{AE}) + \sqrt{n} L}$ into $M_n$ in the chain of inequalities in \eqref{eqn:equiv_direct_prf4}--\eqref{eqn:equiv_direct_prf3}, we obtain, for the class of   $\epsilon$-almost  universal$_2$ hash functions $f_{X_n}$, that
\begin{align}
C_{1+s} (f_{X_n}(A^n)|E^nX_n|P_{AE}^n\times P_{X_n}) \le \frac{1}{s}\log \big(\epsilon^s + \rme^{s\sqrt{n}L} \big) .
\end{align}
Set $\epsilon$ to be a constant (not varying with $n$). Normalizing by $\sqrt{n}$ and taking the $\limsup$ as $n\to\infty$ yields the upper bound to  \eqref{eqn:2order1}.

In an exactly analogous way, the upper bound to~\eqref{eqn:2order1_g} can be shown by substituting  $\rme^{ n H_{1+s}^\uparrow(A|E|P_{AE}) + \sqrt{n} L}$ into $M_n$ in the chain   of inequalities in~\eqref{eqn:exp_prf_g1}--\eqref{eqn:exp_prf_g3}.

Substituting  $\rme^{ n H_{1+s} (A|E|P_{AE}) + \sqrt{n} L}$ into $M_n$ in the   chain of inequalities in \eqref{eqn:exp_prf_1}--\eqref{eqn:exp_prf_3} with $t=s$, we obtain
\begin{equation}
C_{1+s} (f_{X_n}(A^n)|E^nX_n|P_{AE}^n\times P_{X_n})\le M_n^s \rme^{-snH_{1+s}(A|E|P_{AE})}=\rme^{s\sqrt{n}L}
\end{equation}
which implies the upper bound to \eqref{eqn:2order1_Lneg} after we take the logarithm,  normalize both sides by $\sqrt{n}$ and take the $\limsup$ as $n\to\infty$.  

In an exactly analogous way,  the upper bound to~\eqref{eqn:2order1_g_Lneg} can be shown by substituting  $\rme^{ n H_{1+s}^\uparrow(A|E|P_{AE}) + \sqrt{n} L}$ into $M_n$ in the chain of inequalities in~\eqref{eqn:equiv_direct_prf_g_4}--\eqref{eqn:equiv_direct_prf_g_3}. This completes the proof for the direct part of Case (A) of Theorem~\ref{thm:second}. 

%Case (B) is a straightforward extension of Theorem 8 of \cite{Hayashi08} to case with approximate independence from $E^n$ and specialized  to the i.i.d.\ (stationary, memoryless) case. In this theorem,  second-order asymptotics of intrinsic randomness~\cite[Ch.~2]{Han10} was derived using one-shot bounds in \cite[Lemmas~5 and~6]{Hayashi08}. The reader is referred to   \cite[Appendix~I]{Hayashi08} for the proof. %A slight extension to the case  with approximate independence from $E^n$ is needed here. 
\paragraph{Proof of upper bound  for Case (B)}
Case (B) follows from four distinct steps, detailed in each of the following paragraphs. 

In Step 1, we fix any function $f:\calA^n\to\{1,\ldots, \|f\|\}$. We  partition the space $\calA^n\times\calE^n$ into pairs of sequences of the same joint type~\cite{Csi97}. Let $Q_{AE}$ denote a generic joint  type on $\calA\times\calE$. Let $U^{( Q_{AE } )}$ be the uniform distribution over the type class $\calT_{ Q_{AE} } \subset\calA^n\times\calE^n$. 
Let 
\begin{equation}
 U^{( Q_{AE } )}_{f(A^n),E^n }(i,\be) := \sum_{\ba: f(\ba) =i } U^{( Q_{AE } )}(\ba,\be)
\end{equation}
be the distribution on $\{1,\ldots, \|f\|\}\times\calE^n$ when the hash function $f$ is applied to the variable $A^n$ and denote
 \begin{equation}
 U^{( Q_{AE } )}_{ E^n }( \be)
:=\sum_{i=1}^{\|f\|} U^{( Q_{AE } )}_{f(A^n),E^n }(i,\be)
   \end{equation} 
as its $\calE^n$-marginal. Because the probability of pairs of sequences of the same joint type have the same $P_{AE}^n$-probability, we can write
\begin{equation}
P_{f(A^n),E^n}(i,\be) = \sum_{Q_{AE}\in\calP_n(\calA\times\calE)} P_{AE}^n(\calT_{Q_{AE}})  U^{( Q_{AE } )}_{f(A^n),E^n }(i,\be) . \label{eqn:mixture}
\end{equation}
   By using \eqref{eqn:mixture}, we have
\begin{align}
C_1(f(A^n) | E^n | P_{AE}^n) 
&=D ( P_{f(A^n),E^n} \| P_{\mix,f(\calA^n)}\times P_{E^n}) \\
&\le\sum_{Q_{AE} \in\calP_n(\calA\times\calE)} P_{AE}^n( \calT_{ Q_{AE} } ) D \left( U^{( Q_{AE } )}_{f(A^n),E^n } \Big\| P_{\mix,f(\calA^n)} \times U^{( Q_{AE } )}_{ E^n }  \right) \label{eqn:cvx_div1} \\
&=\sum_{Q_{AE} \in\calP_n(\calA\times\calE)} 
P_{AE}^n( \calT_{ Q_{AE} } ) 
C_1 \left(f(A^n) | E^n \, \Big|\, U^{( Q_{AE } )}_{A^n E^n }  \right) \label{eqn:c1_def} \\
&\le \sum_{Q_{AE} \in\calP_n(\calA\times\calE)} 
P_{AE}^n( \calT_{ Q_{AE} } ) 
C_2 \left(f(A^n) | E^n \, \Big|\, U^{( Q_{AE } )}_{A^n E^n } \right)  .
\label{eqn:cvx_div2}
\end{align}
where \eqref{eqn:cvx_div1} follows from the fact that relative entropy is convex, \eqref{eqn:c1_def} follows from the definition of $C_1$, and \eqref{eqn:cvx_div2} follows from the fact that $s\mapsto C_{1+s}$ is monotonically non-decreasing.

%where $U^{( Q_{AE } )}$ is the uniform distribution over the type class $\calT_{ Q_{AE} } \subset\calA^n\times\calE^n$, 
%\begin{equation}
% U^{( Q_{AE } )}_{f(A^n),E^n }(i,\be) := \sum_{\ba: f(\ba) =i } U^{( Q_{AE } )}(\ba,\be)
%\end{equation}
% represents the distribution on $\{1,\ldots, \|f\|\}\times\calE^n$ when the hash function $f$ is applied to the variable $A^n$ and 
% \begin{equation}
% U^{( Q_{AE } )}_{ E^n }( \be):=\sum_{i=1}^{\|f\|} U^{( Q_{AE } )}_{f(A^n),E^n }(i,\be)
%   \end{equation} 
%is its $\calE^n$-marginal.   

In Step 2, we regard $f$ as a  universal$_2$ hash function $f_{X_n}$.
Thus, \eqref{eqn:cvx_div2} implies that
\begin{align}
C_1(f_{X_n}(A^n) | E^n X_n | P_{AE}^n\times P_{X_n}) 
&=
\bbE_{X_n} \left[ C_1(f_{X_n}(A^n) | E^n     | P_{AE}^n)  \right]\\
&\le \sum_{Q_{AE} \in\calP_n(\calA\times\calE)} 
P_{AE}^n( \calT_{ Q_{AE} } ) 
\bbE_{X_n} \left[ C_2 \left(f_{X_n}(A^n) | E^n \,\Big|\, U^{( Q_{AE } )}_{A^n E^n } \right) \right].
\label{eqn:cvx_div}
\end{align}
Let  $\calT_{Q_{A|E}}(\be):=\{\ba: (\ba,\be) \in\calT_{Q_{AE}} \}$ be the conditional type class of $Q_{A|E}$  given $\be$, also known as the $Q_{A|E}$-shell. By the method of types~\cite[Ch.~2]{Csi97}, we know that  for $\be$ of type $Q_E$, 
\begin{equation}
\log \left| \calT_{Q_{A|E}}(\be) \right| =  n H(A|E | Q_{AE}) + O(\log n).
\end{equation}
By using the fact that $\rme^{-H_2(A|E)}$ is the  conditional collision probability (i.e., $\rme^{-H_2(A|E)}=\sum_e P_E(e) P_{AA'|E=e}\{(a,a') :a=a'   \}$ where $A,A'$ are conditionally independent and identically distributed given $E$), 
\begin{align}
\rme^{-H_2(A|E|U^{( Q_{AE } )})} 
&=
\sum_{\be} U^{( Q_{AE } )}_{ E^n }( \be)
\sum_{\ba \in \calT_{Q_{A|E}}(\be)} \frac{1}{|\calT_{Q_{A|E}}(\be)|^2}\\
&=
\sum_{\be} U^{( Q_{AE } )}_{ E^n }( \be)
\frac{1}{|\calT_{Q_{A|E}}(\be)|}\\
&=
\rme^{-n H(A|E | Q_{AE}) + O(\log n)}.\label{H5-16}
\end{align}
Furthermore, by a Taylor expansion of $H(A|E | Q_{AE})$ around $P_{AE}$ as in the rate redundancy lemma~\cite{TK12c, ingber11}, we have 
\begin{equation}
H(A|E | Q_{AE})  = H(A|E|P_{AE}) + \sum_{a,e} (  Q_{AE}(a,e)-P_{AE}(a,e)  ) h_{A|E}(a|e) + O\big(  \|Q_{AE}-P_{AE}\|^2 \big)   \label{eqn:taylor} 
\end{equation}
where the {\em conditional entropy density} $h_{A|E}(a|e) $ is defined as
\begin{equation}
h_{A|E}(a|e) := \log \frac{1}{P_{A|E}(a|e)} 
\end{equation}
and $\|Q-P\|=\sum_{z \in\calZ}|Q(z)-P(z)|$ is the variational distance between $Q$ and $P$. 
For brevity, we denote the $\sqrt{n}$-scaled version of the second term in \eqref{eqn:taylor} as 
\begin{equation}
b_n(Q_{AE}):= \sqrt{n} \left( \sum_{a,e} (  Q_{AE}(a,e)-P_{AE}(a,e)  ) h_{A|E}(a,e) \right).
\end{equation}
%which is simply a function of the type $Q_{AE}$. 
If $Q_{AE}$ is a random type formed from $n$ independent copies of $P_{AE}$,
\begin{equation} \label{eqn:clt_types}
  b_n(Q_{AE}) = \sqrt{n} \left( \frac{1}{n}\sum_{i=1}^n h_{A|E}(A_i|E_i) - H(A|E|P_{AE}) \right) \stackrel{\rmd}{\longrightarrow} \calN\left( 0, V(A|E|P_{AE})\right)
\end{equation}
by the central limit theorem. That is, $b_n(Q_{AE})$ converges in distribution to the Gaussian $ \calN\left( 0, V(A|E|P_{AE})\right)$. 

In Step 3,  we first fix $\delta>0$. 
Applying the universal$_2$ property of 
the universal$_2$ hash function $f_{X_n}$ to 
the collision relative entropy (see~\eqref{eqn:equiv_direct_prf4}--\eqref{eqn:equiv_direct_prf3} with $\epsilon=s=1$), and combining the above notations and bounds, we obtain for all $\be\in\calT_{Q_E}$ and all $n$ large enough (depending on $\delta$) that
\begin{align}
& 
\bbE_{X_n} \left[ C_2 \left(f_{X_n}(A^n) | E^n \,\Big|\, U^{( Q_{AE } )}_{A^n E^n } \right) \right]
\nn\\*
&\qquad =\bbE_{X_n}\left[ \log M - H_2\left( f_{X_n}(A^n) | E^n \,\Big|\, U^{( Q_{AE } )}_{A^n E^n } \right)  \right]\label{eqn:C2_def}\\
&\qquad\le
\log \left( 1+  M_n \rme^{-n H(A|E | Q_{AE}) + O(\log n)}
\right)  \label{eqn:C2_prop}\\
 &\qquad \le 
\log \left(1+\exp\left[ \sqrt{n} ( L - b_n(Q_{AE})  + o(b_n(Q_{AE}) ) + O(\log n) \right] \right) \label{eqn:use_size_f}\\
&\qquad\le
\left\{  \begin{array}{cc}
 \sqrt{n} \big(L-  b_n(Q_{AE})+  o(b_n(Q_{AE}) ) \big)  + O(\log n) &  b_n(Q_{AE})  \le   L+\delta  \\
  \rme^{-\delta \sqrt{n}   /2} &   b_n(Q_{AE})   >    L+\delta 
\end{array}  \right. ,\label{eqn:D2_cases}
\end{align}
where \eqref{eqn:C2_def} follows from the definition of $C_2$ and \eqref{eqn:C2_prop} uses the bound in \eqref{H5-16}.  Also
note that we used the fact that $\|f\|=M_n=\rme^{n H(A|E|P_{AE} )+ \sqrt{n} L }$ in \eqref{eqn:use_size_f}.    

Finally in Step 4,   by plugging \eqref{eqn:D2_cases} back into \eqref{eqn:cvx_div}, we obtain that for all $n$ large enough (depending on $\delta$),
\begin{align}
&C_1(f_{X_n}(A^n) | E^n X_n | P_{AE}^n\times P_{X_n}) \nn\\*
 &\quad\le \sum_{Q_{AE}\in\calP_n(\calA\times\calE):    b_n(Q_{AE}) \le L +\delta } P_{AE}^n (\calT_{Q_{AE}})\big( L-   (1-\delta)b_n(Q_{AE}) \big) +O\left(\frac{\log n}{\sqrt{n}}\right).%\\
%  &=\Pr \big(   b_n(Q_{AE}) \le L \big) \bbE_{P_{AE}^n} \big[ L-  b_n(Q_{AE})  \,\big|\, b_n(Q_{AE}) \le L \big]+O\left(\frac{\log n}{\sqrt{n}}\right)
\end{align}
Let $V:=V(A|E|P_{AE})$.  By the central limit-type convergence in \eqref{eqn:clt_types}, we obtain
\begin{equation}
\limsup_{n\to\infty}\frac{1}{\sqrt{n}}C_1(f_{X_n}(A^n) | E^n X_n | P_{AE}^n\times P_{X_n}) \le \int_{-\infty}^{L+\delta} \frac{L-(1-\delta)b}{\sqrt{2\pi V}}\, \rme^{-b^2/(2V) } \, \rmd b.
\end{equation}
By a change of variables to $x:=b/\sqrt{V}$ and taking $\delta\downarrow 0$, we immediately obtain the direct part (upper bound) of Case (B) in~\eqref{eqn:caseB}.

\paragraph{Proof of upper bounds for Case (C)}
For Case (C), the upper bound to \eqref{eqn:caseC1} can be obtained by specializing the one-shot bound in \eqref{eqn:one-shot-dir-second-1} with $\epsilon=1$, $M_n=\rme^{ n H  (A|E|P_{AE}) + \sqrt{n} L}$ and $c=\rme^{-n^{1/4}}$. With these choices, we have 
\begin{align}
& \rme^{-s  C_{1-s} (f_{X_n}(A^n)|E^nX_n|P_{AE}^n\times P_{X_n}) } \nn\\*
 & \ge P_{AE}^n\bigg\{ (\ba,\be):P_{A|E}^n(\ba |\be)\le \frac{\rme^{-n^{1/4}}}{\rme^{ n H  (A|E|P_{AE}) + \sqrt{n} L}}\bigg\}\Big(\frac{1}{1+\rme^{-n^{1/4}} }\Big)^s \label{eqn:clt} \\
&=P_{AE}^n\bigg\{ (\ba,\be):  \frac{1}{\sqrt{n}}\sum_{i=1}^n\big[-\log P_{A|E} (a_i |e_i) - H(A|E|P_{AE})  \big]\ge  L + \frac{1}{\sqrt{n}}\bigg\}  \Big(\frac{1}{1+\rme^{-n^{1/4}} }\Big)^s \label{eqn:clt2} . 
\end{align}
The probability is an information spectrum~\cite{Han10} term with $n$ independent and identically distributed random variables and  since $P_{A_i E_i}=P_{AE}$   for each $1\le i \le n$,
\begin{align}
\bbE_{P_{A_i E_i}}\big[ -\log P_{A|E}(A_i|E_i)\big]&=H  (A|E|P_{AE}), \label{eqn:stats0} \\
\var_{P_{A_i E_i}}\big[ -\log P_{A|E}(A_i|E_i)\big]&=V  (A|E|P_{AE})  \label{eqn:stats} . 
\end{align}
So by the central limit theorem, the right-hand-side  of \eqref{eqn:clt2} converges uniformly as follows:
\begin{equation}
\lim_{n\to\infty}P_{AE}^n\bigg\{ (\ba,\be):  \frac{1}{\sqrt{n}}\sum_{i=1}^n\big[-\log P_{A|E} (a_i |e_i) - H(A|E|P_{AE})  \big]\ge  L + \frac{1}{\sqrt{n}}\bigg\} =  \Phi\Big(- \frac{L}{ \sqrt{V(A|E|P_{AE})} }\Big). \label{eqn:clt_res}
\end{equation}
 Plugging \eqref{eqn:clt_res} into \eqref{eqn:clt2}, taking the logarithm, and normalizing by $-s$ yields  the upper bound to \eqref{eqn:caseC1}.  %\textcolor{red}{what is mean of the rv $\log P_{A|E}(A|E)$}

In a similar way, the upper bound to \eqref{eqn:caseC2} can be obtained by specializing the one-shot bound in \eqref{eqn:one-shot-dir-second-2} with $\epsilon=1$, $M_n=\rme^{ n H  (A|E|P_{AE}) + \sqrt{n} L}$ and $c=\rme^{-n^{1/4}}$.  The calculation for the specialization is similar to the converse part which is detailed in full in \eqref{eqn:obtain_D}--\eqref{eqn:convolu} in the next section. This completes the proof for the direct part of Case (C) of Theorem~\ref{thm:second}.  
\subsubsection{Converse Parts}

\paragraph{Proof of lower bounds for Case (A)}
We now prove the lower bounds for Case (A).  The first two  bounds can be shown using the data processing inequalities in \eqref{eqn:dpi1}--\eqref{eqn:dpi3}. In particular, the  lower bound    to~\eqref{eqn:2order1}  can be evaluated as follows:
\begin{align}
C_{1+s}(f(A^n) | E^n|P_{AE}^n) &= n H_{1+s}(A|E|P_{AE})+\sqrt{n}L- H_{1+s}(f(A^n) | E^n|P_{AE}^n) \\
&\ge n H_{1+s}(A|E|P_{AE})+\sqrt{n}L- H_{1+s}( A^n  | E^n|P_{AE}^n) \label{eqn:use_dpi2} \\
&= n H_{1+s}(A|E|P_{AE})+\sqrt{n}L- n H_{1+s}(A|E|P_{AE}) \\
&=\sqrt{n}L,
\end{align}
where \eqref{eqn:use_dpi2} follows from \eqref{eqn:dpi2}. 
The lower bound to \eqref{eqn:2order1_g} follows completely analogously using \eqref{eqn:dpi3}.

The lower bound to \eqref{eqn:2order1_Lneg} can be shown by  first
%specializing the one-shot bound in~\eqref{10-20-2b}  with the choice of $c$ being given as follows:  
%\begin{equation}
%\log \frac{c}{M}= n \frac{\rmd}{\rmd s} sH_{1+s}(A|E|P_{AE}) .
%\end{equation}
 relaxing~\eqref{10-20-2b}  as follows:
\begin{align}
&\rme^{s C_{1+s}(f(A) | E  | P_{AE}  )} \nn\\*
& \ge 
\sum_{ (a,e):P_{A|E} (a|e) \ge \frac{c}{M}}
P_E(e)P_{A|E}(a|e)^{1+s} M^{s}
+\sum_e P_E(e) P_{A|E=e}
\Big\{ (a,e):P_{A|E} (a|e) < \frac{c}{M}\Big\}^{1+s}\\
& \ge 
  \sum_e P_E(e) P_{A|E=e}
\Big\{ (a,e):P_{A|E} (a|e) < \frac{c}{M}\Big\}^{1+s}\\
& \ge 
P_{A E} \Big\{ (a,e):P_{A|E} (a|e) < \frac{c}{M}\Big\}^{1+s} \label{eqn:use_jens}\\
&=\left[1-P_{A E} \Big\{ (a,e):P_{A|E} (a|e) \ge \frac{c}{M}\Big\} \right]^{1+s} \label{eqn:one_minus}\\
&\ge 1-(1 +s)P_{A E} \Big\{ (a,e):P_{A|E} (a|e) \ge \frac{c}{M}\Big\} \label{eqn:use_ineq} 
\end{align} 
where \eqref{eqn:use_jens} uses Jensen's inequality (for the convex function $t\mapsto t^{1+s}$) and \eqref{eqn:use_ineq} uses the inequality $(1-x)^{1+s }\ge 1-(1+s)x$ (also due to the convexity of $t\mapsto t^{1+s}$).  Hence we have for the $n$-shot setting
\begin{equation}
sC_{1+s}(f(A^n) | E^n  | P_{AE}^n  )\ge\log\left( 1-(1 +s)P_{A E}^n \Big\{ (\ba,\be):P_{A|E}^n (\ba|\be) \ge \frac{c}{M_n}\Big\}  \right) \label{eqn:take_logs} .
\end{equation}
Applying the modified G\"artner-Ellis theorem derived in Hayashi-Tan~\cite[Appendix~A]{HayashiTan2015a} to the sequence of random variables $-\log P_{A|E}^n( A^n|E^n)$ with $M_n = \rme^{n H_{1+s}(A|E|P_{AE}) + \sqrt{n} L}$ and $c=1$, we have 
\begin{equation}
\lim_{n\to\infty} \frac{1}{\sqrt{n}}\log P_{A E}^n \Big\{ (\ba,\be):P_{A|E}^n (\ba|\be) \ge \frac{c}{M_n}\Big\}  =-sL \label{eqn:ge} .
\end{equation}
The modification here is due to the different normalization of $\sqrt{n}$ as opposed to the normalization by $n$  in the  usual G\"artner-Ellis theorem in~\cite{Dembo}. Also see Remark (a) to Theorem 2.3.6 in \cite{Dembo}. Combining \eqref{eqn:ge} with \eqref{eqn:take_logs} yields the lower bound to \eqref{eqn:2order1_Lneg}.  The lower bound to \eqref{eqn:2order1_g_Lneg} can be proved in a completely analogous way by relaxing the one-shot bound in~\eqref{10-20-2_205}.

\paragraph{Proof of lower bound  for Case (B)}
For the converse part of Case (B), we use Theorem 8 of~\cite{Hayashi08}, which analyzes the second-order asymptotics of intrinsic randomness~\cite[Ch.~2]{Han10} \cite{vembu}.  Define the second-order  coding rate at length $n$ as 
\begin{equation}
L_n := \frac{1}{\sqrt{n}} \big(\log M_n- n H (A|E|P_{AE})\big) \label{eqn:defLn}
\end{equation}
and the distribution function $F_n^{(\be)}$ which is  dependent on $\be$ as
\begin{equation}
F_n^{(\be)}( x):= P_{A^n|E^n=\be}\left\{ \ba:-\frac{1}{n}\log P_{A^n|E^n=\be}(\ba) \le H(A|E|P_{AE}) + \frac{x}{\sqrt{n}}\right\}.
\end{equation}
Now, from the proof of Theorem 8 of~\cite{Hayashi08} (second column page 4634), we deduce that for each $\be\in\calE^n$, 
\begin{align}
&H(f(A^n) | P_{A^n|E^n=\be})\nn\\*
& \le \sqrt{n}\int_{-\infty}^{L_n} a \, \rmd F_n^{(\be)}( a) + nH (A|E|P_{AE}) \nn\\*
&\quad + P_{A^n|E^n=\be}\left\{\ba: P_{A^n|E^n=\be}(\ba) \le\frac{1}{M_n}  \right\} \left( \sqrt{n}L_n  - \log  P_{A^n|E^n=\be}\left\{\ba: P_{A^n|E^n=\be}(\ba) \le\frac{1}{M_n}  \right\} \right)  . \label{eqn:th8_hayashi}
\end{align} 
Now note that 
 $F_n^{(\be)}( x)$ depends only on $\be$ through its  type. Our next step is to take the expectation of \eqref{eqn:th8_hayashi} over $\be$ with distribution $P_{E}^n$. Let 
\begin{equation}
 g(\be) := P_{A^n|E^n=\be}\left\{\ba: P_{A^n|E^n=\be}(\ba) \le\frac{1}{M_n}  \right\}  .
 \end{equation} Since $t\mapsto -t\log t$ is concave, by Jensen's inequality, we have
\begin{equation}
 \bbE_{P_{E^n}} [ g(E^n) (\gamma-\log g(E^n) ) ] \le \bbE_{P_{E^n}} [ \gamma g(E^n) ] - \bbE_{P_{E^n}}[ g(E^n) ]\log\bbE_{P_{E^n}}[ g(E^n) ].\label{eqn:jens}
 \end{equation} 
 Now define the averaged distribution function as
% $F_n$ is the averaged distribution function defined as 
\begin{equation}
F_n ( x):=\sum_{\be}P_E^n(\be) F_n^{(\be)}( x)= P_{AE}^n\left\{ (\ba,\be):-\frac{1}{n}\log P_{A|E}^n(\ba|\be) \le H(A|E|P_{AE}) + \frac{x}{\sqrt{n}}\right\}. \label{eqn:dist_fu}
\end{equation}
Let $\gamma:=\sqrt{n}L_n$.  From \eqref{eqn:jens} and the definition of $F_n ( x)$,   %taking expectation over $\be$,
\begin{align}
&H(f(A^n) | E^n | P_{AE}^n)\nn\\*
 &\le  \sqrt{n}  \int_{-\infty}^{L_n} a \, \rmd F_n(a ) + n H(A|E|P_{AE})   \nn\\*
&\qquad + P_{AE}^n \left\{ (\ba,\be) : P_{A|E}^n (\ba|\be) \le\frac{1}{M_n}\right\} \left( \sqrt{n}L_n - \log P_{AE}^n \left\{(\ba,\be) : P_{A|E}^n (\ba|\be)  \le\frac{1}{M_n}\right\} \right).
\end{align}
Thus, by invoking the definition of $L_n$ in \eqref{eqn:defLn} and $F_n$ in \eqref{eqn:dist_fu}, we obtain the inequality
\begin{align}
&\frac{1}{\sqrt{n}} \left( H(f(A^n) | E^n | P_{AE}^n)- n H(A|E|P_{AE}) \right)\nn\\*
&\le\int_{-\infty}^{L_n} a\, \rmd F_n(a) + P_{AE}^n \left\{ (\ba,\be) : P_{A|E}^n (\ba|\be) \le\frac{1}{M_n}\right\}\left(  L_n - \frac{\log P_{AE}^n \left\{ (\ba,\be) : P_{A|E}^n (\ba|\be) \le\frac{1}{M_n}\right\}}{\sqrt{n}}\right) \\
&=\int_{-\infty}^{L_n} a\, \rmd F_n(a)  + (1-F_n(L_n)  ) \left(L_n - \frac{\log (1-F_n(L_n)) }{\sqrt{n}}\right) \label{eqn:limit_F} .
\end{align}
By the central limit theorem
\begin{equation}
F_n(x)\to F(x) = \int_{-\infty}^x \frac{1}{\sqrt{2\pi V}} \rme^{-y^2/(2V)}\,\rmd y,\quad\forall\, x\in\bbR.  \label{eqn:clt_F}
\end{equation}
Taking the $\limsup$ of \eqref{eqn:limit_F}, and using the central limit result in~\eqref{eqn:clt_F}, we obtain
\begin{equation}
\limsup_{n\to\infty}\frac{1}{\sqrt{n}} \left( H(f(A^n) | E^n | P_{AE}^n)- n H(A|E|P_{AE}) \right)\le\int_{-\infty}^L a\, \rmd F(a)+ L(1-F(L)).
\end{equation}
%where from \eqref{eqn:dist_fu}, $F$ is now the distribution function of a zero-mean Gaussian with variance $V(A|E|P_{AE})$.
 Since, we have the simple relation
\begin{align}
 C_1(f(A^n) | E^n | P_{AE}^n) &=D( P_{f(A^n),E^n} \| P_{\mix,f(\calA^n)}\times P_{E^n}) \\
 &=-H(f(A^n) | E^n | P_{AE}^n) + \log M_n \\
  &=-H(f(A^n) | E^n | P_{AE}^n) + nH (A|E|P_{AE}) + \sqrt{n}L ,
\end{align}
we immediately obtain the desired lower bound for the second-order asymptotics of $C_1$:
\begin{equation}
\liminf_{n\to\infty}\frac{1}{\sqrt{n}} C_1(f(A^n) | E^n | P_{AE}^n)\ge \int_{-\infty}^L (L-a) \, \rmd F(a) = \int_{-\infty}^{L/\sqrt{V}} \frac{L-\sqrt{V}x}{\sqrt{2\pi}}\rme^{-x^2/2}\, \rmd x .
\end{equation}

\paragraph{Proof of lower bounds for Case (C)}
For Case (C), the first part of the maximum  in the lower bound in \eqref{eqn:caseC1}, namely $\Gamma_1(s,L)$ in \eqref{eqn:defA}, follows from~\eqref{eqn:os_conv_sec} and the second part of the maximum, namely $\Gamma_2(s,L)$ in \eqref{eqn:defB}, follows from~\eqref{10-20-1} with the common choice of  $c=\rme^{n^{1/4}}$. In particular, specializing the bound in one-shot bound in~\eqref{10-20-1} with this choice of $c$, we obtain   
\begin{equation}
\rme^{-sC_{1-s}(f(A^n)|E^n|P_{AE}^n)}\le (\rme^{n^{1/4}})^{-s} + P_{AE}^n\bigg\{ (\ba,\be): P_{A|E}^n(\ba|\be)\le\frac{\rme^{n^{1/4}} }{\rme^{ n H(A|E|P_{AE}) + \sqrt{n}L}}\bigg\}^{1-s} \label{eqn:clt3}
\end{equation}
where we trivially upper bounded the first probability in the one-shot bound by $1$.  The first term in  \eqref{eqn:clt3} goes to zero (since $s>0$) while the second term is an information spectrum term that asymptotically behaves as 
\begin{equation}
\lim_{n\to\infty} P_{AE}^n\bigg\{ (\ba,\be): P_{A|E}^n(\ba|\be)\le\frac{\rme^{n^{1/4}} }{\rme^{ n H(A|E|P_{AE}) + \sqrt{n}L}}\bigg\}= \Phi\bigg(-\frac{L}{ \sqrt{V(A|E|P_{AE}) }}  \bigg) 
\end{equation}
 by the central limit theorem and the statistics computed in \eqref{eqn:stats0}--\eqref{eqn:stats}. Hence, taking the logarithm in~\eqref{eqn:clt3}, and normalizing by $-s$, we obtain the second term in the maximum in the lower bound in~\eqref{eqn:caseC1}, namely $\Gamma_2(s,L)$. In exactly the same way, specializing the bound in \eqref{eqn:os_conv_sec}, we obtain 
\begin{equation}
\rme^{-sC_{1-s}(f(A^n)|E^n|P_{AE}^n)}\le 
(\rme^{n^{1/4}})^{-s} 
+ 2^{\frac{s}{1-s}}s^{\frac{s}{1-s} }P_{AE}^n\bigg\{ (\ba,\be):  P_{A|E}^n(\ba|\be)\le\frac{\rme^{n^{1/4}} }{\rme^{ n H(A|E|P_{AE}) + \sqrt{n}L}}\bigg\}\label{eqn:clt4} .
\end{equation}
Applying the central limit theorem to the probability in the second term recovers  $\Gamma_1(s,L)$ in  the lower bound in~\eqref{eqn:caseC1}.  

The method to obtain the two terms in the maximum in the lower bound in \eqref{eqn:caseC2} is more complicated than that for~\eqref{eqn:caseC1} because we need to condition on various sequences $\be\in\calE^n$. In particular, to obtain the lower bound $\Psi_1(s,L)$ in \eqref{eqn:defD}, we evaluate~\eqref{eqn:os_conv_eqG}  with $c=\rme^{n^{1/4}}$. We obtain
\begin{align}
& \rme^{-\frac{s}{1-s}C_{1-s}^\uparrow(f(A^n)|E^n|P_{AE}^n)  } \nn\\*
&\le 2^{\frac{s}{1-s}}\bigg[  (\rme^{n^{1/4}})^{-\frac{s}{1-s} } + \sum_{\be}P_E^n(\be)    \Big(2^{\frac{s}{1-s}}s^{\frac{s}{1-s}}(1-s)P_{A^n|E^n=\be}\Big\{ \ba: P_{A|E}^n(\ba|\be) < \frac{c}{M_n} \Big\}\Big)^{\frac{1}{1-s}}  \bigg].\label{eqn:obtain_D}
\end{align}
As usual, the first term goes to zero.  To compute the probability in the second term, let us denote the type (empirical distribution)~\cite{Csi97} of $\be$ by $Q_{\be} \in\calP_n(\calE)$ for the moment. Then we have 
\begin{equation}
\left| P_{A^n|E^n=\be}\Big\{ \ba: P_{A|E}^n(\ba|\be) < \frac{c}{M_n} \Big\}  -  \Phi\bigg(- \frac{L+H(A|E|P_{AE}) - H(A|E|P_{AE} \| Q_{\be})}{\sqrt{ V_2(A|E|P_{AE}\|Q_{\be})} }\bigg)  \right|\le O\bigg(\frac{1}{\sqrt{n}}\bigg) \label{eqn:use_be}
\end{equation}
by the   Berry-Esseen theorem~\cite[Sec.\ XVI.7]{feller}, where the conditional entropy given  another distribution $Q_{\be}$, denoted as $H(A|E|P_{AE} \| Q_{\be})$, was defined in \eqref{eqn:cond_entr_given}, and conditional varentropy given  another distribution $Q_{\be}$ is defined as
\begin{equation}
V_2(A|E|P_{AE}\|Q_{\be}):= \sum_e Q_{\be}(e) \sum_a P_{A|E}(a|e)\Big[ \log \frac{1}{P_{A|E}(a|e)} -  H(A|P_{A|E=e})\Big]^2.
\end{equation}
Note that $V_2(A|E|P_{AE}\|P_E)=V_2(A|E|P_{AE})$ defined in \eqref{eqn:V2_def}.  
In \eqref{eqn:use_be}, the remainder term $O(\frac{1}{\sqrt{n}})$  is uniform in $L$ and $Q_{\be}$. We now plug this into \eqref{eqn:obtain_D} and notice that we are  then averaging over all types $Q_{\be}$ (where $E^n$ has  distribution $P_E^n$). Now, employing a weak (expectation) form of the Berry-Esseen theorem  \cite[Thm.~2.2.14]{tao12} with $x=H(A|E|P_{AE}) -H(A|E|P_{AE} \| Q_{\be})$  yields 
\begin{align}
& \rme^{-\frac{s}{1-s}C_{1-s}^\uparrow(f(A^n)|E^n|P_{AE}^n)  } \nn\\*
&\qquad\le 2^{\frac{s}{1-s}} \bigg[ O\Big( n^{ -\frac{s}{4(1-s)}}\Big) +\Big(2^{\frac{s}{1-s}}s^{\frac{s}{1-s}}(1-s)\Big)^{\frac{1}{1-s}} \nn\\*
&\qquad\qquad\times \int_{-\infty}^{\infty}   \Big[\Phi\bigg(- \frac{L+x}{\sqrt{ V_2(A|E|P_{AE})} }\bigg) +  O \Big( \frac{1}{\sqrt{n}}\Big) \Big]^{\frac{1}{1-s}} \frac{ \rme^{-x^2/(2 V_1(A|E|P_{AE}))}}{\sqrt{2\pi V_1(A|E|P_{AE})}}\,\rmd x\bigg]  + O \Big( \frac{1}{\sqrt{n}}\Big). \label{eqn:convolu}
\end{align}
Now we take the logarithm, divide both sides by $-\frac{s}{1-s}$,  and take the limit as $n\to\infty$. This yields the lower bound $\Psi_1(s,L)$ in \eqref{eqn:defD}. Note that here unlike in the steps leading to~\eqref{eqn:conv}, we cannot add $V_1$ and $V_2$ due the exponentiation of the first term by  $\frac{1}{1-s} $  in the integral.

Using   similar techniques, we can  obtain the lower bound $\Psi_2(s,L)$ defined in \eqref{eqn:defE} from \eqref{eqn:os_conv_expG}. In particular,   evaluate~\eqref{eqn:os_conv_expG} with the same choice of $c$.  Here, in fact, no averaging over $E^n$ is needed because the first term in~\eqref{eqn:os_conv_expG} vanishes by our choice of $c=\rme^{n^{1/4}}$. Thus, we obtain the lower bound in \eqref{eqn:caseC2}.

This completes the proof of the converse parts of Theorem~\ref{thm:second}.

%\subsection{Proof of Theorem \ref{thm:second_large}}\label{sec:prf_thm:second_large}

\section{Conclusion} \label{sec:concl}

\subsection{Summary}
We have derived the fundamental limits of the asymptotic behavior of the equivocation when a hash function $f$ is applied to the source (Theorem~\ref{thm:equiv}). We have also showed that   optimal key generation rates change when we use   alternative R\'enyi information measures (Corollary~\ref{cor:key}). Under these R\'enyi quantities, we have   evaluated  the  corresponding exponential rates of decay of the security measures (Theorem~\ref{thm:exponents}) as well as their second-order coding rates (Theorems~\ref{thm:second} and~\ref{thm:second_large}). The  R\'enyi information measures generalize  the ubiquitous  Shannon information measures and may be useful in many settings as described in the Introduction. To establish our asymptotic theorems, we have introduced new families  of non-asymptotic achievability and converse bounds on the R\'enyi information measures and their Gallager counterparts and used various probabilistic limit theorems (such as large deviation theorems and the central limit theorem) to  evaluate  these bounds when the number of realizations of the joint source tends to infinity. 
\subsection{Future Research Directions}
In the future, we plan to explore various extensions to the results contained herein. 
\begin{enumerate}
\item We would like to study security problems such as the {\em remaining or residual uncertainty} of a source $A^n$ when another party observes a compressed version $f(A^n)\in\calM:= \{1,\ldots, M_n\}$ and another correlated source $E^n$. Namely, we aim to study  the asymptotic behavior of the  conditional R\'enyi entropy $H_{1+s}(A^n| f(A^n), E^n | P_{AE}^n)$ and  its Gallager counterpart $H_{1+s}^\uparrow(A^n| f(A^n), E^n | P_{AE}^n)$. 
\item Another  set of related problems involve the  analyses of the asymptotic behavior of  $H_{1+s}(f(A^n)|   E^n | P_{AE}^n)$ and  $H_{1+s}^\uparrow(f(A^n)|   E^n | P_{AE}^n)$. These represent  the uncertainties of an eavesdropper with regard to the message index $f(A^n)\in\calM$. The eavesdropper, however, is equipped with correlated observations $E^n$.  We anticipate that some of the techniques developed in the current paper may be useful to perform various calculations. 
\item We focused primarily on analyzing $C_{1+s}$ and $C_{1+s}^\uparrow$ for    $s\in [-1,1]$. It may be of interest to study the various asymptotic behaviors of $C_{1+s}$ and $C_{1+s}^\uparrow$  for general $s\in\bbR$ since for example, $H_{\min}=\lim_{s\to\infty}H_{1+s}$ and $H_{\min}$ \cite{Impagliazzo, BBCM, hastad,dodis13} is a fundamental quantity in cryptography and information-theoretic security as mentioned in Section~\ref{sec:motivation}.  Indeed, $\rme^{-H_{\min}(A|E|P_{AE})} $ is the best (highest) probability of successfully guessing $A$ given $E$. As remarked after Theorems~\ref{thm:equiv} and~\ref{thm:second}, we already have the converse parts for all $s\ge 0$ for the results in~\eqref{eqn:C1s}, \eqref{eqn:C1sG}, \eqref{eqn:2order1} and~\eqref{eqn:2order1_g}. They follow immediately from various information processing inequalities.  It would be ideal, though challenging, to complete the story.   
\item Lastly, we aim to apply the results and techniques derived herein to information-theoretic security problems such as the wiretap channel~\cite{Wyn75} and secret key agreement~\cite{AC93} as was done  by various researchers in~\cite{Hayashi11,Hayashi13,chou12,chou15}.
\end{enumerate}

\appendices
%\begin{center}
%{\Large Appendices}
%\end{center}
\newcommand{\sM}{M}
\newcommand{\rE}{\bbE}
%In this appendix, we provide the proofs of all the one-shot bounds in Section~\ref{sec:one-shot}.
\section{Proof of Lemma~\ref{lem:os_d_1}}

\subsection{Proof of \eqref{eqn:os_direct_pluss}}

\begin{proof}
The derivation here is similar to that in \cite{Hayashi06, Hayashi11} for  universal$_2$ hash functions. Throughout, for any function $f:\calA \to \calM$, we let
\begin{equation}
f^{-1} (i):= \{a\in\calA: f(a) = i\},\qquad\forall\, i \in\calM. 
\end{equation}
   Now, for any $a$, due to the  $\epsilon$-almost universal$_2$  property of $f_X$,  we have
\begin{align}
\rE_X  \sum_{a' \in f_X^{-1}(f_X(a))} P_{A|E}(a'|e) 
&\le    
P_{A|E}(a|e)  + \frac{\epsilon}{\sM}\sum_{a' \neq a } P_{A|E}(a'|e) \\
&\le 
P_{A|E}(a|e)  + \frac{\epsilon}{\sM} .\label{10-14-7b}
\end{align}
Starting from the definition of the conditional R\'enyi divergence, we have
\begin{align}
&  \rme^{-s H_{1+s}(f_X(A)|EX|P_{A E}\times P_X)}\nn\\*
&= 
\rE_X \sum_e P_E(e)  
\sum_{i=1}^M \bigg(\sum_{a \in f_X^{-1}(i)} P_{A|E}(a|e)\bigg)^{1+s} \label{eqn:seq1}
\\
&= 
\rE_X \sum_e P_E(e)  
\sum_{a } P_{A|E}(a|e)
\bigg(\sum_{a' \in f_X^{-1}(f_X(a))} P_{A|E}(a'|e)\bigg)^{s}
\\
&\le 
\sum_e P_E(e)  
\sum_{a } P_{A|E}(a|e)
\bigg(\rE_X \sum_{a' \in f_X^{-1}(f_X(a))} P_{A|E}(a'|e)\bigg)^{s}  \label{eqn:convexity_s}\\
&\le 
\sum_e P_E(e)  
\sum_{a } P_{A|E}(a|e)
\bigg(P_{A|E}(a|e)  + \frac{\epsilon}{\sM} \bigg)^{s}
\label{eqn:use_hash} \\
&\le 
\sum_e P_E(e)  
\sum_{a } P_{A|E}(a|e)
\bigg( P_{A|E}(a|e)^s  + \Big(\frac{\epsilon}{\sM} \Big)^{s}\bigg) \label{eqn:s_inequ}
\\
&= 
\Big(\frac{\epsilon}{\sM} \Big)^{s}
+
\sum_e P_E(e)  
\sum_{a } P_{A|E}(a|e)^{1+s}\\
&= \frac{\epsilon^s}{\sM^s}+ \rme^{-s H_{1+s}(A|E|P_{AE})}, \label{eqn:direct_prf_non_asymp1}
\end{align}
where \eqref{eqn:convexity_s}, we used the concavity of $t\mapsto t^s$ for $s\in [0,1]$, in \eqref{eqn:use_hash} we used the fact that $f_X$ is a $\epsilon$-almost universal$_2$ hash function, and in \eqref{eqn:s_inequ} we used the inequality $(\sum_i a_i)^s\le \sum_i a_i^s$ for $s\in [0,1]$ \cite[Problem 4.15(f)]{gallagerIT}. 

We remark that the sequence of steps in \eqref{eqn:seq1} to \eqref{eqn:direct_prf_non_asymp1} is inspired by the work of Hayashi~\cite{Hayashi11} who derived a similar result but for Shannon-type quantities instead of R\'enyi-type quantities as we do here. 

By \eqref{eqn:relate_C_H}, we have 
\begin{equation}
C_{1+s}(f_X(A)|EX|P_{A E}\times P_X)
=\log M - H_{1+s}(f_X(A)|EX|P_{A E}\times P_X). \label{eqn:relate_C_H2}
\end{equation}
Uniting \eqref{eqn:direct_prf_non_asymp1} and \eqref{eqn:relate_C_H2} proves \eqref{eqn:os_direct_pluss} as desired.\end{proof}

\subsection{Proof of \eqref{eqn:os_direct_G_pluss}  }
\begin{proof}
Along exactly the same lines, we also have
\begin{align}
& \rme^{-\frac{s}{1+s} H_{1+s}^{\uparrow}(f_X(A)|EX|P_{AE}\times P_X)}\nn\\*
&= 
\rE_X \sum_e P_E(e)  
\bigg(\sum_{i=1}^M \Big(\sum_{a \in f_X^{-1}(i)} P_{A|E}(a|e)\Big)^{1+s}\bigg)^{\frac{1}{1+s}}
\\
&=
\rE_X \sum_e P_E(e)  
\bigg(\sum_{a } P_{A|E}(a|e)
\Big(\sum_{a' \in f_X^{-1}(f_X(a))} P_{A|E}(a'|e)\Big)^{s}\bigg)^{\frac{1}{1+s}}
\\
&\le 
\sum_e P_E(e)  
\bigg(\sum_{a } P_{A|E}(a|e)
\Big(\rE_X \sum_{a' \in f_X^{-1}(f_X(a))} P_{A|E}(a'|e)\Big)^{s} \bigg)^{\frac{1}{1+s}}\\
&\le 
\sum_e P_E(e)  
\bigg(\sum_{a } P_{A|E}(a|e)
\Big(P_{A|E}(a|e)  + \frac{\epsilon}{\sM} \Big)^{s}
\bigg)^{\frac{1}{1+s}}\\
&\le 
\sum_e P_E(e)  
\bigg(\sum_{a } P_{A|E}(a|e)
\Big( P_{A|E}(a|e)^s  + \Big(\frac{\epsilon}{\sM} \Big)^{s}\Big)
\bigg)^{\frac{1}{1+s}}\\
&= 
\sum_e P_E(e)  
\bigg(
\Big(\frac{\epsilon}{\sM} \Big)^{s} +
\sum_{a } P_{A|E}(a|e)^{1+s} 
\bigg)^{\frac{1}{1+s}}\\
&\le 
\sum_e P_E(e)  
\bigg(
\Big(\frac{\epsilon}{\sM} \Big)^{\frac{s}{1+s}} +
\Big(\sum_{a } P_{A|E}(a|e)^{1+s} \Big)^{\frac{1}{1+s}}
\bigg)\\
&=  
\Big(\frac{\epsilon}{\sM} \Big)^{\frac{s}{1+s}} +
\sum_e P_E(e)  
\Big(\sum_{a } P_{A|E}(a|e)^{1+s} \Big)^{\frac{1}{1+s}} \\
&= 
 \frac{\epsilon^{\frac{s}{1+s}}}{\sM^{\frac{s}{1+s}}}
+ \rme^{-\frac{s}{1+s} H_{1+s}^{\uparrow}(A|E|P_{AE})} .
\end{align}
Combining this with the relation between $H_{1+s}^\uparrow$ and $C_{1+s}^\uparrow$ in \eqref{eqn:relate_C_H_g}, we obtain \eqref{eqn:os_direct_G_pluss}.
\end{proof}

\subsection{Proof of \eqref{eqn:os_direct_minuss} }

\begin{proof}
For any $a$, we have
\begin{align}
&\rE_X \sum_{a' \in f_X^{-1}(f_X(a))} P_{A|E}(a'|e) \nn\\*
&\le 
P_{A|E}(a|e)  + \frac{\epsilon}{\sM}\sum_{a' \neq a } P_{A|E}(a'|e) \\
&\le 
P_{A|E}(a|e)  + \frac{\epsilon}{\sM} \\
&\le 
2 \max \bigg\{P_{A|E}(a|e) , \frac{\epsilon}{\sM} \bigg\}.\label{10-14-7}
\end{align}
First we observe that when $P_{A|E}(a|e) \le \frac{c}{\sM}$, we have
\begin{align}
\rE_X \sum_{a' \in f_X^{-1}(f_X(a))} P_{A|E}(a'|e) 
\le 
P_{A|E}(a|e)  + \frac{\epsilon}{\sM} 
\le \frac{c+\epsilon}{\sM}. \label{10-14-6}
\end{align}
%The relations in~\eqref{10-14-7} and~\eqref{10-14-6} will turn out to be useful in the proofs of the direct one-shot bounds.

Now  we have
\begin{align}
& \rme^{-s C_{1-s}(f_X(A)|EX|P_{AE} \times P_X)}\nn\\*
&= \frac{1}{\sM^s}
 \rme^{s H_{1-s}(f_X(A)|EX|P_{AE}\times P_X)} \label{eqn:use_convexity0} \\
&\ge 
\frac{1}{\sM^s} \sum_e P_E(e)  
\sum_{a } P_{A|E}(a|e)
\Big(\rE_X \sum_{a' \in f_X^{-1}(f_X(a))} P_{A|E}(a'|e)\Big)^{-s} \label{eqn:use_convexity}
\\
%&\ge 
%\frac{1}{\sM^s} \sum_e P_E(e)  
%\sum_{a } P_{A|E}(a|e)
%\Big(P_{A|E}(a|e)  + \frac{\epsilon}{\sM} \Big)^{-s}
%\\
% &=
%\sum_{a,e } P_{AE}(a,e)
%(M P_{A|E}(a|e)  + \epsilon )^{-s}
%\\
 &\ge
\sum_{a,e } P_{AE}(a,e)
\big(2 \max \{M P_{A|E}(a|e)  , \epsilon\} \big)^{-s} \label{eqn:use_pr}
\\
& \ge  2^{-s}
\sum_{a,e} P_{AE}(a,e) 
\min \{P_{A|E}(a|e)^{-s} \sM^{-s}, \epsilon^{-s} \}\\
&=
2^{-s}
\sum_{a,e: P_{A|E}(a|e) \ge \epsilon \sM^{-1}} P_{AE}(a,e) 
P_{A|E}(a|e)^{-s} \sM^{-s}\nn\\*
&\qquad+
2^{-s}
\sum_{a,e: P_{A|E}(a|e) < \epsilon \sM^{-1}} P_{AE}(a,e) 
\epsilon^{-s},  \label{eqn:use_pr1}
\end{align}
where in \eqref{eqn:use_convexity} we used the convexity of $x\mapsto x^{-s}$ where $s\in [0,1]$ and $x\ge 0$ and in \eqref{eqn:use_pr}, we used~\eqref{10-14-7}, Thus, we obtain \eqref{eqn:os_direct_minuss}. \end{proof}

\subsection{Proof of \eqref{eqn:os_direct_G_minuss} }
\begin{proof}
Using \eqref{10-14-7} and the convexity of $a\mapsto a^{\frac{1}{1-s}}$ we have
\begin{align}
&\rme^{\frac{s}{1-s} H_{1-s}^{\uparrow}(f_X(A)|EX|P_{AE} \times P_X)} \nn\\*
&= 
\rE_X \sum_e P_E(e)  
\bigg(\sum_{i=1}^M \Big(\sum_{a \in f_X^{-1}(i)} P_{A|E}(a|e)\Big)^{1-s} \bigg)^{\frac{1}{1-s}}
\\
&= 
\rE_X \sum_e P_E(e)  
\bigg(\sum_{a } P_{A|E}(a|e)
\Big(\sum_{a' \in f_X^{-1}(f_X(a))} P_{A|E}(a'|e)\Big)^{-s} \bigg)^{\frac{1}{1-s}}
\\
&\ge 
\sum_e P_E(e)  
\bigg(\sum_{a } P_{A|E}(a|e)
\Big(\rE_X \sum_{a' \in f_X^{-1}(f_X(a))} P_{A|E}(a'|e)\Big)^{-s} \bigg)^{\frac{1}{1-s}}\\
&\ge 
\sum_e P_E(e)  
\bigg(\sum_{a } P_{A|E}(a|e)
\Big(P_{A|E}(a|e)  + \frac{\epsilon}{\sM} \Big)^{-s}
\bigg)^{\frac{1}{1-s}}\\
&\ge 
\sum_e P_E(e)  
\bigg(\sum_{a } P_{A|E}(a|e)
\Big(2 \max \Big\{ P_{A|E}(a|e) , \frac{\epsilon}{\sM} \Big\} \Big)^{-s}
\bigg)^{\frac{1}{1-s}}\\
&= 
2^{-\frac{s}{1-s}}
\sum_e P_E(e)  
\bigg(\sum_{a } P_{A|E}(a|e)
\min \Big\{ P_{A|E}(a|e)^{-s} , \frac{\epsilon^{-s}}{\sM^{-s}} \Big\} 
\bigg)^{\frac{1}{1-s}}\\
 &=
2^{-\frac{s}{1-s}} \sum_e P_E(e)  
\bigg(
\sum_{a: P_{A|E}(a|e) \ge \frac{\epsilon}{\sM} } 
P_{A|E}(a|e)^{1-s}
+
\frac{\epsilon^{-s}}{\sM^{-s}} \sum_{a: P_{A|E}(a|e) < \frac{\epsilon}{\sM} } 
P_{A|E}(a|e)
\bigg)^{\frac{1}{1-s}}\\
&\ge 
2^{-\frac{s}{1-s}} \sum_e P_E(e)  
\bigg(
\Big(\sum_{a: P_{A|E}(a|e) \ge \frac{\epsilon}{\sM} } 
P_{A|E}(a|e)\Big)^{\frac{1}{1-s}}
+
\Big(\frac{\epsilon^{-s}}{\sM^{-s}} \sum_{a: P_{A|E}(a|e) < \frac{\epsilon}{\sM} } 
P_{A|E}(a|e)^{1-s}
\Big)^{\frac{1}{1-s}}\bigg)\\
&\ge 
2^{-\frac{s}{1-s}} \sum_e P_E(e)  
\bigg(
\Big(\sum_{a: P_{A|E}(a|e) \ge \frac{\epsilon}{\sM} } 
P_{A|E}(a|e)^{1-s}\Big)^{\frac{1}{1-s}}
+
\Big(\frac{\epsilon^{-s}}{\sM^{-s}} \sum_{a: P_{A|E}(a|e) < \frac{\epsilon}{\sM} } 
P_{A|E}(a|e)
\Big)^{\frac{1}{1-s}}\bigg)\\
&= 
2^{-\frac{s}{1-s}} \sum_e P_E(e)  
\bigg(\sum_{a: P_{A|E}(a|e) \ge \frac{\epsilon}{\sM} } 
P_{A|E}(a|e)^{1-s}\bigg)^{\frac{1}{1-s}}\nn\\*
&\qquad+
2^{-\frac{s}{1-s}}
\frac{\sM^{\frac{s}{1-s}}}{\epsilon^{\frac{s}{1-s}}}
\sum_e P_E(e)  
\bigg(\sum_{a: P_{A|E}(a|e) < \frac{\epsilon}{\sM} } 
P_{A|E}(a|e)
\bigg)^{\frac{1}{1-s}}.
\end{align}
Thus we obtain \eqref{eqn:os_direct_G_minuss}. \end{proof}
\section{Proof of Lemma~\ref{lem:os_d_2}}
\subsection{Proof of \eqref{eqn:os_direct_exp} }

\begin{proof}
Since 
$(1+x)^{\frac{1}{1+s}}\le 1+ \frac{1}{1+s}x$, and $x\mapsto x^{\frac{1}{1+s}}$ is concave for $s\in [0,1]$, 
we have
\begin{align}
& \sM^{\frac{s}{1+s}} \rme^{-\frac{s}{1+s} H_{1+s}^{\uparrow}(f_X(A)|EX|P_{AE}\times P_X)}\nn\\*
&= 
\sM^{\frac{s}{1+s}} 
\rE_X \sum_e P_E(e)  
\bigg(\sum_{i=1}^{\sM} \Big(\sum_{a \in f_X^{-1}(i)} P_{A|E}(a|e)\Big)^{1+s}\bigg)^{\frac{1}{1+s}}
\\
&= 
\sM^{\frac{s}{1+s}} 
\rE_X \sum_e P_E(e)  
\bigg(\sum_{a } P_{A|E}(a|e)
\Big(\sum_{a' \in f_X^{-1}(f_X(a))} P_{A|E}(a'|e)\Big)^{s} \bigg)^{\frac{1}{1+s}}
\\
&\le 
\sM^{\frac{s}{1+s}} 
\sum_e P_E(e)  
\bigg(\sum_{a } P_{A|E}(a|e)
\Big(\rE_X \sum_{a' \in f_X^{-1}(f_X(a))} P_{A|E}(a'|e)\Big)^{s} \bigg)^{\frac{1}{1+s}}\\
&\le 
\sM^{\frac{s}{1+s}} 
\sum_e P_E(e)  
\bigg(\sum_{a } P_{A|E}(a|e)
\Big(P_{A|E}(a|e)  + \frac{1}{\sM} \Big)^{s}
\bigg)^{\frac{1}{1+s}}\\
&\le 
\sM^{\frac{s}{1+s}} 
\sum_e P_E(e)  
\bigg(\sum_{a } P_{A|E}(a|e)
\Big( P_{A|E}(a|e)^s  + \frac{1}{\sM^s} \Big)
\bigg)^{\frac{1}{1+s}}\\
&= 
\sM^{\frac{s}{1+s}} 
\sum_e P_E(e)  
\bigg(
\frac{1}{\sM^s} +
\sum_{a } P_{A|E}(a|e)^{1+s} 
\bigg)^{\frac{1}{1+s}}\\
&= 
\sum_e P_E(e)  
\bigg(
1+
\sM^{s} 
\sum_{a } P_{A|E}(a|e)^{1+s} 
\bigg)^{\frac{1}{1+s}}\\
 &\le
\sum_e P_E(e)  
\bigg(
1+
\frac{1}{1+s} \sM^{s} 
\sum_{a } P_{A|E}(a|e)^{1+s} 
\bigg)\\
&= 
1+
\frac{1}{1+s} \sM^{s}
\sum_e P_E(e)  
\sum_{a } P_{A|E}(a|e)^{1+s} 
\\
&= 
1+\frac{1}{1+s} \sM^{s} \rme^{-s H_{1+s}(A|E|P_{AE})}.
\end{align}
Using \eqref{eqn:relate_C_H2}, we obtain \eqref{eqn:os_direct_exp}.
\end{proof}

\section{Proof of Lemma~\ref{lem:os_d_3}}
\subsection{Proof of \eqref{eqn:one-shot-dir-second-1}}
\begin{proof}
Using \eqref{10-14-6}, we have
\begin{align}
& \rme^{-s C_{1-s}(f_X(A)|EX|P_{AE} \times P_X)}\nn\\*
&= \frac{1}{\sM^s}
 \rme^{s H_{1-s}(f_X(A)|EX|P_{AE}\times P_X)} \\
 &=
\frac{1}{\sM^s} \rE_X \sum_e P_E(e)  
\sum_{i=1}^M \Big(\sum_{a \in f_X^{-1}(i)} P_{A|E}(a|e)\Big)^{1-s}
\\
&= 
\frac{1}{\sM^s} \rE_X \sum_e P_E(e)  
\sum_{a } P_{A|E}(a|e)
\Big(\sum_{a' \in f_X^{-1}(f_X(a))} P_{A|E}(a'|e)\Big)^{-s}
\\
&\ge 
\frac{1}{\sM^s} \sum_e P_E(e)  
\sum_{a } P_{A|E}(a|e)
\Big(\rE_X \sum_{a' \in f_X^{-1}(f_X(a))} P_{A|E}(a'|e)\Big)^{-s}
\\
&\ge
\frac{1}{\sM^s} \sum_e P_E(e)  
\sum_{a: P_{A|E}(a|e) \le \frac{c}{\sM}  } P_{A|E}(a|e)
\Big(\rE_X \sum_{a' \in f_X^{-1}(f_X(a))} P_{A|E}(a'|e)\Big)^{-s}
\\
&\ge
\frac{1}{\sM^s} \sum_e P_E(e)  
\sum_{a: P_{A|E}(a|e) \le \frac{c}{\sM}  } P_{A|E}(a|e)
\Big(\frac{c+\epsilon}{\sM}\Big)^{-s}
\\
&=
P_{AE} \Big\{(a,e): P_{A|E}(a|e) \le \frac{c}{\sM}  \Big\}
\Big(\frac{1}{c+\epsilon}\Big)^s .
\end{align}
We obtain \eqref{eqn:one-shot-dir-second-1} as desired.\end{proof}
\subsection{Proof of \eqref{eqn:one-shot-dir-second-2}}
\begin{proof}
Using \eqref{10-14-6}, we have
\begin{align}
& \rme^{\frac{s}{1-s} H_{1-s}^{\uparrow}(f_X(A)|EX|P_{AE}\times P_X)}\nn\\*
&= 
\rE_X \sum_e P_E(e)  
\bigg(\sum_{i=1}^M \Big(\sum_{a \in f_X^{-1}(i)} P_{A|E}(a|e) \Big)^{1-s}  \bigg)^{\frac{1}{1-s}}
\\
 &=
\rE_X \sum_e P_E(e)  
\bigg(\sum_{a } P_{A|E}(a|e)
\Big(\sum_{a' \in f_X^{-1}(f_X(a))} P_{A|E}(a'|e)\Big)^{-s}\bigg)^{\frac{1}{1-s}}
\\
 &\ge
\sum_e P_E(e)  
\bigg(\sum_{a } P_{A|E}(a|e)
\Big(\rE_X \sum_{a' \in f_X^{-1}(f_X(a))} P_{A|E}(a'|e)\Big)^{-s} \bigg)^{\frac{1}{1-s}}\\
 &\ge
\sum_e P_E(e)  
\bigg(\sum_{a: P_{A|E}(a|e)\le \frac{c}{\sM} } P_{A|E}(a|e)
\Big(\rE_X \sum_{a' \in f_X^{-1}(f_X(a))} P_{A|E}(a'|e)\Big)^{-s} \bigg)^{\frac{1}{1-s}}\\
 &\ge
\sum_e P_E(e)  
\bigg(\sum_{a: P_{A|E}(a|e)\le \frac{c}{\sM} } P_{A|E}(a|e)
\Big(\frac{c+\epsilon}{\sM}\Big)^{-s} \bigg)^{\frac{1}{1-s}}\\
&= 
\Big(\frac{c+\epsilon}{\sM}\Big)^{-\frac{s}{1-s}} 
\sum_e P_E(e)  
\bigg(\sum_{a: P_{A|E}(a|e)\le \frac{c}{\sM} } P_{A|E}(a|e)\bigg)^{\frac{1}{1-s}}.
\end{align}
By combining with \eqref{eqn:relate_C_H_g}, we  obtain \eqref{eqn:one-shot-dir-second-2}. \end{proof}

\iffalse
\begin{proof} 
Starting from the bound in~\eqref{10-14-7}, we have
\begin{align}
 \rme^{-s C_{1-s}^\uparrow(f_X(A)|EX|P_{AE} \times P_X)}
&= \frac{1}{\sM^s}
 \rme^{s H_{1-s}^\uparrow(f_X(A)|EX|P_{AE}\times P_X)} \\
&\ge 
\frac{1}{\sM^s} \sum_e P_E(e)  
\sum_{a } P_{A|E}(a|e)
\Big(\rE_X \sum_{a' \in f_X^{-1}(f_X(a))} P_{A|E}(a'|e)\Big)^{-s}
\\
&\ge 
\frac{1}{\sM^s} \sum_e P_E(e)  
\sum_{a } P_{A|E}(a|e)
\Big(P_{A|E}(a|e)  + \frac{\epsilon}{\sM} \Big)^{-s}
\\
&= 
\sum_{a,e } P_{AE}(a,e)
\big(M P_{A|E}(a|e)  + \epsilon \big)^{-s}
\\
&\ge 
\sum_{a,e } P_{AE}(a,e)
\big(2 \max \{M P_{A|E}(a|e)  , \epsilon\} \big)^{-s}
\\
& \ge  2^{-s}
\sum_{a,e} P_{AE}(a,e) 
\min \{P_{A|E}(a|e)^{-s} \sM^{-s}, \epsilon^{-s} \}\\
&=
2^{-s}
\sum_{a,e: P_{A|E}(a|e) \ge \epsilon \sM^{-1}} P_{AE}(a,e) 
P_{A|E}(a|e)^{-s} \sM^{-s}\nn\\*
&\qquad+
2^{-s}
\sum_{a,e: P_{A|E}(a|e) < \epsilon \sM^{-1}} P_{AE}(a,e) 
\epsilon^{-s}.
\end{align}
Thus we obtain \eqref{eqn:one-shot-dir-second-2}.
\end{proof}\fi

\section{Proof of Lemma~\ref{lem:os_c_1}}
\subsection{Proof of \eqref{eqn:os_conv_eq} } \label{sec:prf_eqn:os_conv_eq}
\begin{proof}
Define the functions
\begin{align}
g_1(x,y)&:=x+y- 2 x^{1-s} y^s \\ 
g_2(x,y)&:=x- 2 x^{1-s} y^s .
\end{align}
Then, we can show that
\begin{align}
\min_{y}g_1(x,y)= x
(1-2^{\frac{1}{1-s}} s^{\frac{1}{1-s}-1} (1-s) ),\label{28-2}
\end{align}
which is attained when $y= x(2 s)^{\frac{1}{1-s}}$. 
We also define
\begin{align}
h_{e,1}(m)
:= & \sum_{a\in f^{-1}(m): P_{A|E}(a|e) < \frac{c}{M }} P_{A|E}(a|e) \\
h_{e,2}(m)
:= & \sum_{a\in f^{-1}(m): P_{A|E}(a|e) \ge \frac{c}{M }} P_{A|E}(a|e) .
\end{align}
Hence, 
\begin{align}
P_{f(A)|E}(m|e)^{1-s} 
=
(h_{e,1}(m)+ h_{e,2}(m))^{1-s}
\le 
h_{e,1}(m)^{1-s}+ h_{e,2}(m)^{1-s},
\end{align}
which implies from the definitions of $g_1$ and $g_2$ that 
\begin{align}
g_1\Big(P_{f(A)|E}(m|e), \frac{1}{M } \Big)
\ge
g_2\Big( h_{e,2}(m), \frac{1}{M } \Big)+
g_1\Big( h_{e,1}(m), \frac{1}{M } \Big).\label{28-1}
\end{align}
Also, we have
\begin{align}
h_{e,2}(m)^{1-s} \le
\sum_{a \in f^{-1}(m)}
P_{A|E}(a|e)^{1-s}. \label{28-3}
\end{align}
Thus,
\begin{align}
&1- 2 \rme^{-s C_{1-s}(f(A)|E|P_{AE})} \nn\\*
&= \sum_e P_E(e) \sum_{m} g_1\Big(P_{f(A)|E}(m|e), \frac{1}{M } \Big) \\
&\ge
\sum_e P_E(e)  \bigg(
\sum_{m}
g_2 \Big( h_{e,2}(m), \frac{1}{M } \Big) 
+
\sum_{m} 
g_1\Big( h_{e,1}(m), \frac{1}{M } \Big)  \bigg)
  \label{eqn:g1g2}\\
&\ge
\sum_e P_E(e) 
\bigg(
\sum_{m}
(h_{e,2}(m) -2 h_{e,2}(m)^{1-s} M ^s ) + \sum_{m}
h_{e,1}(m)(1-2^{\frac{1}{1-s}} s^{\frac{s}{1-s} } (1-s) )\bigg)
\label{10-21-1} \\
 &=
1-
2c^{-s}  \sum_e P_E(e) 
\sum_{m } h_{e,2}(m)^{1-s} M^s 
- 2^{\frac{1}{1-s}} s^{\frac{s}{1-s} } (1-s) 
\sum_{m } h_{e,1}(m) \\
& \ge
1-
2c^{-s}  \sum_e P_E(e) 
\sum_{a: P_{A|E}(a|e) \ge \frac{c}{M} } P_{A|E}(a|e)^{1-s} M^s 
- 2^{\frac{1}{1-s}} s^{\frac{s}{1-s} } (1-s) 
\sum_{a: P_{A|E}(a|e) < \frac{c}{M} } P_{A|E}(a|e) \label{eqn:g1g22}\\
&= 
1-
2 c^{-s} \sum_e P_E(e) 
\sum_{a: P_{A|E}(a|e) \ge \frac{c}{M} } P_{A|E}(a|e)^{1-s} M^s \nn\\*
&\qquad - 2 \cdot 2^{\frac{s}{1-s}} s^{\frac{s}{1-s} } (1-s) 
P_{AE}  \Big\{(a,e) : P_{A|E}(a|e) < \frac{c}{M}  \Big\} ,
\label{10-21-2}
\end{align}
where \eqref{eqn:g1g2},  \eqref{10-21-1}, and  \eqref{eqn:g1g22} follow  from \eqref{28-1}, \eqref{28-2}, and
\eqref{28-3}  respectively.
Hence, we obtain \eqref{eqn:os_conv_eq}.
\end{proof}

\subsection{Proof of \eqref{eqn:os_conv_eqG}} \label{sec:prf_eqn:os_conv_eq_B}
We first state a useful  and easy lemma:
\begin{lemma} \label{lem:convex}
Let $x,y\ge 0 $ and $t\ge 1$. Then we have 
\begin{equation}
(x+y)^t\le 2^{t-1} (x^t +y^t).
\end{equation}
\end{lemma}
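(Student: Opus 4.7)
The plan is to prove this by invoking the convexity of the map $z \mapsto z^t$ on $[0,\infty)$ for $t \ge 1$. Since this function is convex, Jensen's inequality applied at the midpoint of $x$ and $y$ gives
\begin{equation}
\left( \frac{x+y}{2} \right)^t \le \frac{x^t + y^t}{2}.
\end{equation}
Multiplying both sides by $2^t$ immediately yields $(x+y)^t \le 2^{t-1}(x^t + y^t)$, which is the desired inequality.

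There is no real obstacle here; the entire content of the lemma is the convexity of the power function for exponent at least one, applied at the midpoint. An alternative route, if one prefers to avoid explicitly invoking Jensen, is to note that by the power mean inequality the $\ell^1$ and $\ell^t$ norms on $\bbR^2$ satisfy $\|(x,y)\|_1 \le 2^{1-1/t} \|(x,y)\|_t$, which is exactly $(x+y) \le 2^{1-1/t}(x^t+y^t)^{1/t}$, and raising to the $t$-th power gives the claim. Either argument is a one-liner.
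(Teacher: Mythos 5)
Your proof is correct and is essentially the paper's own argument: the paper also applies convexity of $a\mapsto a^t$ at the midpoint, writing $(x+y)^t = 2^t\big(\tfrac{x}{2}+\tfrac{y}{2}\big)^t \le 2^t\big(\tfrac{x^t}{2}+\tfrac{y^t}{2}\big) = 2^{t-1}(x^t+y^t)$. Nothing further is needed.
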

\begin{proof}
It is clear that $a\mapsto a^t$ is convex for $a\ge 0$. Thus,
\begin{align}
(x+y)^t = 2^t \Big(\frac{x}{2} + \frac{y}{2}\Big)^t \le 2^t\Big(\frac{x^t}{2} + \frac{y^t}{2}\Big)=2^{t-1} (x^t+y^t),
\end{align}
which proves the claim.
\end{proof}
\begin{proof}[Proof of \eqref{eqn:os_conv_eqG}]
Using the previously proved bound in \eqref{eqn:os_conv_eq} with $|\calE|=1$, we obtain
\begin{align}
\rme^{-sC_{1-s}^{\uparrow}(f(A)|P_{A|E=e})}
&\le  c^{-s} \sum_{a:P_{A|E}(a|e) \ge\frac{c}{M}} P_{A|E}(a|e)^{1-s}M^{-s} \nn\\*
&\qquad\qquad+ 2^{\frac{s}{1-s}}s^{\frac{s}{1-s}} (1-s) P_{AE} \Big\{ (a,e) : P_{A|E}(a|e)\le\frac{c}{M}\Big\} \label{eqn:Eequals1} .
\end{align}
Taking average over $P_E$ and using the bound in \eqref{eqn:Eequals1}, we have 
\begin{align}
& \rme^{-\frac{s}{1-s} C_{1-s}^{\uparrow}(f(A)|E|P_{AE})} \nn\\*
&=
\sum_{e}P_E(e)
\left(\rme^{-sC_{1-s}^\uparrow(f(A)|P_{A|E=e})} \right)^{\frac{1}{1-s}} \\
&\le \sum_e P_E(e) 
\bigg[ c^{-s} \sum_{a:P_{A|E}(a|e) \ge\frac{c}{M}} P_{A|E}(a|e)^{1-s}M^{-s} \nn\\*
&\qquad\qquad+ 2^{\frac{s}{1-s}}s^{\frac{s}{1-s}} (1-s) P_{AE} \Big\{ (a,e) : P_{A|E}(a|e)\le\frac{c}{M}\Big\} \bigg]^{\frac{1}{1-s}} \label{eqn:parent}\\
&\le 2^{\frac{s}{1-s}} \sum_e P_E(e) \bigg[ 
\Big( c^{-s} \sum_{a:P_{A|E}(a|e) \ge\frac{c}{M}} P_{A|E}(a|e)^{1-s}M^{-s}  \Big)^{\frac{1}{1-s}} \nn\\*
&\qquad\qquad+ \Big( 2^{\frac{s}{1-s}}s^{\frac{s}{1-s}} (1-s) P_{AE} \Big\{ (a,e) : P_{A|E}(a|e)\le\frac{c}{M}\Big\} \Big)^{\frac{1}{1-s}}\bigg]
\end{align}
where in the last step, we applied Lemma~\ref{lem:convex} with $t=\frac{1}{1-s}\ge 1$ to the term in parentheses in \eqref{eqn:parent}. Thus we obtain~\eqref{eqn:os_conv_eqG}.
\end{proof}

\section{Proof of Lemma~\ref{lem:os_c_2}} \label{app:prf_lem:os_c_2}

The  inequalities  in Lemma \ref{lem:os_c_2} can be shown by 
the information processing inequality  for R\'enyi divergence in~\eqref{eqn:dpi_rd}. 

\subsection{Proofs of   \eqref{10-20-1b} and \eqref{10-20-1}} \label{sec:prf_lemosc21}
\begin{proof}
For every $e \in {\cal E}$, 
define the  function $f_e:{\cal A} \to {\cal M}$ to be 
\begin{equation}
f_e:=\argmin_{f}   D_{1-s}(P_{f(A)|E=e}\| P_{\mix,\calM }) . \label{eqn:fe}
\end{equation}
 We start with a claim that will be proved at the end of this subsection. 
\begin{lemma} \label{lem:one}
For every $a$ such that  $P_{A|E=e}(a)\ge \frac{1}{\sM} $, 
we have
$| f_e^{-1}(f_e(a))|=1$. 
\end{lemma}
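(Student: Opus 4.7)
The plan is to argue by contradiction: assume there exists $a^*\in\calA$ with $P_{A|E=e}(a^*)\ge 1/M$ whose bin $B_{m^*}:=f_e^{-1}(f_e(a^*))$ also contains some other element $b$, and then construct a modified hash $\tilde f$ whose R\'enyi divergence is strictly smaller than that of $f_e$, contradicting the optimality encoded in~\eqref{eqn:fe}. For convenience I write $P(\cdot):=P_{A|E=e}(\cdot)$ and $p_m := \sum_{a\in f^{-1}(m)} P(a)$ for the bin masses. Since $D_{1-s}(P_{f(A)|E=e}\,\|\,P_{\mix,\calM}) = \log M - H_{1-s}(P_{f(A)|E=e})$, minimizing the divergence is equivalent to maximizing $\sum_{m=1}^M \phi(p_m)$, where $\phi(x):=x^{1-s}$ for $s\in(0,1)$, with the boundary case $s=0$ handled by $\phi(x)=-x\log x$ (strictly concave) and the case $s=1$ handled by a suitable tie-breaking in the argmin. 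On $(0,\infty)$ the function $\phi$ is strictly concave, so the marginal gain $\phi(x+h)-\phi(x)$ is strictly decreasing in $x$ for every fixed $h>0$; this decreasing-differences property is the workhorse of the argument.

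Next I identify the correct local modification. Because $B_{m^*}$ contains both $a^*$ and $b$, its total mass satisfies $p_{m^*}\ge P(a^*)+P(b)>1/M$, and since $\sum_m p_m=1$, averaging over the remaining $M-1$ bins forces at least one index $j\ne m^*$ with $p_j<1/M$. The main obstacle --- and where a naive argument breaks down --- is that moving $a^*$ itself from $B_{m^*}$ to $B_j$ is not guaranteed to help: the mass $P(a^*)$ may exceed $p_{m^*}-p_j$, so the sign of the change in the objective is ambiguous. The correct choice is to relocate the \emph{other} element $b$ from $B_{m^*}$ to $B_j$, producing $\tilde f$ with altered bin masses $p_{m^*}-P(b)$ and $p_j+P(b)$ and all other bins unchanged.

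To conclude, I verify the strict improvement using decreasing differences. The change in the objective equals
\[
\bigl[\phi(p_j+P(b))-\phi(p_j)\bigr]\;-\;\bigl[\phi(p_{m^*})-\phi(p_{m^*}-P(b))\bigr],
\]
which is strictly positive whenever $p_j<p_{m^*}-P(b)$. This inequality is secured by the chain
\[
p_j\;<\;\tfrac{1}{M}\;\le\;P(a^*)\;\le\;p_{m^*}-P(b),
\]
where the last step uses that $B_{m^*}$ contains both $a^*$ and $b$. Hence $\tilde f$ strictly decreases the R\'enyi divergence, contradicting the optimality of $f_e$. The entire argument therefore hinges on the counter-intuitive step of moving the \emph{lighter companion} $b$ rather than the heavy element $a^*$; once that choice is made, strict concavity of $\phi$ delivers the improvement essentially for free.
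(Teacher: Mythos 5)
Your proposal is correct and takes essentially the same route as the paper's own proof: argue by contradiction, relocate the \emph{companion} element $b$ (the paper's $a_2$) from the overfull bin to an underfull bin $j$ with $p_j<\tfrac{1}{M}$, and invoke strict concavity of $x\mapsto x^{1-s}$ to get a strict increase of $\sum_m p_m^{1-s}$, contradicting optimality of $f_e$. Your decreasing-differences inequality together with the chain $p_j<\tfrac{1}{M}\le P(a^*)\le p_{m^*}-P(b)$ is exactly the paper's ``closer to uniform'' two-point rearrangement inequality with its condition $q\le\delta_1$, so the two arguments coincide (including the shared implicit assumption that the companion carries positive probability mass).
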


\begin{figure}
\centering
\begin{overpic}[width=10cm,height=7cm]{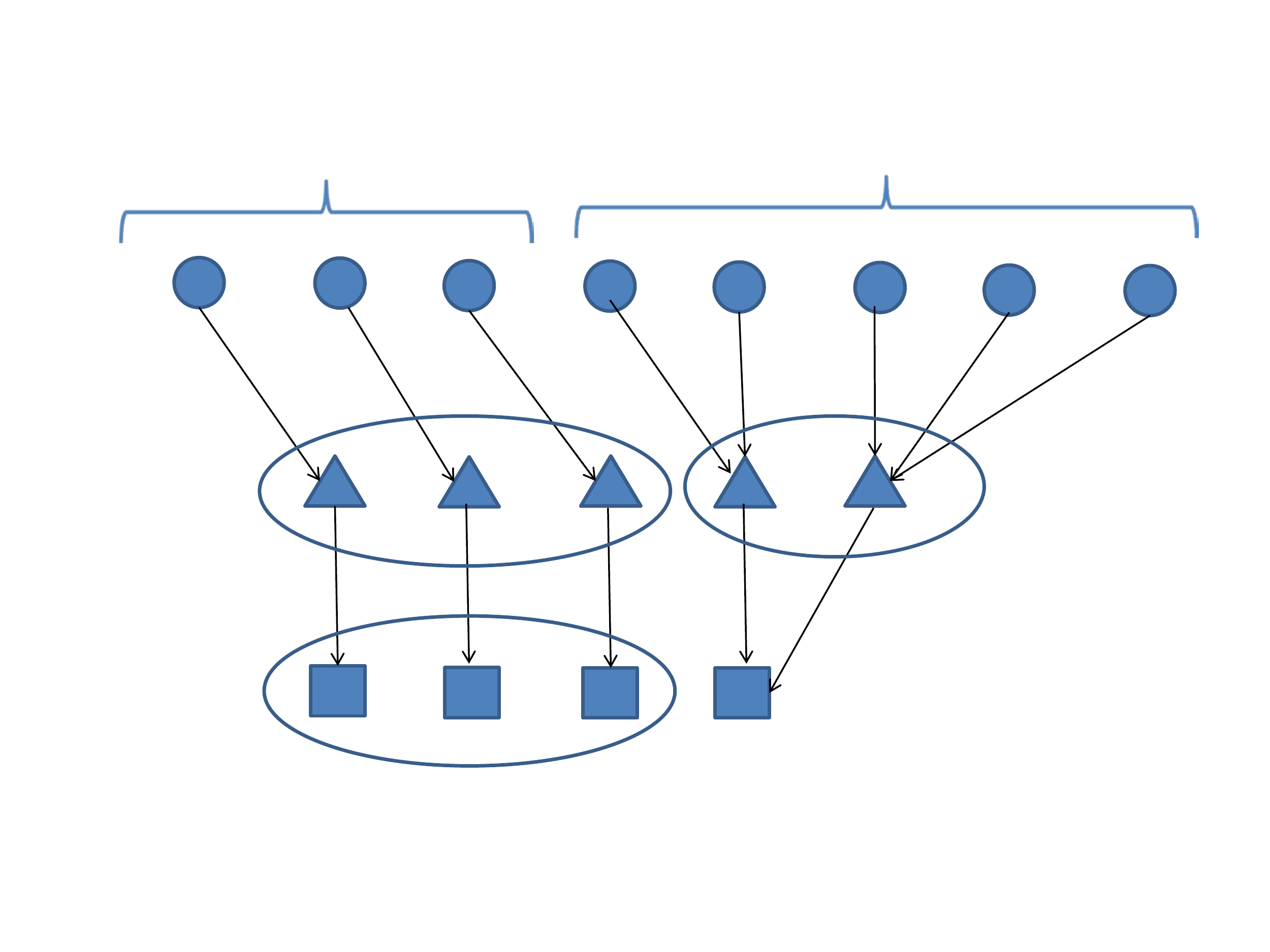}%
\put(15,60){$\{a:P_{A}(a)\ge\frac{c}{M}\}$} 
\put(60,60){$\{a:P_{A}(a)<\frac{c}{M}\}$} 
\put(2,47){$a\in\calA$}
\put(-5,33){$i\in\calM_1\cup\calM_2$}
\put(-3,18){$i\in\calM_1\cup\{0\}$}
\put(15,41){$f$}
\put(23,26){$g$}
\put(30,10){$\calM_1$}
\put(57,13){$0$}
\put(33,40.5){$\calM_1$}
\put(78,35){$\calM_2$}
\end{overpic}\vspace{-.4in}
\caption{Illustration of the steps in \eqref{eqn:dpi} to \eqref{eqn:gg2} where the dependences  on $e\in\calE$ are suppressed.}
\label{fig:one}
\end{figure}

Now, we partition the set ${\cal M}$ into two subsets as follows:
${\cal M}_1:= f_e ( \{a:  P_{A|E=e}(a)\ge \frac{c}{\sM} \})$
and ${\cal M}_2:= {\cal M}_1^c$.  See Fig.~\ref{fig:one}. 
Next, we define the map $g:{\cal M}\to {\cal M}_1\cup \{0\}$ as 
\begin{align}
g(i):= \left\{
\begin{array}{ll}
i & \hbox{if } i \in {\cal M}_1 \\
0 & \hbox{if } i \in {\cal M}_2 
\end{array}
\right. . \label{eqn:def_g}
\end{align}
Note that the map $g$ depends on $e\in\calE$ (through $\calM_1$ and $\calM_2 $) but we suppress this dependence for brevity. 
Let $P_{f_e(A)|E=e}\circ g^{-1}$ be the ``output distribution'' taking values on $\calM_1\cup\{0\}$ induced by the ``input distribution'' $P_{f_e(A)|E=e}$  and the  ``deterministic channel'' $g$, i.e., for every $i\in\calM_1\cup\{0\}$, $(P_{f_e(A)|E=e }\circ g^{-1}) (i) := P_{f_e(A)|E=e } (g^{-1}(i)) =\sum_{j \in\calM : g(i) = j} P_{f_e(A)|E=e }(j) $.  We also use the notation  $P_{\mix,\calM}\circ g^{-1}$ analogously. Due to the information processing inequality for the R\'enyi divergence in~\eqref{eqn:dpi_rd}, we obtain
\begin{align}
& \rme^{-s D_{1-s}(P_{f_e(A)|E=e}\| P_{\mix, \calM })} \nn\\*
&\le 
\rme^{- sD_{1-s}(P_{f_e(A)|E=e}\circ g^{-1}  \| P_{\mix,\calM } \circ g^{-1})} \label{eqn:dpi} \\ 
&=\sum_{i \in \{0\}\cup\calM_1} \big( (P_{f_e(A)|E=e} \circ g^{-1} ) (i)\big)^{1-s} \big( (P_{\mix,\calM}\circ g^{-1})(i) \big)^{s}\label{eqn:gg3}\\
&=\big((  P_{f_e(A)|E=e}\circ g^{-1} )(0)\big)^{1-s}  \big( (P_{\mix,\calM}\circ g^{-1})(0) \big)^{s}+\sum_{i\in\calM_1}\big(( P_{f_e(A)|E=e}\circ g^{-1}  )(i)  \big)^{1-s}\Big(\frac{1}{M}\Big)^{ s} \label{eqn:gg0}\\\
&\le    P_{A|E=e} \Big\{ a : P_{A|E} (a|e) < \frac{c}{\sM} \Big\}^{1-s}+
\sum_{i\in\calM_1}\big(( P_{f_e(A)|E=e}\circ g^{-1}  )(i)  \big)^{1-s}\Big(\frac{1}{M}\Big)^{ s}  \label{eqn:gg1}\\
&= 
 P_{A|E=e} \Big\{ a : P_{A|E} (a|e) < \frac{c}{\sM} \Big\}^{1-s}+\sum_{a: P_{A|E=e}(a)\ge \frac{c}{\sM}}
 P_{A|E}(a|e)^{1-s} \sM^{-s}
, \label{eqn:gg2}
\end{align}
where in \eqref{eqn:gg3}  we used the definition of $D_{1-s}$, in \eqref{eqn:gg0}  we split the resulting sum into  $\{0\}$ and $\calM_1$, in \eqref{eqn:gg1} we upper bounded $(P_{\mix,\calM}\circ g^{-1})(0) $ by $1$ and we noted that  all the symbols $i\in \calM_2 = f_e(\{a : P_{A|E=e}(a)<\frac{c}{M}\})$  are merged into the  symbol $0$ under $g$ and  finally in \eqref{eqn:gg2}, we used the fact that $g$ and $f_e$ are one-to-one restricted to $\calM_1$ and $\{a:P_{A|E=e}(a)\ge \frac{c}{M}\}$ respectively (Lemma \ref{lem:one} and the fact that $c\ge 1$).  See Fig.~\ref{fig:one} for an illustration of these steps.
Taking the average  of \eqref{eqn:gg2} over $P_E(e)$, we obtain \eqref{10-20-1b}. %\textcolor{red}{Need to justify  \eqref{eqn:gg1}--\eqref{eqn:gg2} carefully.}

Furthermore, we have
\begin{align}
& \sum_{a: P_{A|E=e}(a)\ge \frac{c}{\sM}}
 P_{A|E}(a|e)^{1-s} \sM^{-s} \nn\\*
 & \le
\sum_{a: P_{A|E=e}(a)\ge \frac{c}{\sM}}
 P_{A|E}(a|e) \Big(\frac{c}{\sM}\Big)^{-s} \sM^{-s}\\
& =
P_{A|E=e}   \Big\{a : P_{A|E=e}(a)\ge \frac{c}{\sM}  \Big\}  c^{-s} .   \label{eqn:first_term}
\end{align}
Substituting \eqref{eqn:first_term} into the second term in \eqref{eqn:gg2} and then taking the average over $P_E(e)$, we obtain \eqref{10-20-1}. 
\end{proof}

It remains to prove Lemma \ref{lem:one}.

\begin{proof}[Proof of Lemma \ref{lem:one}]
It suffices to consider the case $|\calE|=1$. Dropping the dependences on $e$, we denote $f_e$ as $f$ and $P_{A|E=e}$ as $P_A$ in the sequel.

We proceed by contradiction. The essential idea is that an optimal $f$ (given by \eqref{eqn:fe}) must induce a distribution $P_{f(A)}$ on $\calM$ that is ``as close to uniform as possible'' since we are minimizing $D_{1-s}(P_{f(A) } \| P_{\mix,\calM})$. 

Formally, assume, to the contrary, that $|f^{-1}(f(a_1))|\ge 2$ for some $a_1\in\calA$ with  $P_A(a_1)\ge \frac{1}{M}$. Because   $|f^{-1}(f(a_1))|\ge 2$, there exists $a_2\ne a_1$ such that $f(a_1)=f(a_2)=j$ for some $j\in\calM$. Because $P_{A}(a_2)>0$, we have $\sum_{a\in f^{-1}(j)}P_A(a)>\frac{1}{M}$. This in turn implies that there exists $i\in\calM$ such that $\sum_{a\in f^{-1}(i)}P_A(a)<\frac{1}{M}$. Recall that  $f$ was designed to minimize 
\begin{equation}
D_{1-s}(P_{f(A) } \| P_{\mix,\calM})= -\frac{1}{s}\log\sum_{i\in\calM} \bigg(\sum_{a \in f^{-1}(i)} P_A(a) \bigg)^{1-s} \Big(\frac{1}{M}\Big)^s,
\end{equation}
or equivalently, to maximize $\sum_i \big(\sum_{a \in f^{-1}(i)} P_A(a) \big)^{1-s}$.  Now we create a new hash function 
\begin{equation}
\tilf(a) := \left\{  \begin{array}{cc}
f(a) & a\ne a     _2\\
i & a = a_2
\end{array} \right. .  \label{eqn:tilf}
\end{equation}
Let $u,v,u',v'$  be any four non-negative numbers such that $u+v=u'+v'=t$ and $p=u/t$, $p'=u'/t$ and $|p-1/2|<|p'-1/2|$. This means that $(p,1-p)$ is closer to the uniform Bernoulli distribution compared to  $(p',1-p')$. Then it is easy to check that  
\begin{equation}
(u')^{1-s}+(v')^{1-s}<u^{1-s}+v^{1-s}.\label{eqn:alpha_ine}
\end{equation}
 Now denoting $\frac{1}{M}+\delta_1: =\sum_{a  \in f^{-1}(j)} P_A(a)$,   $\frac{1}{M}-\delta_2:=\sum_{a \in f^{-1}(i)} P_A(a)$ for positive numbers $\delta_1$ and $\delta_2$ (positive by the above construction of the sets $f^{-1}(i)$ and $f^{-1}(j)$), and letting $q:= P_{A}(a_2)\le\delta_1$, $t:= (\frac{1}{M}+\delta_1)+(\frac{1}{M}-\delta_2) $, we find that $\frac{1}{t}( \frac{1}{M}+\delta_1-q , \frac{1}{M}-\delta_2+q)$ is  closer to the uniform Bernoulli distribution compared to  $\frac{1}{t}( \frac{1}{M}+\delta_1 , \frac{1}{M}-\delta_2 )$. Using   inequality \eqref{eqn:alpha_ine}, we find that 
\begin{align} 
&\bigg(\sum_{a  \in f^{-1}(j)} P_A(a) \bigg)^{1-s}  + \bigg(\sum_{a \in f^{-1}(i)} P_A(a) \bigg)^{1-s}\nn\\*
& \qquad < \bigg(\sum_{a  \in f^{-1}(j) \setminus \{a_2\} } P_A(a) \bigg)^{1-s}  + \bigg(\Big(\sum_{a \in f^{-1}(i)} P_A(a)\Big) +P_A(a_2)  \bigg)^{1-s} . \label{eqn:transfer}
\end{align}
% This is because denoting $\frac{1}{M}+\delta_1: =\sum_{a  \in f^{-1}(j)} P_A(a)$,   $\frac{1}{M}-\delta_2:=\sum_{a \in f^{-1}(i)} P_A(a)$ for positive numbers $\delta_1$ and $\delta_2$ (positive by the above construction of the sets $f^{-1}(i)$ and $f^{-1}(j)$), and letting $q:= P_{A}(a_2)\le\delta_1$, $t:= (\frac{1}{M}+\delta_1)+(\frac{1}{M}-\delta_2) $, we find that $\frac{1}{t}( \frac{1}{M}+\delta_1-q , \frac{1}{M}-\delta_2+q)$ is more uniform than  $\frac{1}{t}( \frac{1}{M}+\delta_1 , \frac{1}{M}-\delta_2 )$. 
Since $f^{-1}(k) = \tilf^{-1}(k)$ for all $k \notin\{ i,j\}$,  \eqref{eqn:transfer} implies that
\begin{equation}
\sum_i  \bigg(\sum_{a  \in f^{-1}(i)} P_A(a) \bigg)^{1-s} < \sum_i  \bigg(\sum_{a  \in \tilf^{-1}(i)} P_A(a) \bigg)^{1-s} ,
\end{equation}
contradicting the optimality of $f$.
\end{proof}

\subsection{Proof of \eqref{eqn:os_conv_expG} }\label{app:prf_eqn:os_conv_expG}

\begin{proof}
Here, we employ the following expression for $C_{1-s}^{\uparrow}(A|E|P_{AE})$:
\begin{align}
\rme^{-\frac{s}{1-s}C_{1-s}^{\uparrow}(A|E|P_{AE})}
= \frac{1}{|{\cal A}|^{\frac{s}{1-s}}}
\sum_{e} P_{E}(e) \left( \sum_{a} P_{A|E}(a|e)^{1-s} \right)^{\frac{1}{1-s}}.
\end{align}
To minimize 
$\rme^{-\frac{s}{1-s}C_{1-s}^{\uparrow}(f(A)|E|P_{AE})}$,
it is enough to minimize 
$\sum_{i \in \calM} (\sum_{a \in f^{-1}(i)}P_{A|E=e}( a) )^{1-s}$
for each $e$.

Fortunately,  the discussion in the proof in Appendix~\ref{sec:prf_lemosc21} (and, in particular, the bound~\eqref{10-20-1b}) shows that
this value is upper bounded by 
\begin{equation}
\sum_{a: P_{A|E=e}(a)\ge \frac{c}{\sM}}
 P_{A|E}(a|e)^{1-s} 
+ P_{A|E=e}\Big\{ a : P_{A|E} (a|e) < \frac{c}{\sM}\Big\}^{1-s} \sM^s.
\end{equation}
%and
%\begin{equation}
%P_{A|E=e}  \Big\{a: P_{A|E=e}(a)\ge \frac{c}{\sM} \Big\} c^{-s} \sM^s
%+ P_{A|E=e}\Big\{ a  : P_{A|E} (a|e) < \frac{c}{\sM}\Big\}^{1-s} \sM^s.
%\end{equation}
Thus,
\begin{align}
& \rme^{-\frac{s}{1-s} C_{1-s}^{\uparrow}(f(A)|E|P_{AE})} \\
&\le 
\frac{1}{\sM^{\frac{s}{1-s}}}
\sum_{e} P_{E}(e) \bigg( 
P_{A|E=e}  \Big\{a: P_{A|E=e}(a)\ge \frac{c}{\sM} \Big\}   c^{-s} \sM^s\nn\\*
&\qquad+ P_{A|E=e}\Big\{ a  : P_{A|E} (a|e) < \frac{c}{\sM}\Big\}^{1-s} \sM^s
\bigg)^{\frac{1}{1-s}} \\
 &=
\sum_{e} P_{E}(e) \bigg( 
P_{A|E=e}  \Big\{a: P_{A|E=e}(a)\ge \frac{c}{\sM} \Big\}  c^{-s} \nn\\*
&\qquad+ P_{A|E=e}\Big\{ a : P_{A|E} (a|e) < \frac{c}{\sM}\Big\}^{1-s} 
\bigg)^{\frac{1}{1-s}} .
\end{align}
Hence, we obtain \eqref{eqn:os_conv_expG} as desired.  
%Similarly, we have
%\begin{align}
%& \rme^{-\frac{s}{1-s}  C_{1-s}^{\uparrow}(f(A)|E|P_{AE})  } \\
%\le &
%\frac{1}{M^{\frac{s}{1-s}}}
%\sum_{e} P_{E}(e) \bigg[ 
%\sum_{a: P_{A|E=e}(a)\ge \frac{c}{M}}
%P_E(e)P_{A|E}(a|e)^{1-s} 
%+ P_{A|E=e}\Big\{ a  : P_{A|E} (a|e) < \frac{c}{M}\Big\}^{1-s} M^s
%\bigg]^{\frac{1}{1-s}} \\
%= &
%\sum_{e} P_{E}(e) \bigg[
%\sum_{a: P_{A|E=e}(a)\ge \frac{c}{M}}
%P_E(e)P_{A|E}(a|e)^{1-s} M^{-s} 
%+ P_{A|E=e}\Big\{ a  : P_{A|E} (a|e) < \frac{c}{M}\Big\}^{1-s} 
%\bigg]^{\frac{1}{1-s}} .
%\end{align}
%Hence, we obtain \eqref{eqn:os_conv_expG_2} as desired.  
\end{proof}
\subsection{Proofs of \eqref{10-20-2b} and \eqref{10-20-2}}
\begin{proof}
The proofs of these bounds are  similar to those of   \eqref{10-20-1b} and \eqref{10-20-1} in Appendix~\ref{sec:prf_lemosc21} and thus are omitted. \end{proof}

\subsection{Proof of \eqref{10-20-2_205}   }
\begin{proof}
The proof  of this bound is   similar to that of \eqref{eqn:os_conv_expG}  in Appendix~\ref{app:prf_eqn:os_conv_expG} and is thus   omitted. \end{proof}
\section{Proof of Lemma~\ref{lem:os_c_3}}
\subsection{Proof of \eqref{eqn:os_conv_sec} }
\begin{proof}
The proof of this bound is   similar to the proof of \eqref{eqn:os_conv_eq}  which is presented in Appendix~\ref{sec:prf_eqn:os_conv_eq}. We provide the details here.  

When $P_{A|E}(a|e) < \frac{c}{M} $, 
we have
$ P_{A|E}(a|e)^{1-s} M^s
\le 
P_{A|E}(a|e) c^{-s}$.
Thus, starting from \eqref{10-21-2}, 
we have
\begin{align}
&1- 2 \rme^{-s C_{1-s}(f(A)|E|P_{AE})} \nn\\*
&\ge 
1-
2 c^{-s} \sum_e P_E(e) 
\sum_{a: P_{A|E}(a|e) \ge \frac{c}{M} } P_{A|E}(a|e)^{1-s} M^s\nn\\* 
&\qquad - 2 \cdot 2^{\frac{s}{1-s}} s^{\frac{s}{1-s} } (1-s) 
P_{AE}  \Big\{(a,e) : P_{A|E}(a|e) < \frac{c}{M}  \Big\} \\
& \ge
1 
-2 c^{-s}
P_{AE}\Big\{(a,e): P_{A|E}(a|e) \ge \frac{c}{M} \Big\}\nn\\* 
&\qquad-2 c \cdot 2^{\frac{s}{1-s}} s^{\frac{s}{1-s} } (1-s) 
P_{AE} \Big\{(a,e) :  P_{A|E}(a|e) < \frac{c}{M} \Big\} .
\end{align}
This completes the proof of  \eqref{eqn:os_conv_sec}.
\end{proof}

%\subsection{Proof  of \eqref{eqn:os_gal_sec}}
%\begin{proof}
%The proof of \eqref{eqn:os_gal_sec} is very similar to the proof of  \eqref{eqn:os_conv_expG} in Appendix~\ref{app:prf_eqn:os_conv_expG} and is thus omitted. %Finally, the proof of \eqref{eqn:os_gal_sec2} is very similar to the proof of  \eqref{eqn:os_conv_eqG} in Appendix~\ref{sec:prf_eqn:os_conv_eq_B} and is thus also omitted. 
%\end{proof}

\subsection*{Acknowledgements}   
The authors would like to acknowledge the Associate Editor (Prof.\ Aaron B.\ Wagner) and the anonymous reviewers for their extensive and useful comments during the revision process.

MH is partially supported by a MEXT Grant-in-Aid for Scientific Research (A) No.\ 23246071. 
MH is also partially supported by the National Institute of Information and Communication Technology (NICT), Japan.
The Centre for Quantum Technologies is funded by the Singapore Ministry of Education and the National Research Foundation as part of the Research Centres of Excellence programme.

VYFT is partially supported an NUS Young Investigator Award (R-263-000-B37-133) and a Singapore Ministry of Education Tier 2 grant ``Network Communication with Synchronization Errors: Fundamental Limits and Codes'' (R-263-000-B61-112).

%MH is partially supported by a MEXT Grant-in-Aid for Scientific Research (A) No.\ 23246071. 
%MH is also partially supported by the National Institute of Information and Communication Technology (NICT), Japan.
%The Centre for Quantum Technologies is funded by the Singapore Ministry of Education and the National Research Foundation as part of the Research Centres of Excellence programme.
%
%
%VT's  research is supported by NUS   grant   R-263-000-A98-750/133 and an NUS Young Investigator Award R-263-000-B37-133.

\bibliographystyle{unsrt}
\bibliography{isitbib}

\begin{thebibliography}{10}

\bibitem{carter79}
J.~L. Carter and M.~N. Wegman.
\newblock Universal classes of hash functions.
\newblock {\em Journal of Computer and System Sciences}, 18:143--154, 1979.

\bibitem{liang_book}
Y.~Liang, H.~V. Poor, and S.~Shamai.
\newblock Information-theoretic security.
\newblock {\em {Foundations and Trends$\,$\textregistered $ $ in Communications
  and Information Theory}}, 5(4--5):355--580, 2008.

\bibitem{Bloch_book}
M.~Bloch and J.~Barros.
\newblock {\em Physical-Layer Security: From Information Theory to Security
  Engineering}.
\newblock Cambridge University Press, 2011.

\bibitem{iwamoto}
M.~Iwamoto and J.~Shikata.
\newblock Information theoretic security for encryption based on conditional
  {R\'enyi} entropies.
\newblock {\em Lecture Notes in Computer Science (Information Theoretic
  Security)}, 8317:103--121, 2014.

\bibitem{Beigi14}
S.~Beigi and A.~Gohari.
\newblock Quantum achievability proof via collision relative entropy.
\newblock {\em IEEE Trans. on Inform. Th.}, 60(12):7980--7986, Dec 2014.

\bibitem{Impagliazzo}
R.~Impagliazzo, L.~A. Levin, and M.~Luby.
\newblock Pseudo-random generation from one-way functions.
\newblock In {\em In Proceedings of the 21st Annual ACM Symposium on Theory of
  Computing (STOC '89)}, pages 12--24. ACM Press, 1989.

\bibitem{BBCM}
C.~H. Bennett, G.~Brassard, C.~Crepeau, and U.~M. Maurer.
\newblock Generalized privacy amplification.
\newblock {\em IEEE Trans. on Inform. Th.}, 41(6):1915--1923, 1999.

\bibitem{hastad}
J.~H\r{a}stad, R.~Impagliazzo, L.~A. Levin, and M.~Luby.
\newblock A pseudorandom generator from any one-way function.
\newblock {\em SIAM J. Comput.}, 28:12--24, 1999.

\bibitem{dodis13}
Y.~Dodis and Y.~Yu.
\newblock Overcoming weak expectations.
\newblock {\em Lecture Notes in Computer Science}, 7785:1--22, 2013.

\bibitem{yao14}
Y.~Yao and Z.~Li.
\newblock Overcoming weak expectations via the {Renyi} entropy and the expanded
  computational entropy.
\newblock {\em Lecture Notes in Computer Science}, 8317:162--178, 2014.

\bibitem{shikata15}
J.~Shikata.
\newblock Design and analysis of information-theoretically secure
  authentication codes with non-uniformly random keys.
\newblock {\em Cryptology ePrint Archive: Report 2015/250}, 2015.

\bibitem{Wyn75}
A.~D. Wyner.
\newblock The wire-tap channel.
\newblock {\em The Bell System Technical Journal}, 54:1355--1387, 1975.

\bibitem{gallagerIT}
R.~G. Gallager.
\newblock {\em {Information Theory and Reliable Communication}}.
\newblock Wiley, New York, 1968.

\bibitem{Csi97}
I.~Csisz\'{a}r and J.~{K\"{o}rner}.
\newblock {\em Information Theory: Coding Theorems for Discrete Memoryless
  Systems}.
\newblock Cambridge University Press, 2011.

\bibitem{Strassen}
V.~Strassen.
\newblock {Asymptotische Absch\"{a}tzungen in Shannons Informationstheorie}.
\newblock In {\em Trans. Third Prague Conf. Inf. Theory}, pages 689--723,
  Prague, 1962.
\newblock http://www.math.cornell.edu/$\sim$pmlut/strassen.pdf.

\bibitem{Hayashi08}
M.~Hayashi.
\newblock Second-order asymptotics in fixed-length source coding and intrinsic
  randomness.
\newblock {\em IEEE Trans. on Inform. Th.}, 54(10):4619--4637, 2008.

\bibitem{Hayashi11}
M.~Hayashi.
\newblock Exponential decreasing rate of leaked information in universal random
  privacy amplification.
\newblock {\em IEEE Trans. on Inform. Th.}, 57(6):3989--4001, 2011.

\bibitem{Hayashi13}
M.~Hayashi.
\newblock Tight exponential analysis of universally composable privacy
  amplification and its applications.
\newblock {\em IEEE Trans. on Inform. Th.}, 59(11):7728--7746, 2013.

\bibitem{AC93}
R.~Ahlswede and I.~Csisz\'ar.
\newblock Common randomness in information theory and cryptography--{I: Secret}
  sharing.
\newblock {\em IEEE Trans. on Inform. Th.}, 39(4):1221--1132, 1993.

\bibitem{chou12}
T.-H. Chou, S.~C. Draper, and A.~Sayeed.
\newblock Key generation using external source excitation: Capacity,
  reliability, and secrecy exponent.
\newblock {\em IEEE Trans. on Inform. Th.}, 58(4):2455--2474, 2012.

\bibitem{chou15}
T.-H. Chou, V.~Y.~F. Tan, and S.~C. Draper.
\newblock The sender-excited secret key agreement model: Capacity, reliability
  and secrecy exponents.
\newblock {\em IEEE Trans. on Inform. Th.}, 61(1):609--627, 2015.

\bibitem{Hayashi06}
M.~Hayashi.
\newblock General nonasymptotic and asymptotic formulas in channel
  resolvability and identification capacity and their application to the
  wiretap channel.
\newblock {\em IEEE Trans. on Inform. Th.}, 52(4):1562--1575, Apr 2006.

\bibitem{Rennerthesis}
R.~Renner.
\newblock {\em Security of Quantum Key Distribution}.
\newblock PhD thesis, Dipl. Phys. ETH, Switzerland, 2005.

\bibitem{Renner05}
R.~Renner and S.~Wolf.
\newblock Simple and tight bounds for information reconciliation and privacy
  amplication.
\newblock In {\em Advances in Cryptology--ASIACRYPT 2005}, 2005.

\bibitem{hay14}
M.~Hayashi.
\newblock Large deviation analysis for quantum security via smoothing of
  {R\'enyi} entropy of order 2.
\newblock {\em IEEE Trans. on Inform. Th.}, 60(10):6702--6732, 2014.

\bibitem{HayashiT2013}
M.~Hayashi and T.~Tsurumaru.
\newblock More efficient privacy amplification with less random seeds via dual
  universal hash function.
\newblock {\em IEEE Trans. on Inform. Th.}, 62(4):2213--2232, Apr 2016.

\bibitem{hou13}
J.~Hou and G.~Kramer.
\newblock Effective secrecy: Reliability, confusion and stealth.
\newblock {\em ArXiv 1311.1411}, 2013.

\bibitem{hou13a}
J.~Hou and G.~Kramer.
\newblock Informational divergence approximations to product distributions.
\newblock In {\em 13th Canadian Workshop on Inf. Theory (CWIT)}, pages 76--81,
  2013.

\bibitem{Pierrot13}
A.~J. Pierrot and M.~R. Bloch.
\newblock Joint channel intrinsic randomness and channel resolvability.
\newblock In {\em IEEE Information Theory Workshop (ITW)}, 2013.

\bibitem{Bloch}
M.~{Bloch} and J.~N. {Laneman}.
\newblock Strong secrecy from channel resolvability.
\newblock {\em IEEE Trans. on Inform. Th.}, 59(12):8077--8098, 2013.

\bibitem{Han2013}
T.~S. Han, H.~Endo, and M.~Sasaki.
\newblock Reliability and security functions of the wiretap channel under cost
  constraint.
\newblock {\em IEEE Trans. on Inform. Th.}, 60(11):6819--6843, Nov 2014.

\bibitem{parizi15}
M.~B. Parizi and I.~E. Telatar.
\newblock On the secrecy exponent of the wire-tap channel.
\newblock In {\em IEEE Information Theory Workshop (ITW)}, Jeju, South Korea,
  2015.

\bibitem{Parizi16}
M.~B. Parizi, E.~Telatar, and N.~Merhav.
\newblock Exact random coding secrecy exponents for the wiretap channel.
\newblock {\em Submitted to the IEEE Trans. on Inform. Th.}, 2016.
\newblock {\tt arXiv:1601.04276 [cs.IT]}.

\bibitem{Gal76}
R.~G. Gallager.
\newblock Source coding with side information and universal coding.
\newblock Technical report, MIT LIDS, 1976.

\bibitem{CN04}
I.~Csisz\'ar and P.~Narayan.
\newblock Secrecy capacities for multiple terminals.
\newblock {\em IEEE Trans. on Inform. Th.}, 50(12):3047--3061, 2004.

\bibitem{Hayashi2013}
M.~Hayashi.
\newblock Security analysis of $\varepsilon$-almost dual universal$_2$ hash
  functions: smoothing of min entropy vs.\ smoothing of {R\'{e}nyi} entropy of
  order 2.
\newblock {\em IEEE Trans. on Inform. Th.}, 62(6):3451--3476, Jun 2016.

\bibitem{vanErven14}
T.~{Van Erven} and P.~Harremo\"es.
\newblock R\'enyi divergence and {Kullback-Leibler} divergence.
\newblock {\em IEEE Trans. on Inform. Th.}, 60(7):3797--3820, 2014.

\bibitem{Sibson}
R.~Sibson.
\newblock Information radius.
\newblock {\em Zeitschrift fur Wahrscheinlichkeitstheorie und Verwandte
  Gebiete}, 14(2):149--160, 1969.

\bibitem{Verdu_RenyiMI}
S.~Verd\'u.
\newblock Alpha-mutual information.
\newblock In {\em IEEE Information Theory and Applications Workshop}, San
  Diego, Feb 2015.

\bibitem{Hayashi15}
M.~Hayashi.
\newblock Quantum wiretap channel with non-uniform random number and its
  exponent and equivocation rate of leaked information.
\newblock {\em IEEE Trans. on Inform. Th.}, 61(10):5595--5622, 2015.

\bibitem{TomHay15}
M.~Tomamichel and M.~Hayashi.
\newblock Operational interpretation of {Renyi} information measures via
  composite hypothesis testing against product and {Markov} distributions.
\newblock {\em Submitted to the IEEE Trans. on Inform. Th.}, 2015.
\newblock {\tt arXiv:1511.04874 [cs.IT]}.

\bibitem{Ama00}
S.-I. Amari and H.~Nagaoka.
\newblock {\em Methods of Information Geometry}.
\newblock American Mathematical Society, 2000.

\bibitem{SharmaWarsi13}
N.~Sharma and N.~A. Warsi.
\newblock Fundamental bound on the reliability of quantum information
  transmission.
\newblock {\em Physical Review Letters}, 110(8):080501, 2013.

\bibitem{arimoto75}
S.~Arimoto.
\newblock Information measures and capacity of order $\alpha$ for discrete
  memoryless channels.
\newblock In {\em Colloquia Mathematica Societatis J\'anos Bolya}, pages
  41--52, Kestheley, Hungary, 1975.

\bibitem{cover75}
T.~M. Cover.
\newblock A proof of the data compression theorem of {Slepian and Wolf} for
  ergodic sources.
\newblock {\em IEEE Trans. on Inform. Th.}, 21(3):226--228, 1975.

\bibitem{Maurer:2000}
U.~Maurer and S.~Wolf.
\newblock Information-theoretic key agreement: From weak to strong secrecy for
  free.
\newblock In {\em Proc. 19th Intl. Conf. Theory Appl. Cryptographic Tech.},
  EUROCRYPT'00, pages 351--368. Springer-Verlag, 2000.

\bibitem{Han10}
T.~S. Han.
\newblock {\em Information-Spectrum Methods in Information Theory}.
\newblock Springer Berlin Heidelberg, Feb 2003.

\bibitem{Dembo}
A.~Dembo and O.~Zeitouni.
\newblock {\em Large Deviations Techniques and Applications}.
\newblock Springer, 2nd edition, 1998.

\bibitem{verdu14}
S.~Verd\'u and I.~Kontoyiannis.
\newblock Optimal lossless data compression: Non-asymptotics and asymptotics.
\newblock {\em IEEE Trans. on Inform. Th.}, 60(2):777--795, 2014.

\bibitem{kost12}
V.~Kostina and S.~Verd\'{u}.
\newblock Fixed-length lossy compression in the finite blocklength regime.
\newblock {\em IEEE Trans. on Inform. Th.}, 58(6):3309--3338, 2012.

\bibitem{TK12c}
V.~Y.~F. Tan and O.~Kosut.
\newblock On the dispersions of three network information theory problems.
\newblock {\em IEEE Trans. on Inform. Th.}, 60(2):881--903, 2014.

\bibitem{Hayashi09}
M.~Hayashi.
\newblock Information spectrum approach to second-order coding rate in channel
  coding.
\newblock {\em IEEE Trans. on Inform. Th.}, 55(11):4947--4966, 2009.

\bibitem{TanBook}
V.~Y.~F. Tan.
\newblock Asymptotic estimates in information theory with non-vanishing error
  probabilities.
\newblock {\em {Foundations and Trends$\,$\textregistered $ $ in Communications
  and Information Theory}}, 11(1--2):1--184, 2014.

\bibitem{altug14b}
Y.~Altu\u{g} and A.~B. Wagner.
\newblock Moderate deviations in channel coding.
\newblock {\em IEEE Trans. on Inform. Th.}, 60(8):4417--4426, 2014.

\bibitem{pol10e}
Y.~Polyanskiy and S.~Verd\'u.
\newblock Channel dispersion and moderate deviations limits for memoryless
  channels.
\newblock In {\em Proc. of Allerton Conference}, 2010.

\bibitem{AWK13}
Y.~Altu\u{g}, A.~B. Wagner, and I.~Kontoyiannis.
\newblock Lossless compression with moderate error probability.
\newblock In {\em Proc. of Intl. Symp. on Inform. Th.}, Istanbul, Turkey, 2013.

\bibitem{Tan12}
V.~Y.~F. Tan.
\newblock Moderate-deviations of lossy source coding for discrete and
  {Gaussian} sources.
\newblock In {\em Proc. of Intl. Symp. on Inform. Th.}, Cambridge, MA, 2012.

\bibitem{TWH14}
V.~Y.~F. Tan, S.~Watanabe, and M.~Hayashi.
\newblock Moderate deviations for joint source-channel coding of systems with
  {Markovian} memory.
\newblock In {\em Proc. of Intl. Symp. on Inform. Th.}, Honolulu, HI, 2014.

\bibitem{TomTan13a}
M.~Tomamichel and V.~Y.~F. Tan.
\newblock Second-order coding rates for channels with state.
\newblock {\em IEEE Trans. on Inform. Th.}, 60(8):4427--4448, 2014.

\bibitem{sion56}
M.~Sion.
\newblock On general minimax theorems.
\newblock {\em Pac. J. Math.}, 8(1):171--176, 1958.

\bibitem{ingber11}
A.~Ingber and Y.~Kochman.
\newblock The dispersion of lossy source coding.
\newblock In {\em Proceedings of the Data Compression Conference (DCC)}, 2011.
\newblock {\tt arXiv:1102.2598 [cs.IT]}.

\bibitem{HayashiTan2015a}
M.~Hayashi and V.~Y.~F. Tan.
\newblock Asymmetric evaluations of erasure and undetected error probabilities.
\newblock {\em IEEE Trans. on Inform. Th.}, 61(12):6560--6577, Dec 2015.

\bibitem{vembu}
S.~Vembu and S.~Verd\'u.
\newblock Generating random bits from an arbitrary source: fundamental limits.
\newblock {\em IEEE Trans. on Inform. Th.}, 41(5):1322--1332, 1995.

\bibitem{feller}
W.~Feller.
\newblock {\em An Introduction to Probability Theory and Its Applications}.
\newblock John Wiley and Sons, 2nd edition, 1971.

\bibitem{tao12}
T.~Tao.
\newblock {\em Topics in Random Matrix Theory}.
\newblock {American Mathematical Society}, 2012.

\end{thebibliography}

\end{document}